\newtheorem{example}{Example}
\newtheorem{constraints}{Constraints}
\newtheorem{definition}{Definition}
\theoremstyle{remark}
\newtheoremstyle{mythmstyle}%
    {}%
    {}%
    {\it}%
    {}%
    {\bf}%
    {}%
    { }%
    {\thmname{#1}\thmnumber{ #2\addcontentsline{toc}{subsubsection}{{\it#1 #2}}}%
    \thmnote{ (#3)}. }
\theoremstyle{mythmstyle}
\newtheorem{lemma}{Lemma}
\newtheorem{theorem}{Theorem}
\newtheorem{proposition}{Proposition}
\journal{Journal of Computer and System Sciences}
\newcommand{\ifemptyelse}[3]{\ifthenelse{\equal{#1}{}}{#2}{#3}}
\newcommand{\Distr}{\ensuremath{\mathnormal{Distr}}}
\newcommand{\supp}{\ensuremath{\mathnormal{supp}}}
\newcommand{\QQ}{\ensuremath{\mathbb{Q}}}
\newcommand{\QQnn}{\ensuremath{\QQ_{\geq 0}}}
\newcommand{\QQpos}{\ensuremath{\QQ_{> 0}}}
\newcommand{\QQneg}{\ensuremath{\QQ_{< 0}}}
\newcommand{\RR}{\ensuremath{\mathbb{R}}}
\newcommand{\NN}{\ensuremath{\mathbb{N}}}
\newcommand{\quant}{\ensuremath{\mathcal{Q}}}
\newcommand{\true}{\ensuremath{\mathtt{true}}}
\newcommand{\false}{\ensuremath{\mathtt{false}}}
\newcommand{\vars}{\ensuremath{\boldsymbol{x}}}
\newcommand{\nrvars}{\ensuremath{k}}
\newcommand{\val}{\ensuremath{\mathsf{val}}}
\newcommand{\proj}[2]{\ensuremath{#1}}
\newcommand{\Val}[1][]{\ensuremath{\mathsf{Val}_{#1}}}
\newcommand{\wdval}[1]{\ensuremath{\mathsf{Val}^{\mathrm{wd}}_{#1}}}
\newcommand{\gpval}[1]{\ensuremath{\mathsf{Val}^{\mathrm{gp}}_{#1}}}
\newcommand{\pureval}[1]{\mathbb{B}}
\newcommand{\vanish}[2][]{\ensuremath{\mathsf{Vanish}_{#1}(#2)}}
\newcommand{\graph}[1]{\ensuremath{G(#1)}}
\newcommand{\edges}[1]{\ensuremath{E(#1)}}
\newcommand{\first}[1]{\ensuremath{\mathsf{first}(#1)}}
\newcommand{\last}[1]{\ensuremath{\mathsf{last}(#1)}}
\newcommand{\state}[2]{\ensuremath{#2}@{#1}}
\newcommand{\pathsfin}[1]{\ensuremath{\Pi^{#1}_\mathsf{fin}}}
\newcommand{\pathsinf}[1]{\ensuremath{\Pi^{#1}_{\infty}}}
\newcommand{\pathset}[1]{\ensuremath{\Pi^{#1}}}
\newcommand{\eventually}[2][]{\ensuremath {\lozenge^{\ifemptyelse{#1}{}{\leq #1}}{#2}}}
\newcommand{\mdp}{\ensuremath{M}}
\newcommand{\dtmc}{\ensuremath{D}}
\newcommand{\pmdp}{\ensuremath{\mathcal{M}}}
\newcommand{\pmc}{\ensuremath{\mathcal{D}}}
\newcommand{\Act}{\ensuremath{\mathnormal{Act}}}
\newcommand{\act}{\ensuremath{\alpha}}
\newcommand{\altact}{\ensuremath{\beta}}
\newcommand{\init}{\ensuremath{\iota}}
\newcommand{\pars}{\ensuremath{X}}
\newcommand{\pmdptuple}{\ensuremath{(S,\init,\Act,\pars,\pprob)}}
\newcommand{\regenc}[2]{\ensuremath{\Upphi^{#1}_{#2}}}
\newcommand{\prob}{\ensuremath{P}}
\newcommand{\pprob}{\ensuremath{\mathcal{P}}}
\newcommand{\EnAct}{\ensuremath{\mathbb{A}\mathrm{ct}}}
\newcommand{\sched}{\ensuremath{\sigma}}
\newcommand{\Sched}[1]{\ensuremath{\mathfrak{S}^#1}}
\newcommand{\MSched}[1]{\ensuremath{\mathfrak{S}_m^{#1}}}
\newcommand{\DMSched}[1][]{\ensuremath{\Sigma^#1}}
\newcommand{\induced}[2]{\ensuremath{#1[#2]}}
\newcommand{\instantiated}[2]{\ensuremath{#1[#2]}}
\newcommand{\generator}[2][]{\ensuremath{\langle #2 \ifemptyelse{#1}{}{\mid #1} \rangle}}
\newcommand{\region}{\ensuremath{R}}
\newcommand{\etral}[1]{\ensuremath{\Upphi^{\unrhd,\exists}_\mathrm{wd}(#1)}}
\newcommand{\etrau}[1]{\ensuremath{\Upphi^{\unlhd,\exists}_\mathrm{wd}(#1)}}
\newcommand{\etrdl}[1]{\ensuremath{\Upphi^{\unrhd,\forall}_\mathrm{wd}(#1)}}
\newcommand{\etrdu}[1]{\ensuremath{\Upphi^{\unlhd,\forall}_\mathrm{wd}(#1)}}
\newcommand{\etra}[1]{\ensuremath{\Upphi^{\bowtie,\exists}_\mathrm{wd}(#1)}}
\newcommand{\etrd}[1]{\ensuremath{\Upphi^{\bowtie,\forall}_\mathrm{wd}(#1)}}
\newcommand{\target}{\ensuremath{T}}
\newcommand{\prsym}{\ensuremath{\mathnormal{Pr}}}
\newcommand{\prpath}[2]{\ensuremath{\prsym^{#1}(#2)}}
\newcommand{\pr}[4][]{\mathnormal{Pr}_{#2}^{#3}(\ifemptyelse{#1}{}{#1 \models }#4)}
\newcommand{\prsol}[3][]{\tensor*[]{\mathsf{sol}}{^{#3}_{\ifemptyelse{#1}{}{#1 \rightarrow }#2}}}
\newcommand{\prsolgp}[3][]{\tensor*[_{\mathrm{gp}}^{\mathbb{P}}]{\mathsf{sol}}{^{#3}_{\ifemptyelse{#1}{}{#1 \rightarrow }#2}}}
\newcommand{\pzerostates}[2][]{\ensuremath{S^{#2\ifemptyelse{#1}{}{,#1}}_{=0}}}
\newcommand{\ponestates}[2][]{\ensuremath{S^{#2\ifemptyelse{#1}{}{,#1}}_{=1}}}
\newcommand{\pzeroexstates}[2][]{\ensuremath{S^{#2\ifemptyelse{#1}{}{,#1}}_{\exists=0}}}
\newcommand{\poneexstates}[2][]{\ensuremath{S^{#2\ifemptyelse{#1}{}{,#1}}_{\exists=1}}}
\newcommand{\pzeroallstates}[2][]{\ensuremath{S^{#2\ifemptyelse{#1}{}{,#1}}_{\forall=0}}}
\newcommand{\poneallstates}[2][]{\ensuremath{S^{#2\ifemptyelse{#1}{}{,#1}}_{\forall=1}}}
\newcommand{\cclass}[1]{\text{\textbf{#1}}}
\newcommand{\Ptime}{\cclass{P}}
\newcommand{\NP}{\cclass{NP}}
\newcommand{\PSPACE}{\cclass{PSPACE}}
\newcommand{\ETR}{\cclass{ETR}}
\newcommand{\problem}[1]{\text{\textsc{#1}}}
\newcommand{\reachdp}[1]{\ensuremath{\problem{reach}^{#1}}}
\newcommand{\ceq}{=}%\ensuremath{\doteq}}
\newcommand{\cleq}{\leq}%\ensuremath{\leqslant}}
\newcommand{\cgeq}{\geq}%\ensuremath{\geqslant}}
\newcommand{\cgt}{>}%\ensuremath{\gtrdot}}
\newcommand{\clt}{>}%\ensuremath{\lessdot}}
\newcommand{\enc}{\ensuremath{\mathsf{enc}}}
\newcommand{\control}{\ensuremath{c}}
\tikzset{
  dot hidden/.style={},
  line hidden/.style={},
  dice hidden/.style={},
  dot color/.style={dot hidden/.append style={color=#1}},
  dot color/.default=black,
  line color/.style={line hidden/.append style={color=#1}},
  line color/.default=black,
  dice color/.style={dice hidden/.append style={color=#1,fill}},
  dice color/.default=white
}\def\dotsize{0.1}
\newcommand{\drawdie}[2][]{%
\begin{tikzpicture}[x=0.9em,y=0.9em,#1]
  \draw 	[thick, rounded corners=0.5,line hidden,dice hidden] (0,0) rectangle (1,1);
  \ifodd#2
    \fill[dot hidden] (0.5,0.5) circle (\dotsize);
  \fi
  \ifnum#2>1
  \fill[dot hidden] (0.25,0.25) circle (\dotsize);
  \fill[dot hidden] (0.75,0.75) circle (\dotsize);
  \ifnum#2>3
    \fill[dot hidden] (0.25,0.75) circle (\dotsize);
    \fill[dot hidden] (0.75,0.25) circle (\dotsize);
    \ifnum#2>5
      \fill[dot hidden] (0.75,0.5) circle (\dotsize);
      \fill[dot hidden] (0.25,0.5) circle (\dotsize);
    \fi
  \fi
\fi
\end{tikzpicture}%
}
\newcommand{\refsize}[1]{{\scriptsize #1}}
\begin{document}

\begin{frontmatter}

\title{The Complexity of Reachability in Parametric~Markov~Decision~Processes\tnoteref{funding}}
\tnotetext[funding]{This work was supported by the DFG RTG 2236
``UnRAVeL'', and the ERC Advanced Grant 787914 ``FRAPPANT'', and by NSF grants 1545126 (VeHICaL) and 1646208, by the DARPA Assured Autonomy program, by Berkeley Deep Drive, and by Toyota under the iCyPhy center.  Most work was done while the first author was with RWTH Aachen University.}

\author[sebiaddress]{Sebastian Junges\corref{mycorrespondingauthor}}
\address[sebiaddress]{University of California at Berkeley, USA}
\cortext[mycorrespondingauthor]{Corresponding author}
\ead{sjunges@berkeley.edu}

\author[rwth]{Joost-Pieter Katoen}
\address[rwth]{RWTH Aachen University, Germany}
\ead{katoen@cs.rwth-aachen.de}

\author[uantwerp]{Guillermo A. P\'erez}
\address[uantwerp]{University of Antwerp, Belgium}
\ead{guillermoalberto.perez@uantwerpen.be}

\author[rwth]{Tobias~Winkler}
\ead{tobias.winkler@cs.rwth-aachen.de}

\begin{abstract}
  This article presents the complexity of reachability decision problems for
  parametric Markov decision processes (pMDPs), an extension to Markov
  decision processes (MDPs) where transitions probabilities are described by
  polynomials over a finite set of parameters.  In particular, we study the
  complexity of finding values for these parameters such that the induced MDP
  satisfies some maximal or minimal reachability probability constraints. We
  discuss different variants depending on the comparison operator in the
  constraints and the domain of the parameter values. We improve all known
  lower bounds for this problem, and notably provide ETR-completeness results
  for distinct variants of this problem.
\end{abstract}

\begin{keyword}
  Parametric Markov decision processes \sep Formal verification \sep
  Existential theory of the reals \sep Computational complexity \sep Parameter synthesis
\end{keyword}

\end{frontmatter}

%\linenumbers

\section{Introduction}
Markov decision processes (MDPs) are \emph{the} model to reason about sequential processes under (stochastic) uncertainty and non-determinism.
Markov chains (MCs) are MDPs without non-determinism.
Often, probability distributions in these models are difficult to assess precisely during design time of a system.
This shortcoming has led to interval
MCs~\cite{DBLP:conf/lics/JonssonL91,DBLP:journals/ipl/ChenHK13,DBLP:conf/rp/Sproston18,DBLP:conf/tacas/SenVA06}
and interval MDPs (also known as bounded-parameter MDPs)~\cite{givan2000bounded,DBLP:journals/ai/WuK08,DBLP:conf/cav/PuggelliLSS13}, which allow for interval-labelled transitions.
Analysis under interval Markov models is often too pessimistic:
The actual probabilities on the transitions are considered to be non-deterministically and \emph{locally} chosen.
Intuitively, consider the probability of a coin-flip yielding heads in some uncertain environment. In interval models, the probability may vary with the local memory state of an agent acting in this environment. Such behaviour is unrealistic.
\emph{Parametric} MCs and MDPs~\cite{DBLP:conf/ictac/Daws04,DBLP:journals/fac/LanotteMT07,DBLP:conf/nfm/HahnHZ11,DBLP:journals/ai/DelgadoSB11} (pMCs, pMDPs) overcome this limitation by adding dependencies (or couplings) between various transitions---they add global restrictions to the selection of the probability distributions.
Intuitively, the probability of flipping heads can be arbitrary, but should be independent of an agent's local memory. 
Such couplings are similar to restrictions on schedulers in decentralised/partially observable MDPs, considered in e.g.,~\cite{DBLP:journals/mor/BernsteinGIZ02,DBLP:journals/tcs/GiroDF14,DBLP:journals/aamas/SeukenZ08}.

Technically, pMDPs label their transitions with polynomials over a finite set of parameters. Fixing all parameter values in a pMDP yields an MDP. 
The synthesis problem considered in this article asks to find parameter values
such that the induced MDPs satisfy reachability constraints.  Such reachability
constraints state that the probability --- under some/all possible ways to resolve
non-determinism in the MDP --- to reach a target state is (strictly) above or
below a threshold.
A sample synthesis problem is thus: ``Are there parameter values such
that for all possible ways to resolve the non-determinism, the probability to
reach a target state exceeds $\nicefrac{1}{2}$?''
Variants of the synthesis problem are obtained by varying the reachability constraints, and the domain of the parameter values. %Details are given in Sect.~\ref{sec:landscape}.
%Furthermore, an important subclass restricts parameter values to those which induce an MDP with the same topology (graph-preserving).
Parameter synthesis is supported by
the model checkers PRISM~\cite{DBLP:conf/cav/KwiatkowskaNP11} and Storm~\cite{DBLP:conf/cav/DehnertJK017}, and dedicated
tools PARAM~\cite{param_sttt} and
PROPhESY~\cite{DBLP:conf/cav/DehnertJJCVBKA15}. 
The complexity of the decision problems corresponding to parameter synthesis is mostly open.

This article significantly extends complexity results for parameter
synthesis in pMCs and pMDPs.
Tables~\ref{tab:qualcomplexity} and~\ref{tab:complexity} on~pages
\pageref{tab:qualcomplexity} and
\pageref{tab:complexity} give an
overview of new results:  
Most prominently, we establish completeness for the \emph{Existential Theory of the Reals} (ETR) of reachability
problems for pMCs with non-strict comparison operators, and
\NP-hardness for pMCs with strict comparison operators.  For pMDPs with
universal
non-determinism, it establishes ETR-completeness for any comparison operator.
For existential non-determinism, the synthesis problems are mostly equivalent to their pMC counterparts.  
When considering
pMDPs with a fixed number of variables, we establish \NP~upper bounds
for parameter synthesis under existential or universal non-determinism.
These results are partially based on properties of pMDPs scattered over earlier
works (see below), and use a strong connection between polynomial inequalities and parameter synthesis.

Finally, pMDPs are interesting generalisations of other models: Most importantly,
\cite{DBLP:conf/uai/Junges0WQWK018} shows that parameter synthesis in pMCs is
equivalent to the synthesis of finite-state controllers (with a-priori fixed
bounds) of partially observable MDPs (POMDPs)~\cite{DBLP:books/daglib/0023820}
under reachability constraints.  Thus, as a side product we improve complexity
bounds~\cite{DBLP:journals/toct/VlassisLB12,DBLP:conf/aaai/ChatterjeeCD16} for
(a-priori fixed) memory bounded strategies in POMDPs.

\subsection*{Related work}
Various results in this article extend work by Chonev~\cite{DBLP:conf/rp/Chonev19}, who studied augmented interval Markov chains, a model that coincides with pMCs. Our work also builds upon results by Hutschenreiter \emph{et
al.}~\cite{baiercomplexity}, in particular upon the result that pMCs with an
a-priori fixed number of parameters can be checked in \Ptime. Furthermore, they study
the complexity of PCTL model checking of pMCs.  The complexity of finite-state
controller synthesis in POMDPs has been studied in~\cite{DBLP:journals/toct/VlassisLB12,DBLP:conf/aaai/ChatterjeeCD16}.
Some of the proofs for ETR-completeness presented here reuse ideas from~\cite{DBLP:journals/mst/SchaeferS17}.

Methods (and implementations) to analyse pMCs by computing their characteristic
\emph{solution function} are considered in \
\cite{DBLP:conf/ictac/Daws04,param_sttt,DBLP:conf/cav/DehnertJJCVBKA15,baiercomplexity,DBLP:journals/tse/FilieriTG16,DBLP:conf/qest/JansenCVWAKB14,DBLP:journals/ai/DelgadoSB11,DBLP:conf/atva/GainerHS18}.
Sampling-based approaches to find feasible (i.e., satisfying) instantiations 
considered by~\cite{DBLP:conf/nfm/HahnHZ11,DBLP:conf/tase/ChenHHKQ013},
while~\cite{DBLP:conf/tacas/BartocciGKRS11,DBLP:conf/atva/CubuktepeJJKT18}
utilise optimisation methods. 
Finally,~\cite{DBLP:conf/atva/QuatmannD0JK16} presents a method to prove the absence of solutions in pMDPs by iteratively considering simple stochastic games~\cite{DBLP:books/daglib/0074447}.
Some other works on Markov models
with structurally equivalent yet parameterised dynamics
include~\cite{chatterjee12,solan03,DBLP:conf/concur/ChenFRS14,DBLP:journals/acta/CeskaDPKB17}.
Parameter synthesis with statistical guarantees has been explored in, e.g., \cite{DBLP:conf/tacas/BortolussiS18}. 
Novel methods for parametric models under Boolean parameters (i.e., parameter values are restricted to zero or one) have recently been presented in \cite{unpublished:cegis,DBLP:conf/tacas/CeskaJJK19}.
Further work on parameter synthesis in Markov models has been surveyed in~\cite{prophesy_journal}.

\subsection*{Contributions}
The main contribution of this paper is a concise and complete discussion of
the complexity landscape for parameter synthesis in pMCs and pMDPs, as
summarised in Tables~\ref{tab:qualcomplexity} and~\ref{tab:complexity}.  In
particular, we consider a set of decision problems that ask whether there
exists a parameter valuation of a particular type such that, if we substitute a
pMDP (or pMC) with this valuation, the resulting MDP (or MC) satisfies a
quantitative or qualitative reachability property.

The tables contain some known results (mentioned above) that are now part of a
larger picture, but they also contain various new results.  We consider the
following theorems central contributions.
\begin{itemize}
\item \emph{Parameter synthesis in pMCs is \ETR-complete for non-strict
    relations regarding quantitative reachability
  (Theorem~\ref{thm:etr:pmcsarehard}).} Conceptually, this means that
  parameter synthesis is as hard as answering whether a multivariate
  polynomial has a root. Interestingly, this result can be established using
  very simple pMCs. 	
\item \emph{Parameter synthesis in pMDPs is \ETR-complete for any relation
    regarding quantitative reachability (Theorem~\ref{thm:etr:mdpsarehard}).} This result
  is a straightforward adaption of deep results about the existential theory
  of the reals.
\item \emph{The results above are independent of whether or not the parameter
  valuations are graph-preserving}. Graph-preserving valuations simplify
  matters as they allow for stronger continuity assumptions, and are therefore
  standard in tool support for parameter synthesis. The results above show
  that they provide, from a complexity point of view, no benefit.  
\item \emph{Parameter synthesis for qualitative reachability is \NP-hard in
  general (Theorem~\ref{thm:gen-upper-bnd}) but various special cases can be
  decided in polynomial time (Theorem~\ref{thm:easy-cases}).} Results for pMCs
  and pMDPs coincide. To the best of our knowledge, the results cover \emph{all}
  classes considered in the literature on parameter synthesis in pMDPs. 
\item \emph{For any fixed number of parameters, pMDP parameter synthesis is in
  \NP (Theorem~\ref{thm:fp_ea_reach_in_np}).} We would like to stress that this result is non-trivial, as parameter
  values may be real-valued.
\end{itemize}
%These contributions are accompanied by concisely showing upper bounds with adequate encodings, and relating various subclasses of pMDPs.
%
%The table also contains some open questions. The most interesting one seems to be a lower bound for parameter synthesis in pMDPs with a single parameter and quantitative reachability. 
%Another question is precise complexity class of parameter synthesis in pMCs with arbitrarily many parameters and strict bounds on the reachability probability. 
%
%
The presented results extend some results in
\cite{DBLP:conf/concur/WinklerJPK19} by providing examples, full proofs, and
novel results on qualitative variants of the reachability problem.  The
presentation is partially based on \cite{Jun20}.

\section{Preliminaries}
\label{sec:prelims}
We assume familiarity with basic graph, automata, and complexity theory.
Below, we present our notation for the theory of the reals and Markov models.

\subsection{Existential theory of the reals}
The \emph{first-order theory of the reals} is the set of all valid sentences
in the first-order language $(\mathbb{R},+,\cdot,0,1,<)$. The existential
theory of the reals (written ETR, for short) restricts the language to
(purely) existentially quantified sentences. The complexity of deciding
membership, i.e.\ whether a sentence is (true) in the theory of the reals, is
in \PSPACE~\cite{DBLP:conf/stoc/Canny88} and \NP-hard. A careful analysis of
its complexity is given in~\cite{DBLP:journals/jsc/Renegar92}. In particular,
deciding membership for sentences with an a-priori fixed upper bound on the
number of variables is in polynomial time. We write \ETR~to denote the complexity
class~\cite{DBLP:journals/mst/SchaeferS17} of problems with a polynomial-time
many-one reduction to deciding membership in the existential theory of the
reals.

\subsection{Markov models}

Markov models are stochastic state models that exhibit the Markov property:
Given any current state, the probability distribution describing the next
state is independent of previous states. In this work by Markov models
we mean \emph{discrete-time Markov chains} (MCs)~\cite{Chu67,KS76,Hag02}
and \emph{Markov decision processes} (MDPs)~\cite{How71,Put94,Kal11}.
We mostly follow the notation from~\cite{BK08}.

\paragraph{Markov decision processes and chains} A \emph{Markov decision process} (MDP)
is a tuple $\mdp \coloneqq (S, \init, \Act, \prob)$ where $S$ is a finite set of
\emph{states}, $\init \in S$ is an \emph{initial state}, $\Act$ is a finite
set of \emph{actions}, and  $\prob \colon S\times \Act \times S \nrightarrow [0,1]$ is
a partial \emph{transition probability} function such that for all $s \in S$, $\act \in \Act$ we either have that $\sum_{s' \in S} \prob(s,\act, s') = 1$ or $\prob(s,\act, s') = \bot$ (undefined) for all $s'$.
%
%We denote $\prob(s, \act)(s')$ with $\prob(s,\act,s')$.
Let ${\EnAct(s) \coloneqq \{ \act \in \Act \mid \forall s' : \prob(s,\act,s') \neq \bot \}}$ denote the
available actions in state $s$.
Without loss of generality, we assume that $|\EnAct(s)| \geq 1$ for all $s \in S$.
Furthermore, we refer to a transition $P(s,\act,s) = 1$ as a self-loop.

A (\emph{discrete-time}) \emph{Markov chain} (MC) is an MDP such
that $|\EnAct(s)| = 1$ for all states $s \in S$.  We may denote an MC as tuple
$\dtmc \coloneqq (S, \init, \prob)$ with $S, \init$ as for MDPs and
\emph{transition probability} function $P\colon S \times S \rightarrow
[0,1]$.
%%
%For Markov chains, we denote $\prob(s)(s')$ with $\prob(s, s')$.

\paragraph{Paths and sets thereof}
We fix an MDP $\mdp \coloneqq (S, \init, \Act, \prob)$.  A \emph{path}
is an (in)finite sequence $\pi \coloneqq
s_0\xrightarrow{\act_0}s_1\xrightarrow{\act_1}\dots$, where $s_i\in S$,
$\act_i\in\EnAct(s_i)$, and $\prob(s_i,\act_i,s_{i+1})\neq 0$ for all
$i\in\mathbb{N}$.  The set $\pathset{\mdp}$ of paths in $\mdp$ is the union of
finite paths $\pathsfin{\mdp}$ and infinite paths $\pathsinf{\mdp}$. The
notions of paths carry over to MCs (actions are omitted).

For finite $\pi = s_0\xrightarrow{\act_0}s_1\xrightarrow{\act_1} \dots s_n$,
we furthermore define the \emph{length} $|\pi| \coloneqq n+1$ of $\pi$ and
$\last{\pi} \coloneqq s_n$. For
infinite paths, we set $|\pi| \coloneqq \infty$.  For any path $\pi =
s_0\xrightarrow{\act_0}s_1\dots$  we set $\state{i}{\pi} \coloneqq s_i$.  A
path $\pi$ \emph{visits} a state $s$, if there is some $i \in \NN$ such that
$\state{i}{\pi} = s$.

We mostly consider paths between or from fixed states. Let $S' \subseteq S$
be a subset of the states and $s \in S$ a state.  The set
\(
  \pathset{\mdp}(S') \coloneqq \{ \pi \in \pathset{\mdp} \mid \first{\pi} \in
  S'\}
\)
contains all paths starting in some $s \in S'$.  We simplify the notation of
the set $\pathset{\mdp}(\{ s \})$ to $\pathset{\mdp}(s)$.  Analogously,
$\pathsfin{\mdp}(s)$, $\pathsinf{\mdp}(s)$, $\pathsfin{\mdp}(S')$,
$\pathsinf{\mdp}(S')$ define the (in)finite  paths starting in some $s'\in
S'$, respectively. 
The set
\[
  \pathset{\mdp}(\eventually{S'}) \coloneqq \{ \pi \in \pathsfin{\mdp} \mid
  \last{\pi} \in S' \land \forall i < |\pi|:\state{i}{\pi} \not\in S' \}
\]
contains all finite paths that end in $S'$ with no proper prefix visiting a
state $s \in S'$.  Again, we simplify notation of
$\pathset{\mdp}(\eventually{\{s\}})$ to $\pathset{\mdp}(\eventually{s})$.
Similarly, for any \emph{horizon} $h \in \NN$, the set \[
\pathset{\mdp}(\eventually[h]{S'}) \coloneqq \{ \pi \in \pathsfin{\mdp} \mid
\last{\pi} \in S' \land |\pi| \leq h \land \forall i < |\pi|:\state{i}{\pi}
\not\in S' \} \] contains all paths of length at most $h$ that end in $S'$
with no proper prefix visiting a state $s \in S'$.  For subsets $S',T \subseteq
S$, the sets 
\[ 
  \pathset{\mdp}(S', \eventually{T}) \coloneqq \pathset{\mdp}(S') \cap
  \pathset{\mdp}(\eventually{T})
\]
denote the paths of states starting in $S'$ and reaching $T$.
We simplify notation for singleton sets as above.

\paragraph{Underlying graphs}
MDPs may be considered as annotated graphs and this perspective helps in describing many operations on MDPs.
For an MDP $\mdp = (S, \init, \Act, \prob)$, the \emph{underlying digraph} of $\mdp$
is $ \graph{\mdp} \coloneqq (S, \edges{\mdp})$ with 
\[
  \edges{\mdp} \coloneqq \{ (s,s') \mid \exists \act \in \EnAct(s): P(s,\act,s') \neq 0  \}.
\]
This definition allows to lift various definitions from graphs to MDPs. 

\paragraph{MDP strategies and induced chains}
To define a probability measure over paths, action choices have to be resolved. Actions are resolved using \emph{strategies}.
A \emph{strategy} for an MDP $\mdp = (S, \init, \Act, \prob)$ is a
(measurable) function \( \sched\colon\pathsfin{\mdp}\to\Distr(\Act)
\text{ such that }\supp(\sched(\pi)) \subseteq \EnAct(\last{\pi})\text{
for all }\pi\in \pathsfin{\mdp}.\)
\begin{itemize}
  \item A strategy is \emph{memoryless} if for all $\pi, \pi' \in
    \pathsfin{\mdp}$ we have \(\last{\pi} = \last{\pi'} \implies 
    \sched(\pi) = \sched(\pi')\).
  \item A strategy is \emph{deterministic} if for all $\pi \in
    \pathsfin{\mdp}$ we have \( |\supp(\sched(\pi))|	 =
    1\).
\end{itemize}
The set of all strategies of $\mdp$ is $\Sched{\mdp}$, the set of all
memoryless strategies is $\MSched{\mdp}$, and the set of all deterministic and
memoryless strategies is $\DMSched{\mdp}$.
Notice that $|\DMSched{\mdp}| \leq |\Act|^{S}$, so in particular there are only finitely many strategies that are deterministic and memoryless.
We may use the function signature $\sched\colon\pathsfin{\mdp}\to\Act$ for deterministic strategies,
and $\sched\colon S\to\Distr(\Act)$ for memoryless strategies.

Let $\sched \in \Sched{\mdp}$ be a strategy.
The \emph{induced MC} of $\mdp$ and $\sched$ is given by
$\induced{\mdp}{\sched} \coloneqq
(\pathsfin{\mdp},\init,\induced{\prob}{\sched})$ where 
\[
  \induced{\prob}{\sched}(\pi,\pi') \coloneqq
  \begin{cases}
      \prob(\last{\pi},\act, s') \cdot \sched(\pi)(\act) & \text{if }\pi' = \pi\xrightarrow{\act} s',  \\
      0 & \text{otherwise.}
    \end{cases}
\]
For memoryless strategies $\sched$, the MC $\induced{\mdp}{\sched}$ may be
identified with the finite MC $\induced{\mdp}{\sched}' \coloneqq (S, \init,
\induced{\prob}{\sched}')$ where \[\induced{\prob}{\sched}(s,s') \coloneqq
\sum_{\act \in \EnAct(s)} \prob(s,\act, s') \cdot \sched(s)(\act).\] Formally,
$\induced{\mdp}{\sched}$ is \emph{probabilistic bisimilar} (cf.~\cite[Sect.\
10.4.2]{BK08}) to $\induced{\mdp}{\sched}'$. In particular, all reachability
probabilities are preserved.  This equivalence justifies using
$\induced{\mdp}{\sched}'$ as redefinition of $\induced{\mdp}{\sched}$ for
memoryless strategies.

\paragraph{Reachability probabilities}
A probability measure $\prsym^\dtmc \colon \pathsfin{\dtmc}\to [0,1]$ for
finite paths $\pi = s_0s_1 \dots s_n$ is given by the product of transition
probabilities, referred to as the \emph{mass of the path}:
\(
  \prpath{\dtmc}{\pi} \coloneqq \prod_{i=0}^{n-1}\prob(s_i,s_{i+1}).
\)
The \emph{unique probability measure} for infinite paths $\prsym^\dtmc \colon
\pathsinf{\dtmc}\to [0,1]$ is defined by the usual cylinder set construction,
see~\cite{BK08} for details.

We define the \emph{reachability probability}
$\pr[s]{\dtmc}{}{\eventually\target}$ for reaching $\target$ from state $s$ as
follows:
\[
 \pr[s]{\dtmc}{}{\eventually{\target}} \coloneqq \sum_{\pi \in
 {\pathset{\dtmc}(s,\eventually T)}} \prpath{\dtmc}{\pi}.
\]
The reachability probability $\pr{\dtmc}{}{\eventually\target}$ for
reaching $\target$ in the MC $\dtmc$ is then defined as the reachability probability from
the initial state.

\paragraph{Optimal strategies for reachability}
A classical result we use in this work is the fact that
deterministic and memoryless strategies suffice in order to optimize
reachability probabilities in MDPs~\cite{Put94}.
\begin{proposition}\label{pro:detmemless-suff}
  For any given MDP $\mdp$, it holds that
	\[
    \sup_{\sched \in \Sched{\mdp}}
    \pr{\induced{\mdp}{\sched}}{}{\eventually{\target}} = \sup_{\sched \in
    \DMSched{\mdp}} \pr{\induced{\mdp}{\sched}}{}{\eventually{\target}}
  \]
  and thus since $\DMSched{\mdp}$ is a finite set, the suprema may be replaced
  by maxima. An analogous statement holds for infima and minima.
\end{proposition}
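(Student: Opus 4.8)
The plan is to establish the equality of suprema by two inequalities, where one direction is trivial and the other requires genuine work. Since $\DMSched{\mdp} \subseteq \Sched{\mdp}$, we immediately have
\[
  \sup_{\sched \in \DMSched{\mdp}} \pr{\induced{\mdp}{\sched}}{}{\eventually{\target}}
  \leq
  \sup_{\sched \in \Sched{\mdp}} \pr{\induced{\mdp}{\sched}}{}{\eventually{\target}},
\]
so the substantive task is the reverse inequality: for every (history-dependent, randomised) strategy $\sched$, we must exhibit a deterministic memoryless strategy that does at least as well. I would reduce this to the study of a single value vector, defining for each state $s$ the optimal reachability value $v(s) \coloneqq \sup_{\sched} \pr[s]{\induced{\mdp}{\sched}}{}{\eventually{\target}}$, and then argue that the supremum over \emph{all} strategies is already attained by a deterministic memoryless one.

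First I would set up the Bellman (optimality) characterisation. The key observation is that reachability values satisfy the fixed-point equation $v(s) = 1$ for $s \in \target$, and otherwise $v(s) = \max_{\act \in \EnAct(s)} \sum_{s' \in S} \prob(s,\act,s') \cdot v(s')$. The heart of the argument is showing that the supremum over arbitrary strategies coincides with this Bellman value. For this I would first argue the \emph{sufficiency of memorylessness}: intuitively, after reaching a state $s$, the best a strategy can do to eventually reach $\target$ depends only on $s$ and not on the prefix, because reachability is a prefix-independent tail property and the transition dynamics are Markovian. Formally, one shows by a shifting/coupling argument that any history-dependent strategy can be out-performed by one depending only on the current state. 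Second, for the \emph{sufficiency of determinism}: in the memoryless Bellman equation, the right-hand side is a convex combination over $\EnAct(s)$ when the action is randomised, and a convex combination cannot exceed the maximum of its constituents, so a deterministic choice picking an $\argmax$ action at each state attains the optimum. Combining, the deterministic memoryless strategy $\sched^*(s) \in \argmax_{\act} \sum_{s'} \prob(s,\act,s') \cdot v(s')$ achieves value $v$ at every state, matching the global supremum.

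The main obstacle I expect is justifying that the supremum over the (uncountable) class of history-dependent randomised strategies is captured by the Bellman fixed point --- in particular the inequality $\sup_{\sched} \pr[s]{\induced{\mdp}{\sched}}{}{\eventually{\target}} \leq v(s)$ where $v$ is the \emph{least} fixed point of the optimality operator. This requires care because reachability is an infinite-horizon quantity, so I would approach it via finite-horizon approximants $v_h(s) \coloneqq \sup_{\sched} \pr[s]{\induced{\mdp}{\sched}}{}{\eventually[h]{\target}}$, prove the bound at each finite horizon $h$ by induction (where memoryless-optimality is clean because only finitely many steps matter), and then pass to the limit using that $\pathset{\mdp}(\eventually{\target})$ is the increasing union of the $\pathset{\mdp}(\eventually[h]{\target})$, so $\pr[s]{\induced{\mdp}{\sched}}{}{\eventually{\target}} = \lim_{h \to \infty} \pr[s]{\induced{\mdp}{\sched}}{}{\eventually[h]{\target}}$ by continuity of the probability measure. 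Since this is a classical result, I would likely cite~\cite{Put94} for the full fixed-point argument rather than reproduce it, and focus the exposition on the deterministic-from-memoryless step, which is the genuinely elementary part. The statement for infima and minima follows by the entirely symmetric argument, replacing $\max$ with $\min$ in the optimality operator.
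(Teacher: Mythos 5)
The paper does not actually prove this proposition: it is stated as a classical fact and attributed to~\cite{Put94}, so there is no in-paper argument to compare against. Your overall plan --- the trivial inclusion $\DMSched{\mdp} \subseteq \Sched{\mdp}$ for one direction, the Bellman/least-fixed-point characterisation via finite-horizon approximants and continuity of the measure for the other, and deferring the fixed-point machinery to Puterman --- is the standard textbook route and is sound in outline.

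However, the one step you single out as ``genuinely elementary'' --- extracting a deterministic memoryless optimum by picking $\sched^*(s) \in \argmax_{\act} \sum_{s'} \prob(s,\act,s') \cdot v(s')$ --- is precisely where the argument breaks for \emph{maximal} reachability. Because the Bellman operator for undiscounted reachability is not a contraction, $v$ is only the \emph{least} fixed point, and an arbitrary argmax selector need not attain it. Concretely, take a non-target state $s$ with two actions: $\act_1$, a self-loop with probability $1$, and $\act_2$, leading to $\target$ with probability $1$. Then $v(s)=1$ and both actions realise the maximum in the Bellman equation, since $1 \cdot v(s) = 1 = 1 \cdot 1$, yet the greedy strategy that picks $\act_1$ reaches $\target$ with probability $0$. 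Your convexity argument only shows that no randomised memoryless strategy can \emph{exceed} $v$; it does not show that a greedy deterministic one \emph{attains} it. The standard repair is to choose, among the value-maximising actions at each state, one that additionally makes progress towards $\target$ (e.g.\ minimising the graph distance to $\target$ within the sub-MDP restricted to optimal actions), or to argue via uniqueness of the fixed point after the states of value $0$ have been identified and removed. For the infimum/minimum case your greedy selection is unproblematic, since indefinitely postponing $\target$ is exactly what a minimising strategy wants. Since you intend to cite~\cite{Put94} for the fixed-point argument anyway, the cleanest fix is to defer this selection step to the reference as well rather than present the argmax construction as self-contained.
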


\paragraph{Computing reachability values}
A final result which we will repeatedly make use of is the well-known fact
that minimal and maximal reachability probabilities are polynomial-time
computable. 
\begin{proposition}\label{pro:poly-time-MDPs}
  For any given MDP $\mdp$, the values $\max_{\sched \in \DMSched{\mdp}}
  \pr{\induced{\mdp}{\sched}}{}{\eventually{\target}}$ and $\min_{\sched \in
  \DMSched{\mdp}} \pr{\induced{\mdp}{\sched}}{}{\eventually{\target}}$ can
  be computed in polynomial time.
\end{proposition}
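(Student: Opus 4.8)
The plan is to express each optimal value as the optimum of a linear program of size polynomial in $\mdp$, and then to invoke the polynomial-time solvability of linear programming. By Proposition~\ref{pro:detmemless-suff} the maximum (respectively minimum) over $\DMSched{\mdp}$ coincides with the supremum (respectively infimum) over all strategies, so it suffices to compute the latter. Writing $x_s$ for the sought optimal reachability value from state $s$, classical MDP theory tells us that $(x_s)_{s \in S}$ satisfies the Bellman optimality equations $x_s = \opt_{\act \in \EnAct(s)} \sum_{s'} \prob(s,\act,s') \cdot x_{s'}$ for every $s \notin \target$, together with $x_s = 1$ for $s \in \target$, where $\opt$ stands for $\max$ in the maximising case and for $\min$ in the minimising case.

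First I would run a graph-theoretic preprocessing on $\graph{\mdp}$ to pin down the states whose value is $0$. For the maximising objective, $\zerostates[\max]$ is the set of states from which $\target$ is unreachable along any path, obtained by a single backward search from $\target$. For the minimising objective, the set $\zerostates[\min]$ consists of those states from which some strategy avoids $\target$ entirely; it is computed in polynomial time by a standard fixpoint computation over $\graph{\mdp}$. Fixing the value $0$ on these states is precisely what forces the subsequent program to single out the intended solution rather than a spurious one.

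Then I would solve the linear program that has one variable $x_s$ for each state, the box and boundary constraints $0 \leq x_s \leq 1$, $x_s = 1$ for $s \in \target$, and $x_s = 0$ for $s$ in the zero-set, and one inequality $x_s \geq \sum_{s'} \prob(s,\act,s') \cdot x_{s'}$ (respectively $x_s \leq \sum_{s'} \prob(s,\act,s') \cdot x_{s'}$) for every remaining state $s$ and every $\act \in \EnAct(s)$; the objective is to minimise $\sum_s x_s$ in the maximising case and to maximise it in the minimising case. This program has $|S|$ variables and at most $|S| \cdot |\Act|$ constraints, hence is of polynomial size, and by Khachiyan's polynomial-time algorithm for linear programming its optimum, and thus the desired value, is computable in polynomial time.

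The only genuinely delicate point is the correctness of this characterisation: one must argue that, after the preprocessing, the optimum of the program equals the true reachability vector and not merely some other solution of the Bellman inequalities. The underlying fact is that the optimal reachability values form the least (for $\max$) respectively greatest (for $\min$) fixed point of the monotone Bellman operator, and that pinning the zero-valued states together with the chosen optimisation direction isolates exactly this fixed point. I would not reproduce this classical argument but instead refer to the detailed treatments in~\cite{Put94,BK08}.
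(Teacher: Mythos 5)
Your proposal is correct and follows essentially the same route as the paper, which likewise reduces the computation to a polynomially sized linear program and defers the correctness of that encoding to the standard treatments in~\cite{Put94,BK08}. The only point worth a small caveat is your parenthetical fixed-point characterisation: the minimal reachability vector is the greatest solution of the inequality system only \emph{after} the zero-states have been pinned (otherwise the all-ones vector is feasible), but since you explicitly perform that preprocessing and then cite the classical references for the isolation argument, this does not affect correctness.
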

This follows from a straightforward encoding of the value into a linear
program. We refer the reader to~\cite{BK08,Put94} for a proof.

\section{Parametric Markov decision processes}
In this section we introduce parametric MDPs and parametric MCs and provide
examples of what can be modelled by them.
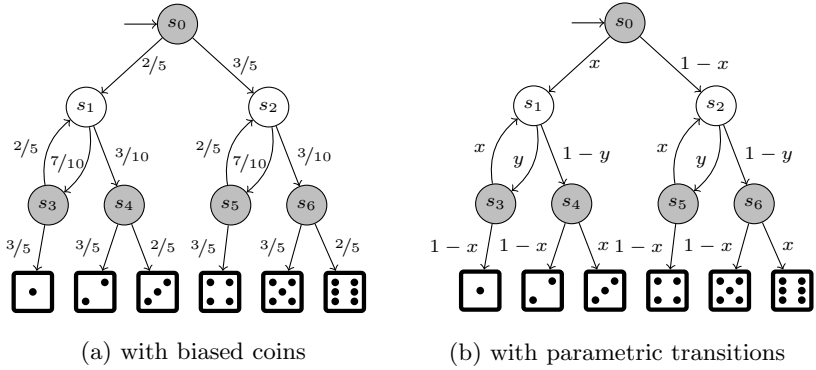
\begin{figure}
\centering
	\begin{subfigure}{0.45\textwidth}
		\begin{tikzpicture}[scale=1, die/.style={inner sep=0,outer sep=0}, every node/.style={font=\scriptsize}, st/.style={draw, circle, inner sep=2pt, minimum size=15pt},baseline=(s0)]
%    \draw[white, use as bounding box] (-1.7,-4.5) rectangle (1.7,0.3); 
    
    \node [st,fill=gray!50] (s0) at (0,0) {$s_0$};
    \node [] (leftdummy)  [on grid, left=1.2cm of s0] {};
    \node [] (rightdummy) [on grid, right=1.2cm of s0] {};
    \node [st] (s1) [on grid, below=1.1cm of leftdummy] {$s_1$};
    \node [st] (s2) [on grid, below=1.1cm of rightdummy] {$s_2$};
    \node [st,fill=gray!50] (s3) [on grid, below=1.3cm of s1, xshift=-0.5cm] {$s_3$};
        \node [st,fill=gray!50] (s4) [on grid, below=1.3cm of s1, xshift=0.5cm] {$s_4$};
    \node [st,fill=gray!50] (s5) [on grid, below=1.3cm of s2, xshift=-0.5cm] {$s_5$};
    
    \node [st,fill=gray!50] (s6) [on grid, below=1.3cm of s2, xshift=0.5cm] {$s_6$};

    \node[die, scale=2, below=0.6cm of s3, xshift=-0.1cm] (X1) {\drawdie{1}};
    \node[die, scale=2, right=0.26cm of X1, inner sep=0pt] (X2) {\drawdie{2}};
    \node[die, scale=2, right=0.26cm of X2] (X3) {\drawdie{3}};
    \node[die, scale=2, right=0.26cm of X3] (X4) {\drawdie{4}};
    \node[die, scale=2, right=0.26cm of X4] (X5) {\drawdie{5}};
    \node[die, scale=2, right=0.26cm of X5] (X6) {\drawdie{6}};
    
    \draw ($(s0)-(0.7,0)$) edge[->] (s0);
    \draw (s0) edge[->] node[right] {\scriptsize$\nicefrac{2}{5}$} (s1);
    \draw (s0) edge[->] node[right] {\scriptsize$\nicefrac{3}{5}$} (s2);
    \draw (s1) edge[bend left, ->] node[left,xshift=1mm] {\scriptsize$\nicefrac{7}{10}$} (s3);
    \draw (s1) edge[->] node[right] {\scriptsize$\nicefrac{3}{10}$} (s4);
    \draw (s3) edge[bend left, ->] node[left] {\scriptsize$\nicefrac{2}{5}$} (s1);
    \draw (s3) edge[->] node[left] {\scriptsize$\nicefrac{3}{5}$} (X1);
    \draw (s4) edge[->] node[left] {\scriptsize$\nicefrac{3}{5}$} (X2);
    \draw (s4) edge[->] node[right] {\scriptsize$\nicefrac{2}{5}$} (X3);
    \draw (s2) edge[bend left, ->] node[left,xshift=1mm] {\scriptsize$\nicefrac{7}{10}$} (s5);
    \draw (s2) edge[->] node[right] {\scriptsize$\nicefrac{3}{10}$} (s6);
    \draw (s5) edge[bend left, ->] node[left] {\scriptsize$\nicefrac{2}{5}$} (s2);
    \draw (s5) edge[->] node[left] {\scriptsize$\nicefrac{3}{5}$} (X4);
    \draw (s6) edge[->] node[left] {\scriptsize$\nicefrac{3}{5}$} (X5);
    \draw (s6) edge[->] node[right] {\scriptsize$\nicefrac{2}{5}$} (X6);
    
    \node[draw=white, rectangle, fit=(current bounding box)] {};
\end{tikzpicture}
		\subcaption{with biased coins}
		\label{fig:parametric:pkydiei}
	\end{subfigure}
	\begin{subfigure}{0.45\textwidth}
		\begin{tikzpicture}[scale=1, die/.style={inner sep=0,outer sep=0}, every node/.style={font=\scriptsize}, st/.style={draw, circle, inner sep=2pt, minimum size=15pt},baseline=(s0)]
%    \draw[white, use as bounding box] (-1.7,-4.5) rectangle (1.7,0.3); 
    
    \node [st,fill=gray!50] (s0) at (0,0) {$s_0$};
    \node [] (leftdummy)  [on grid, left=1.2cm of s0] {};
    \node [] (rightdummy) [on grid, right=1.2cm of s0] {};
    \node [st] (s1) [on grid, below=1.1cm of leftdummy] {$s_1$};
    \node [st] (s2) [on grid, below=1.1cm of rightdummy] {$s_2$};
    \node [st,fill=gray!50] (s3) [on grid, below=1.3cm of s1, xshift=-0.5cm] {$s_3$};
        \node [st,fill=gray!50] (s4) [on grid, below=1.3cm of s1, xshift=0.5cm] {$s_4$};
    \node [st,fill=gray!50] (s5) [on grid, below=1.3cm of s2, xshift=-0.5cm] {$s_5$};
    
    \node [st,fill=gray!50] (s6) [on grid, below=1.3cm of s2, xshift=0.5cm] {$s_6$};

    \node[die, scale=2, below=0.6cm of s3, xshift=-0.1cm] (X1) {\drawdie{1}};
    \node[die, scale=2, right=0.26cm of X1, inner sep=0pt] (X2) {\drawdie{2}};
    \node[die, scale=2, right=0.26cm of X2] (X3) {\drawdie{3}};
    \node[die, scale=2, right=0.26cm of X3] (X4) {\drawdie{4}};
    \node[die, scale=2, right=0.26cm of X4] (X5) {\drawdie{5}};
    \node[die, scale=2, right=0.26cm of X5] (X6) {\drawdie{6}};
    
    \draw ($(s0)-(0.7,0)$) edge[->] (s0);
    \draw (s0) edge[->] node[right] {\scriptsize$x$} (s1);
    \draw (s0) edge[->] node[right] {\scriptsize$1-x$} (s2);
    \draw (s1) edge[bend left, ->] node[left] {\scriptsize$y$} (s3);
    \draw (s1) edge[->] node[right] {\scriptsize$1-y$} (s4);
    \draw (s3) edge[bend left, ->] node[left] {\scriptsize$x$} (s1);
    \draw (s3) edge[->] node[left] {\scriptsize$1-x$} (X1);
    \draw (s4) edge[->] node[left] {\scriptsize$1-x$} (X2);
    \draw (s4) edge[->] node[right] {\scriptsize$x$} (X3);
    \draw (s2) edge[bend left, ->] node[left] {\scriptsize$y$} (s5);
    \draw (s2) edge[->] node[right] {\scriptsize$1-y$} (s6);
    \draw (s5) edge[bend left, ->] node[left] {\scriptsize$x$} (s2);
    \draw (s5) edge[->] node[left] {\scriptsize$1-x$} (X4);
    \draw (s6) edge[->] node[left] {\scriptsize$1-x$} (X5);
    \draw (s6) edge[->] node[right] {\scriptsize$x$} (X6);
    
    \node[draw=white, rectangle, fit=(current bounding box)] {};
\end{tikzpicture}
		\subcaption{with parametric transitions}
		\label{fig:parametric:pkydiepmc}
	\end{subfigure}
	\caption{Two variants of the Knuth-Yao die}
\end{figure}
\begin{example}
\label{ex:parametric:kydie}
The Knuth-Yao algorithm~\cite{KY76} uses repeated coin flips to model a six-sided die.
It uses a fair coin to obtain each possible outcome (`one', `two', ..., `six') with probability $\nicefrac{1}{6}$.
Figure~\ref{fig:parametric:pkydiei} depicts an MC of a variant in which two \emph{unfair} coins are flipped in an alternating fashion.
Flipping the coins yields \emph{heads} with probability $\nicefrac{2}{5}$ (gray states) or $\nicefrac{7}{10}$ (white states), respectively.
Accordingly, the probability of \emph{tails} is $\nicefrac{3}{5}$ and $\nicefrac{3}{10}$, respectively.
The event of throwing a `two' corresponds to reaching the state~\drawdie[scale=0.7]{2} in the MC.
Assume now a specification requiring the probability to obtain `two' to be larger than $\nicefrac{3}{20}$. 
Knuth-Yao's original algorithm satisfies this property as using a fair coin results in $\nicefrac{1}{6}$ as probability to reach \drawdie[scale=0.7]{2}.
The biased model however, does not satisfy the property; in fact, \drawdie[scale=0.7]{2} is reached with probability $\nicefrac{1}{10}$. 
We may now ask ourselves: how unfair may these coins be and still satisfy the
property? Vice versa, we can ask: Assuming that the probability to throw
\emph{heads} is between $\nicefrac{2}{5}$ and $\nicefrac{3}{5}$, does the
property hold for all admissible probabilities of throwing \emph{heads}?
\end{example}
When analysing parameterised MCs, we consider reachability
properties on some or all instantiations.
In the example above, we asked whether all `almost fair' coins satisfy
reaching \drawdie[scale=0.7]{2} with at least some given probability.
%%
%\begin{figure}
%\centering
%\begin{subfigure}[b]{0.62\textwidth}
%\centering
%\input{figures/parametric_rps}
%\subcaption{The parametric MDP for modified rock-paper-scissors.}
%\label{fig:parametric:rpspmdp}	
%\end{subfigure}
%\begin{subtable}[b]{0.37\textwidth}
%\scriptsize
%\centering
%\begin{tabular}{ll|lll}
%                  &          & \multicolumn{3}{c}{We play}    \\
%                  &          & R     & P    & S \\ \hline
%\multirow{3}{*}{\shortstack[l]{Eve\\plays}} & R     & d     & w   & l \\
%                  & P    & l & d     & w   \\
%                  & S & w   & l & d    
%\end{tabular}
%\subcaption{Winning a round of rock-paper-scissors}
%\label{tab:rps}
%\end{subtable}
%\caption{Playing (a slightly modified) rock-paper-scissors.}
%\end{figure}
%%
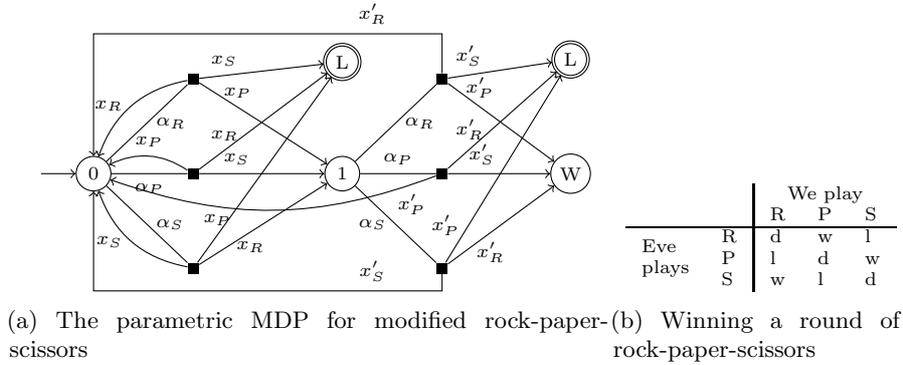
\begin{figure}
	\centering
	\subcaptionbox{The parametric MDP for modified rock-paper-scissors \label{fig:parametric:rpspmdp}}{
		\begin{tikzpicture}[scale=1, every node/.style={scale=1, font=\scriptsize}, st/.style={draw, circle, inner sep=2pt, minimum size=13pt}, initial text=]
\node[initial,st] (c0) {0};
\node[inner sep=2pt, fill=black, right=1cm of c0] (l12) {};
\node[inner sep=2pt, fill=black, above=1.1cm of l12] (l11) {};
\node[inner sep=2pt, fill=black, below=1.1cm of l12] (l13) {};

\node[st, right=2.8cm of c0] (c1) {1};
\node[inner sep=2pt, fill=black, right=1cm of c1] (l22) {};
\node[inner sep=2pt, fill=black, above=1.1cm of l22] (l21) {};
\node[inner sep=2pt, fill=black, below=1.1cm of l22] (l23) {};

\node[st, right=2.5cm of c1] (c2) {W};

\node[st, above=1cm of c1, accepting] (t1) {L};
\node[st, above=1cm of c2, accepting] (t2) {L};

\draw[-] (c0) edge node[right] {$\alpha_R$} (l11);
\draw[-] (c0) edge node[below] {$\alpha_P$} (l12);
\draw[-] (c0) edge node[right] {$\alpha_S$} (l13);

\draw[->] (l11) edge[bend right] node[left] {$x_R$} (c0);
\draw[->] (l11) edge node[above, pos=0.3] {$x_P$} (c1);
\draw[->] (l11) edge node[above, pos=0.2] {$x_S$} (t1);

\draw[->] (l12) edge[bend right] node[above] {$x_P$} (c0);
\draw[->] (l12) edge node[above, pos=0.3] {$x_S$} (c1);
\draw[->] (l12) edge node[above, pos=0.2] {$x_R$} (t1);

\draw[->] (l13) edge[bend left] node[left] {$x_S$} (c0);
\draw[->] (l13) edge node[below, pos=0.4] {$x_R$} (c1);
\draw[->] (l13) edge node[above, pos=0.15] {$x_P$} (t1);

\draw[-] (c1) edge node[right] {$\alpha_R$} (l21);
\draw[-] (c1) edge node[above] {$\alpha_P$} (l22);
\draw[-] (c1) edge node[left] {$\alpha_S$} (l23);

\draw[->] (l21) -- ++(0,.6) -| node[above,pos=0.1] {$x'_R$} (c0);
\draw[->] (l21) edge node[above, pos=0.3] {$x'_P$} (c2);
\draw[->] (l21) edge node[above, pos=0.2] {$x'_S$} (t2);

\draw[->] (l22) edge[bend left=20] node[below,pos=0.07] {$x'_P$} (c0);
\draw[->] (l22) edge node[above, pos=0.33] {$x'_S$} (c2);
\draw[->] (l22) edge node[above, pos=0.2] {$x'_R$} (t2);

\draw[->] (l23) -- ++(0,-.3) -| node[above,pos=0.1] {$x'_S$} (c0);
\draw[->] (l23) edge node[below, pos=0.4] {$x'_R$} (c2);
\draw[->] (l23) edge node[left, pos=0.2] {$x'_P$} (t2);%
\end{tikzpicture}
	}
	\subcaptionbox{Winning a round of rock-paper-scissors \label{tab:rps}}{
		\scriptsize
		\begin{tabular}{ll|lll}
			&          & \multicolumn{3}{c}{We play}    \\
			&          & R     & P    & S \\ \hline
			\multirow{3}{*}{\shortstack[l]{Eve\\plays}} & R     & d     & w   & l \\
			& P    & l & d     & w   \\
			& S & w   & l & d    
		\end{tabular}
	}
	\caption{Playing (a slightly modified) rock-paper-scissors.}
\end{figure}
\begin{example}
\label{ex:parametric:rps}
	We consider a slightly modified variant of rock-paper-scissors, in which we play against Eve. Eve selects behind her back either Rock (R), Paper (P) or Scissors (S). Now, without knowing what she selected, we have to make a (randomised) decision. 
	Then, we win the round as usual, in accordance with Table~\ref{tab:rps}.
	We are interested in minimising the probability that we ever loose a round. Luckily, we thus either (1) win by playing (draws) forever, or (2) Eve gives up if we win twice in a row.  
	We model this protocol by the pMDP in Figure~\ref{fig:parametric:rpspmdp}.
	Initially, Eve selects either (R), (P) or (S). Then, we select with some probability $x_{P}$ \emph{paper}, $x_R$ \emph{rock}, and $x_S$ \emph{scissors}. This probability is independent of the selection of Eve.
	If we lose, we go to a target state  $L$ (duplicated to avoid clutter). 
	Otherwise, if we win, we move to the next round, 
	and if we draw, we restart the game. 
	In the second round, we may choose a different distribution over the actions. 
	The distribution is represented by $x'_P, x'_R, x'_S$.
	After two wins, we reach the sink state~$W$.
	
	We want to know how we should choose the actions to avoid reaching $L$. 
	On a technical level, this questions corresponds to asking for the right values for $x_{R}, x_{P}, x_{S}$.
	 In the worst-case, Eve has some transcendental powers and always counters optimally.  What is the best strategy for us, i.e., what is the minimal probability of reaching the target state $L$? And how may we randomise if we merely want to ensure that we reach $L$ with a probability less than 90\%? Clearly, we have to randomise, otherwise Eve will easily counter our moves.
\end{example}

\noindent We now formally define parametric MDPs and their instantiations.

\subsection{Fundamentals}
Parametric MDPs extend MDPs
with a set $\pars$ of parameters and an adapted
transition relation: Probabilities are no longer expressed by values from
$[0,1]$, but by rational functions over $\pars$\footnote{For technical
reasons, we  exclude non-rational probabilities which are allowed in MDPs. The field of rational functions (polynomials) with coefficients in $\QQ$ is denoted $\QQ(\pars)$ ($\QQ[\pars]$, respectively).}.
\begin{definition}[pMDP~\cite{DBLP:conf/nfm/HahnHZ11}]
  A \emph{parametric Markov Decision Process} (\emph{pMDP}) $\pmdp$ is a tuple $\pmdptuple$
  with a  finite set $S$ of \emph{states}, an initial state $\init \in S$, a finite set $\Act$ of \emph{actions}, a finite set $\pars$ of
  \emph{parameters}, and a parametric probabilistic transition $\pprob \colon S
  \times \Act \times S \rightarrow \QQ(\pars)$.
\end{definition}
\noindent We limit ourselves to \emph{polynomial pMDPs}:
A pMDP $\pmdp$ is \emph{polynomial} if
\[
  \forall s, s' \in S, \act \in
  \EnAct(s): \pprob(s,\act,s') \in \QQ[\pars].
\]
The rationale is twofold. First, instantiating rational functions yields
undefined values when the denominator becomes zero.  That induces an
additional case distinction which is undesirable for a concise presentation of
the theory.  Furthermore, we want to use transition probabilities in the ETR constraints.  However, the ETR does not
`natively' support rational functions\footnote{It would require, e.g., defining
whether $\nicefrac{0}{0} > 0$.}.

As pMDPs extend (finite) MDPs, the concepts of paths, graphs, and strategies
as in Section~\ref{sec:prelims} carry over naturally.  Likewise,
parametric MCs are obtained as an extension to MCs, and are a special case of
pMDPs.
\begin{definition}[pMC]
	A \emph{parametric Markov chain} (\emph{pMC}) $\pmc$ is a pMDP \\ $\pmdptuple$ such that $|\EnAct(s)| = 1$ for all states $s \in S$. We identify the parametric probabilistic transition of $\pmc$ with a function $\pprob \colon S \times S \rightarrow \QQ(\pars)$.
%  A \emph{parametric Markov chain} (\emph{pMC}, for short) $\pmc$ is a tuple $\pmctuple$
%  with a  finite set $S$ of \emph{states}, an initial state $\init \in S$, a finite set $\pars$ of
%  \emph{parameters}, and a parametric probabilistic transition $\pprob \colon S
%\times S \rightarrow \QQ(\pars)$.
\end{definition}
%%
%Our definition of pMCs is based on the definitions in \cite{DBLP:conf/ictac/Daws04,DBLP:journals/fac/LanotteMT07} and the pMDP definition originates from \cite{DBLP:conf/nfm/HahnHZ11}.
%%
\begin{example}
  Figure~\ref{fig:parametric:pkydiepmc} depicts a parametric version of the
  biased Knuth-Yao die from Example~\ref{ex:parametric:kydie}.  It has
  parameters $\pars = \{x,y\}$, where $x$ is the probability of outcome
  \emph{heads} in grey states and $y$ the same for white states. 	The
  probability for \emph{tails} is then $1{-}x$ and $1{-}y$, respectively.
  Figure~\ref{fig:parametric:rpspmdp} depicts the pMDP for our
  rock-paper-scissors variant. The parameters are $\pars = \{ x_R, x_P, x_S,
  x'_R, x'_P, x'_S \}$.
\end{example}
Parameters are thus variables that may be substituted by concrete values. Not
all valuations are meaningful in the context of pMDPs.  We are only interested
in valuations for pMDPs that yield MDPs. Such \emph{valuations} are called
well-defined.
\begin{definition}[Well-defined valuation]
\label{def:parametric:welldefined}
	Let $\pmdp$ be a pMDP with parameters $\pars$.
	A valuation $\val \colon \pars \to \RR$ is \emph{well-defined} for $\pmdp$ if:
	\begin{itemize}
		\item probabilities are non-negative, i.e., $\pprob(s,\act,s')[\val] \geq 0$ for all $s,s' \in S, \act \in \EnAct(s)$.
		\item outgoing probabilities induce distributions, i.e., $\sum_{s' \in S} \pprob(s,\act,s')[\val] = 1$ for all $s \in S, \act \in \EnAct(s)$.
	\end{itemize}
	The set $\wdval{\pmdp}$ consists of all well-defined valuations for $\pmdp$.
\end{definition}
\noindent Well-defined valuations of $\pars$ for $\pmdp$ are just called \emph{valuations for $\pmdp$}.
\begin{figure}
\centering
	\begin{tikzpicture}[ every node/.style={scale=1, font=\scriptsize}, st/.style={draw, circle, inner sep=2pt, minimum size=13pt}, initial text=,initial where=above]

    \node[st] (s0) {$s_0$};
	\node[st, initial, right=2cm of s0] (s1) {$s_1$};
	\node[st, right=2cm of s1] (s2) {$s_2$};

	\draw[->] (s1) edge node[above] {$2-x$} (s0);
	\draw[->] (s1) edge node[above] {$-x$} (s2);
	 \draw[->] (s0) edge [loop left] node[left] {$1$} (s0);
	 \draw[->] (s2) edge [loop right] node[right] {$1$} (s2);
\end{tikzpicture}
	\caption{An unrealisable pMC}
	\label{fig:parametric:unrealisable}
\end{figure}
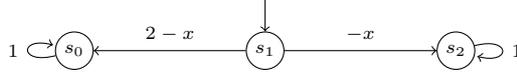
\begin{example}
\label{ex:parametric:welldefined}
The well-defined valuations for the pMDP in Figure~\ref{fig:parametric:rpspmdp} are \begin{align*} \wdval{\pmdp} = \{ \val \in \Val \mid~ & \val(x_P) + \val(x_R) + \val(x_S) = 1, 
    \\ & \val(x_P), \val(x_R), \val(x_S) \geq 0, 
    \\ & \val(x'_P) + \val(x'_R) + \val(x'_S) = 1, 
    \\ & \val(x'_P), \val(x'_R), \val(x'_S) \geq 0 \}. 
    \end{align*}
A valuation $\{ x_R, x_P, x_S, x'_R, x'_P, x'_S \mapsto \nicefrac{2}{3}  \}$ is not well-defined, as the sum of $\val(x_R), \val(x_P), \val(x_S)$ exceeds one.
Some pMDPs do not have any well-defined valuation, e.g., the pMC in Figure~\ref{fig:parametric:unrealisable}. It can be readily checked that no $x$ satisfies $2-x + (-x) = 1$ and $-x \geq 0$.
\end{example}
\begin{definition}[Realisable]
A pMDP $\pmdp$ is \emph{realisable}, if $\wdval{\pmdp}\neq \emptyset$.	
\end{definition}

Let $\pmdp$ be a pMDP and $\val \in \wdval{\pmdp}$ a valuation.
The \emph{instantiation} of $\pmdp$ with $\val$ is the MDP
$\instantiated{\pmdp}{\val} \coloneqq (S, \init, \Act, \prob)$ with
\[
  \prob(s,\act,s') \coloneqq \pprob(s,\act,s')[\val]\quad\text{for all }s, s'
  \in S, \act \in \EnAct(s).
\]

\begin{example}
  Reconsider the pMC in Figure~\ref{fig:parametric:pkydiepmc}. Observe that
  the valuation $\val$ with ${\val(x) = \nicefrac{2}{5}}$ and ${\val(y) =
  \nicefrac{7}{10}}$ is well-defined and yields the MC in
  Figure~\ref{fig:parametric:pkydiei}.
\end{example}

The notion of strategies carries over from MDPs to pMDPs and thus induced MCs
carry over to induced pMCs.

It is helpful and natural to consider a pMDP $\pmdp$ as a \emph{generator} for
an, in general, uncountable set $\generator{\pmdp}$ of instantiated MDPs.
\begin{definition}[Generator]
  The generator of pMDP $\pmdp$ is the set 
    \[ \generator{\pmdp} \coloneqq \{ \instantiated{\pmdp}{\val} \mid \val \in \wdval{\pmdp} \}. \]
    Let  $\region \subseteq \wdval{\pmdp}$ be a set of valuations for $\pmdp$. We define the set 
	\[ \generator[\region]{\pmdp} \coloneqq \{ \instantiated{\pmdp}{\val} \mid \val \in \region \}. \]
\end{definition}
Thus, $\generator{\pmdp} = \generator[\wdval{\pmdp}]{\pmdp}$. 
For this set, and sets yet to be introduced, we  often omit everything but the superscript and write, e.g., $\generator[\mathrm{wd}]{\pmdp}$.

\subsection{Solution functions of parametric models}
\label{sec:parametric:measures}
Each well-defined parameter valuation yields an instantiation for which the
measures are defined. Hence, we map valuations to reachability probabilities.
\emph{Solution functions} capture this mapping.
\begin{definition}[Solution function]
  For a pMC $\pmc$ and a state $s$, 
  \label{def:solutionfunction}
  let the \emph{(probability) solution function} \(
  \prsol[s]{\target}{\pmc} \colon  \wdval{\pmc} \rightarrow [0,1]\)
  be
  \[
    \prsol[s]{\target}{\pmc}(\val) \coloneqq
    \pr[s]{\instantiated{\pmc}{\val}}{}{\eventually\target}.
  \]
%  
%For a pMDP
%$\pmdp$, let \( \minprsol[s]{\target}{\pmdp} \colon  \wdval{\pmdp} \rightarrow
%[0,1]\) be defined as
%\[
%\minprsol[s]{\target}{\pmdp}(\val) \coloneqq \min_{\sched \in \DMSched{\pmdp}} \prsol[s]{\target}{\induced{\pmdp}{\sched}}(\val) = 
%  \min_{\sched \in \DMSched{\pmdp}}
%  \pr[s]{\instantiated{\pmdp}{\val}}{\sched}{\eventually\target}.
%\]\todo{JP: notion of minsol not used in paper, only example 6}
\end{definition}
We omit the initial state whenever possible and omit $\pmdp$ and $\target$
whenever they are clear from the context.
\begin{figure}
\centering
\begin{subfigure}[b]{0.45\textwidth}
\centering
	\begin{tikzpicture}[every node/.style={scale=1, font=\scriptsize}, st/.style={draw, circle, inner sep=2pt, minimum size=13pt}, baseline=(s0)]
		\node[st , initial, initial text=] (s0) {$s_0$} ;
		\node[st , right=0.8cm of s0] (s1) {$s_1$} ;
		\node[st , above=0.6cm of s1, accepting] (s2) {$s_2$} ;
		\node[st , right=0.8cm of s1] (s3) {$s_3$} ;
		
		\draw[->] (s0) -- node[below] {$x$} (s1);
		\draw[->] (s1) -- node[right] {$y$} (s2);
		\draw[->] (s0) -- node[left] {$1-x$} (s2);
		\draw[->] (s1) -- node[below] {$1-y$} (s3);
		\draw[->] (s2) edge[loop right] node {$1$} (s2);
		\draw[->] (s3) edge[loop above] node {$1$} (s3);
	\end{tikzpicture}
	\caption{A tiny pMC}
	\label{fig:parametric:tinypmc}
\end{subfigure}
\begin{subfigure}[b]{0.45\textwidth}
\centering
	\begin{tikzpicture}[every node/.style={scale=1, font=\scriptsize}, st/.style={draw, circle, inner sep=2pt, minimum size=13pt},baseline=(s0)]
		\node[st, initial, initial text=] (s0) {$s_0$} ;
		\node[st,, right=0.8cm of s0] (s1) {$s_1$} ;
		\node[st, above=0.6cm of s1, accepting] (s2) {$s_2$} ;
		\node[st, right=0.8cm of s1] (s3) {$s_3$} ;
		
		\node[draw, circle,  inner sep=0.5pt, fill, right=0.3cm of s0] (s0alpha)  {};
		\node[draw, circle,  inner sep=0.5pt, fill, above=0.81cm of s0] (s0beta) {} ;
		
		\draw[-] (s0) -- node[below] {$\alpha$} (s0alpha);
		\draw[-] (s0) -- node[left] {$\beta$} (s0beta);
		\draw[->] (s0beta) -- node[above] {$1-y$} (s2);
		\draw[->] (s0beta) edge[bend left=110] node[above] {$y$} (s3);
		
		\draw[->] (s0alpha) -- node[below] {$x$} (s1);
		\draw[->] (s1) -- node[right] {$y$} (s2);
		\draw[->] (s0alpha) -- node[left] {$1-x$} (s2);
		\draw[->] (s1) -- node[below] {$1-y$} (s3);
		
		\draw[->] (s2) edge[loop right] node {$1$} (s2);
		\draw[->] (s3) edge[loop above] node {$1$} (s3);
		
\end{tikzpicture}
	\caption{A tiny pMDP}
	\label{fig:parametric:tinypmdp}
\end{subfigure}
\caption{Two small, acyclic models}	
\label{fig:parametric:tiny}
\end{figure}
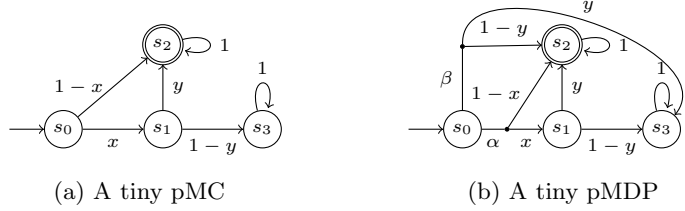

\begin{example}
\label{ex:parametric:solfuncs}
	Consider the pMC $\pmc$ in Figure~\ref{fig:parametric:tinypmc}. There are two paths to the target state. The probability $f \coloneqq x \cdot y + 1 - x$ to reach the target is the sum over the probabilities over these two paths.
	For any well-defined instantiation $\val$, the probability to reach the target in $\pmc[\val]$ is $f[\val]$.
	Thus, \[ \prsol{\target}{\pmc} = x \cdot y + 1 - x. \]
%	For the pMDP in Figure~\ref{fig:parametric:tinypmdp}, we have (after reformulation) \[ \minprsol{\target}{\pmdp} = \min \{1 - (1-y)x, 1-y \}. \]
%	We observe that for large values for $y$, action $\beta$ minimises the probability to the target, whereas for smaller values of $y$, depending on the value for $x$, $\alpha$ minimises the probability. 
\end{example}

\subsection{Graph-preserving valuations}
Recall that $\generator{\pmdp}$ considers the instantiations that are induced
by well-defined valuations.  Below, we consider a restriction on the
valuations.  In the analysis of parameter-free MDPs, it is often essential to
exploit the topology of the MDP, e.g., when computing zero-states.  In
$\generator{\pmdp}$, not all MDPs have the same topology.  The goal below is
to consider a restriction on the valuations such that all MDPs in
$\generator{\pmdp}$ have the same topology as $\pmdp$.  The topology changes,
if a transition is present in the pMDP but not in its instantiation.
\begin{definition}[Vanishing transitions]
  Let $\pmdp$ be a pMDP with $\val \in \wdval{\pmdp}$. We call a transition
  $(s,\act,s)$ \emph{vanishing} under $\val$ if
	\[
    \pprob(s,\act,s') \neq 0 \text{ and } \pprob(s,\act,s')[\val] = 0.
  \]
  The set $\vanish[\pmdp]{\val}{} \subseteq S \times \Act \times S$ contains
  all vanishing transitions.
\end{definition}
\noindent A valuation  preserves the topology if no transitions vanish, formally:
\begin{definition}[Graph-preserving valuations]
\label{def:parametric:graph_preservation}
  Let $\pmdp$ be a pMDP. A valuation $\val \in \wdval{\pmdp}$ is
  \emph{graph-preserving} if $\vanish[\pmdp]{\val}{} = \emptyset$. The set
  $\gpval{\pmdp}$ contains all graph-preserving valuations.
\end{definition}
\noindent The generator for this class is $\generator[\gpval{\pmdp}]{\pmdp}$, also denoted $\generator[\mathrm{gp}]{\pmdp}$.

\begin{example}
  Let us again consider Figure~\ref{fig:parametric:pkydiepmc}.  If we set
  $\val(x) \coloneqq 0$, then various transitions disappear, in particular the
  transitions from $s_0$ to $s_1$, and from $s_3$ to $s_1$. Thus, any
  valuation with $\val(x) = 0$ is not graph-preserving.
	
  There exist realisable pMCs without graph-preserving instantiations, e.g.,
  any realisable pMC with states $s, s_1, s_2$ such that $\pprob(s,s_1) = x$,
  $\pprob(s,s_2) = x + 1$.
\end{example}

\subsection{Other sets of valuations}
Sets of valuations may have particular characteristics. For example, when all
valuations are graph-preserving, it is a graph-preserving set of valuations.
Slightly weaker, the set is \emph{graph-consistent}, when all valuations
induce the same topology (but not necessarily the topology of the
corresponding pMDP). 
\begin{definition}[Graph-consistent sets of valuations]
  \label{def:parametric:graph_consistent}
  A graph-consistent set $\region$ of valuations is a subset
  of the well-defined valuations such that for all $\val, \val' \in \region$:
  \[
    \vanish[\pmdp]{\val}{} = \vanish[\pmdp]{\val'}{}.
  \]
  It is maximally graph-consistent, if no true superset of $\region$ is
  graph-consistent.
\end{definition}
\begin{example}
  Let us again consider Figure~\ref{fig:parametric:pkydiepmc}. As we have seen
  previously, valuations with $\val(x)$ are not graph-preserving. 
	However, the set
  \[
    \{ \val \in \wdval{\pmc} \mid \val(x) = 0 \land  0 < \val(y) < 1 \}
  \] is maximally graph-consistent. 
\end{example}

Note that inside graph-consistent sets of valuations the sets of vanishing transitions
are invariant. Formally, we have from~\cite{param_sttt} the
following property.
\begin{proposition}\label{pro:encoding:qualitativegraphproperty}
  Let $\pmdp$ be a pMDP with target states $\target$ and $\region$ a
  graph-consistent set of valuations. For all $\mdp, \mdp' \in
  \generator[\region]{\pmdp}$ and
  $\sched \in \DMSched{\pmdp}$:
  \begin{align*}
    & \pr{\mdp}{\sched}{\eventually\target} = 0 \text{ implies }
    \pr{\mdp'}{\sched}{\eventually\target} = 0, \text{ and}\\
    & \pr{\mdp}{\sched}{\eventually\target} = 1 \text{ implies }
    \pr{\mdp'}{\sched}{\eventually\target} = 1.	
  \end{align*}
\end{proposition}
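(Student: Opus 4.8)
The plan is to reduce this qualitative statement to a purely graph-theoretic fact: in a finite Markov chain, whether a reachability probability equals $0$ or $1$ depends only on the underlying digraph and the target set, and \emph{not} on the concrete transition probabilities. Since $\region$ is graph-consistent, all instantiations in $\generator[\region]{\pmdp}$ share the same topology, and I would show that fixing a deterministic memoryless strategy $\sched$ transfers this coincidence to the induced Markov chains. The two implications then follow from the classical characterisations of $0$- and $1$-reachability.

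First I would fix $\sched \in \DMSched{\pmdp}$ together with $\mdp = \instantiated{\pmdp}{\val}$ and $\mdp' = \instantiated{\pmdp}{\val'}$ for some $\val, \val' \in \region$, and argue that $\graph{\mdp} = \graph{\mdp'}$. A transition $(s,\act,s')$ is present in $\mdp$ exactly when $\pprob(s,\act,s')[\val] \neq 0$. I would distinguish two cases: if $\pprob(s,\act,s')$ is the zero polynomial, the transition is absent in every instantiation irrespective of the valuation; otherwise it is present in $\mdp$ precisely when $(s,\act,s') \notin \vanish[\pmdp]{\val}{}$. Graph-consistency gives $\vanish[\pmdp]{\val}{} = \vanish[\pmdp]{\val'}{}$, so both cases yield the same present transitions for $\mdp$ and $\mdp'$, whence $\edges{\mdp} = \edges{\mdp'}$.

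Next I would lift this to the induced chains. As $\sched$ is deterministic and memoryless, the induced transition function satisfies $\induced{\prob}{\sched}'(s,s') = \prob(s,\sched(s),s')$, so $(s,s')$ is an edge of $\induced{\mdp}{\sched}$ iff the transition $(s,\sched(s),s')$ is present in $\mdp$. By the previous step the present transitions of $\mdp$ and $\mdp'$ coincide, hence $\induced{\mdp}{\sched}$ and $\induced{\mdp'}{\sched}$ have identical underlying graphs and the same target set $\target$.

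Finally I would invoke the standard graph characterisations of qualitative reachability in finite Markov chains (cf.~\cite{BK08}): the probability $\pr{\mdp}{\sched}{\eventually\target}$ equals $0$ iff $\target$ is unreachable from $\init$ in the underlying graph, and equals $1$ iff $\target$ is reachable from every state that is itself reachable from $\init$. Both conditions refer only to the graph and to $\target$, which agree for $\induced{\mdp}{\sched}$ and $\induced{\mdp'}{\sched}$; this yields both implications at once (indeed their converses too, by symmetry in $\mdp, \mdp'$). The only point needing care is the case distinction in the second step: one must keep apart the two senses in which a transition can be ``zero'', namely $\pprob(s,\act,s')$ being identically zero versus merely evaluating to zero under $\val$, since $\vanish[\pmdp]{\val}{}$ --- and hence graph-consistency --- controls only the latter while the former is valuation-independent. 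The graph characterisation itself is classical and needs no new argument.
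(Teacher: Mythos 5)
Your proof is correct and follows essentially the same route as the paper, which merely asserts that the claim ``follows directly from the graph-based algorithms for qualitative properties''; you simply make explicit the two facts the paper leaves implicit, namely that graph-consistency (together with the valuation-independent case of identically-zero polynomials) forces the induced chains $\induced{\mdp}{\sched}$ and $\induced{\mdp'}{\sched}$ to share one underlying digraph, and that qualitative reachability is a property of that digraph and $\target$ alone. (One cosmetic caveat: your characterisation of probability-one reachability should be read on the chain with $\target$ made absorbing to handle states in $\target$ with ``bad'' successors, but that modification is itself purely graph-based, so the argument is unaffected.)
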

A proof of this claim follows directly from the graph-based algorithms for
\emph{qualitative properties}~\cite{BK08}, that is, whether the maximal or
minimal reachability probabilities are precisely $0$ or $1$.
The same graph-based algorithms suggest that removing transitions does not
increase the number of states from which the reachability probability is
positive:
\begin{lemma}\label{lem:more-better}
  Let $\pmdp$ be a pMDP with target states $\target$.
  For all $\val \in \gpval{\pmdp}$ and all $\sched \in
  \DMSched{\pmdp}$ we have that:
  \[
    \pr{\instantiated{\pmdp}{\val}}{\sched}{\eventually\target} > 0
    \text{ iff there exists some }\mdp \in
    \generator{\pmdp} \text{ s.t. } \pr{\mdp}{\sched}{\eventually\target} > 0.
  \]
\end{lemma}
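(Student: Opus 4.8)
The plan is to reduce the statement to a purely graph-theoretic fact: under a fixed deterministic memoryless strategy, positivity of the reachability probability depends only on \emph{which} transitions are present in the induced chain, not on their precise probabilities. This is exactly the graph-based reasoning underlying Proposition~\ref{pro:encoding:qualitativegraphproperty}. The forward direction is then immediate. Since every graph-preserving valuation is in particular well-defined, $\gpval{\pmdp} \subseteq \wdval{\pmdp}$, so $\instantiated{\pmdp}{\val} \in \generator{\pmdp}$; taking $\mdp \coloneqq \instantiated{\pmdp}{\val}$ witnesses the right-hand side.

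For the converse I would first recall the standard characterization of positive reachability in a finite MC: for any MC $\dtmc$ we have $\pr{\dtmc}{}{\eventually\target} > 0$ if and only if $\target$ is reachable from $\init$ in the underlying digraph $\graph{\dtmc}$. Fixing $\sched \in \DMSched{\pmdp}$, the relevant object is the (finite) induced chain $\induced{\mdp}{\sched}$, where $\sched$ selects a unique action $\sched(s)$ in every state, so that the edge set of $\graph{\induced{\mdp}{\sched}}$ is exactly $\{(s,s') \mid \pprob(s,\sched(s),s')[\val'] \neq 0\}$, with $\val'$ the valuation inducing $\mdp$. Note that $\sched$ is applicable uniformly across all $\mdp \in \generator{\pmdp}$, as the enabled-action structure is determined syntactically and thus shared by all instantiations.

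The key step is a monotonicity of these induced graphs in the valuation. A polynomial that is not identically zero may vanish at particular points, but a polynomial that is identically zero vanishes everywhere; hence for every well-defined $\val'$, $\pprob(s,\act,s')[\val'] \neq 0$ implies $\pprob(s,\act,s') \neq 0$ as a polynomial. For the graph-preserving valuation $\val$, Definition~\ref{def:parametric:graph_preservation} gives the reverse implication as well, so $\pprob(s,\act,s')[\val] \neq 0$ iff $\pprob(s,\act,s') \neq 0$. Consequently, for the fixed $\sched$ and any $\mdp \in \generator{\pmdp}$, the edge set of $\graph{\induced{\mdp}{\sched}}$ is contained in that of $\graph{\induced{\instantiated{\pmdp}{\val}}{\sched}}$, which coincides with the purely syntactic induced graph of $\pmdp$ under $\sched$. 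Thus if some $\mdp \in \generator{\pmdp}$ satisfies $\pr{\mdp}{\sched}{\eventually\target} > 0$, then $\target$ is reachable from $\init$ in $\graph{\induced{\mdp}{\sched}}$, so the same path witnesses reachability in the larger digraph $\graph{\induced{\instantiated{\pmdp}{\val}}{\sched}}$, and the graph characterization yields $\pr{\instantiated{\pmdp}{\val}}{\sched}{\eventually\target} > 0$.

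The main obstacle is mild and bookkeeping in nature: I must state the equivalence ``positive reachability iff graph reachability'' precisely for the induced chains of memoryless strategies, and use the identification of $\induced{\mdp}{\sched}$ with the finite chain over state space $S$ (as set up in the preliminaries), so that the subgraph inclusion is literally a containment of edge sets over the \emph{common} state set $S$ rather than over path-prefixes. Once this identification is fixed, the remaining argument is only the distinction between a polynomial vanishing identically and vanishing at a point.
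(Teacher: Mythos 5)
Your proof is correct and follows essentially the same route the paper intends: the paper offers no detailed proof of Lemma~\ref{lem:more-better}, only the remark that the graph-based algorithms for qualitative properties show that removing transitions cannot create new states with positive reachability probability, and your argument is precisely a careful elaboration of that idea (positivity depends only on the induced digraph, every well-defined instantiation's induced digraph is a subgraph of the graph-preserving one's, and the forward direction is the trivial inclusion $\gpval{\pmdp} \subseteq \wdval{\pmdp}$). Nothing further is needed.
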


\paragraph{Boolean valuations} A final class of valuation sets that we
consider is the restriction to $\{0,1\}$. Formally, a
valuation $\val \in \wdval{\pmdp}$ is a \emph{Boolean valuation} if $\val(x) \in \{0,1\}$
for all $x \in \pars$. We write $\pureval{\pmdp}$ for the set of all Boolean
valuations.

\subsection{Problem statement}
The question we address in this article is whether some instantiation of
$\pmdp$ is such that its maximal or minimal probability of  eventually
reaching $T$ compares with  $\lambda \in \{ 0,\nicefrac{1}{2},1 \}$ in some
desired way. In symbols, for $\quant \in \{\exists,\forall\}$, $R \subseteq
\wdval{\pmdp}$, and
$\mathrm{\bowtie} \in \{\leq, <, >, \geq\}$, we consider the decision problem
\[ 
  \exists\;\val \in R,
  \quant\;\sched \in \DMSched{\pmdp}:\;
    \pr{\instantiated{\pmdp}{\val}}{\sched}{\eventually\target} \bowtie \lambda.
\]

\paragraph{Assumptions}
When studying the complexity of reachability problems, we
will mostly focus on \emph{simple pMDPs}.
A pMDP $\pmdp$ is \emph{simple} if
\begin{itemize}
  \item $\mathcal{P}(s,\alpha,s') \in \{x,1-x \mid x \in X\} \cup
    \mathbb{Q}_{\geq 0}$ for all $s,s' \in S$ and $\alpha \in \Act$;
    and
  \item $\sum_{s' \in S} \mathcal{P}(s,\alpha,s') \equiv 1$ for all $s \in S$ and
    $\alpha \in \EnAct(s)$.
\end{itemize}
The well-defined and graph-preserving valuations for simple pMDPs are
$[0,1]^X$ and $(0,1)^X$ respectively.

Note that such pMDPs are essentially a model of sequential parametric
Bernoulli experiments.  The reason we restrict our study to simple pMDPs is to
avoid the complexity being governed by the subproblem of checking whether
there is some well-defined valuation, which in general is an \ETR-hard problem.
\begin{proposition}[From~\cite{DBLP:journals/fac/LanotteMT07}]
\label{pro:aspects:wdetrhard}
	Given a polynomial pMDP with at least two states, determining whether it is
  realisable is \ETR-hard.
\end{proposition}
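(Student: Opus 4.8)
The plan is to reduce to realisability from the \ETR-complete problem \emph{polynomial root existence}: given $f \in \QQ[\pars]$ with $\pars = \{x_1,\dots,x_n\}$, decide whether there is a valuation $\val\colon\pars\to\RR$ with $f[\val]=0$. This problem is \ETR-hard by the standard single-equation encoding: a conjunction $\bigwedge_i g_i = 0$ is equivalent to $\sum_i g_i^2 = 0$, a non-strict inequality $h \geq 0$ to $h - z^2 = 0$ for a fresh variable $z$, a strict inequality $h > 0$ to $h\cdot z^2 - 1 = 0$, and a disjunction $a = 0 \lor b = 0$ to $a\cdot b = 0$; hence every existential sentence over $(\RR,+,\cdot,0,1,<)$ is equisatisfiable, via a polynomial-time computable $f$, with the existence of a real root of $f$.

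Given such an $f$, I would build a two-state polynomial pMDP $\pmdp_f$ with parameters $\pars$, states $s_0$ and $s_1$ (taking $\init = s_0$), and a single action $\act$ at each state (so that $\pmdp_f$ is in fact a pMC). I set
\[
  \pprob(s_0,\act,s_0) = f + \tfrac{1}{2}, \qquad \pprob(s_0,\act,s_1) = \tfrac{1}{2}, \qquad \pprob(s_1,\act,s_1) = 1,
\]
and the remaining transition $\pprob(s_1,\act,s_0)$ to the zero polynomial. Both labels at $s_0$ are \emph{non-zero} polynomials (each has constant term $\tfrac12$), so the corresponding edges are present and $\act$ is genuinely enabled for every valuation, which avoids any degenerate $\EnAct$ issue (in particular for constant $f$). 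Writing down $\pmdp_f$ from $f$ is clearly a polynomial-time computation.

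For correctness I would unfold Definition~\ref{def:parametric:welldefined}. At $s_1$ the single outgoing probability is the constant $1$, so no constraint arises. At $s_0$ the conditions are $(f+\tfrac12)[\val] \geq 0$, $\tfrac12 \geq 0$, and the sum-to-one equality $(f+\tfrac12)[\val] + \tfrac12 = 1$. The last is equivalent to $f[\val] = 0$, which in turn forces $(f+\tfrac12)[\val] = \tfrac12 \geq 0$, so the non-negativity conditions are subsumed. Hence $\val \in \wdval{\pmdp_f}$ iff $f[\val] = 0$, and therefore $\pmdp_f$ is realisable iff $f$ has a real root. Composing the two reductions yields \ETR-hardness of realisability, even for two-state pMCs.

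The gadget itself is routine; the one point demanding care is that well-definedness imposes \emph{both} an equality and non-negativity constraints, and I must ensure the latter do not implicitly shrink the feasible set below $\{\val : f[\val]=0\}$. I handle this by loading the entire polynomial into a single sum-to-one equation, so that the equality pins down the semantics while every non-negativity condition holds automatically once the equality holds. Accordingly I expect the only real (and mild) obstacle to be the preliminary justification that single-polynomial root existence is \ETR-hard; should one prefer to avoid invoking it, the same two-state construction generalises to encode a system of polynomial equalities $g_i = 0$ directly, using one action at $s_0$ per equality (with inequalities absorbed via slack variables as above).
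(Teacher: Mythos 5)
Your proposal is correct and takes essentially the same route as the paper: the paper also builds a two-state pMC whose well-definedness constraints collapse to a single polynomial equation ($f=1$ there, $f=0$ in your shifted variant) and then invokes \ETR-hardness of deciding whether a multivariate polynomial has a real root (which the paper gets from Lemma~\ref{lem:etr:mb4feas} rather than re-deriving, as you sketch). Your extra care about the non-negativity constraints being subsumed by the sum-to-one equality, and about the action staying enabled, is sound but does not change the argument.
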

We give here a simple proof of the claim using a lemma that will be proved in the
sequel.
\begin{proof}
  Consider a pMC with two states and a single transition between them with
  probability $f \in \QQ[\pars]$. The constraints for well-definedness collapse
  to $ f \ceq 1 $, or equivalently $f - 1 \ceq 0$. For multivariate polynomials
  of degree at least four, answering this question is \ETR-hard --- see
  Lemma~\ref{lem:etr:mb4feas}.
\end{proof}

\paragraph{Encoding of the input}
Let $\pmdp$ be a \emph{simple} pMDP with and a set $\target$ of target states.  We analyse the decision problems according to whether the
set $\pars$ of parameters from $\pmdp$ has bounded size --- with a-priori fixed
bound --- or arbitrary size.  It remains to fix an encoding for polynomials with rational coefficients. Henceforth, we assume the exponents of such polynomials are given as binary-encoded integers and the (rational) coefficients as pairs of integers, also encoded in binary.
%Henceforth, we assume the coefficients, exponents and
%constants are all given as binary-encoded integers or --- in the case of rational numbers --- pairs thereof.

\section{Qualitative Reachability Problems}
Table~\ref{tab:qualcomplexity} summarises the results for qualitative reachability in (simple) pMDPs and pMCs.
Consider a pMDP $\pmdp$ and let $R \subseteq \wdval{\pmdp}$ and $\quant \in
\{\exists,\forall\}$. For convenience, we give names to the questions asking
whether there exists some pMDP $\mdp \in \generator[R]{\pmdp}$ with the
following properties. 
\begin{itemize}
	\item \emph{Positive reachability:}
		\( 
		\quant\;\sched \in \DMSched{\pmdp}:\;
		\pr{\instantiated{\pmdp}{\val}}{\sched}{\eventually\target} > 0
		\)
	\item \emph{Unsure reachability:}
	\( 
	\quant\;\sched \in \DMSched{\pmdp}:\;
	\pr{\instantiated{\pmdp}{\val}}{\sched}{\eventually\target} < 1
	\)
  	\item \emph{Safety:}
	    \( 
	      \quant\;\sched \in \DMSched{\pmdp}:\;
	      \pr{\instantiated{\pmdp}{\val}}{\sched}{\eventually\target} \leq 0
	    \)
  \item \emph{Almost-sure reachability:}
    \( 
      \quant\;\sched \in \DMSched{\pmdp}:\;
      \pr{\instantiated{\pmdp}{\val}}{\sched}{\eventually\target} \geq 1.
    \)
  
\end{itemize}
Note that the decision problem changes depending on $R$ and, for
pMDPs, on $\quant$. Together, these problems are the \emph{qualitative
reachability problems}.
Table \ref{tab:qualcomplexity} lists their computational complexity for a fixed number of
parameters, and the complexity if the parametric model contains arbitrarily
many parameters. In the latter case, we make a distinction based on whether
the parameter valuations range over the well-defined, graph-preserving or Boolean
valuations.

\begin{table}[tb]
\centering
\setlength{\tabcolsep}{5pt}
\begin{tabular}{|l|c|c|c|c|}
\hline
                               & Fixed \#  &
\multicolumn{3}{c|}{Arbitrary \# parameters} \\
\cline{3-5}
 & parameters & graph-preserving & well-defined & Boolean \\
\hline
\hline
  $> 0$ &  in \Ptime{} \refsize{Thm~\ref{thm:easy-cases}}
        & in \Ptime{} \refsize{Thm~\ref{thm:easy-cases}} 
        & in \Ptime{} \refsize{Thm~\ref{thm:easy-cases}}
        & \makecell{\NP-complete\\
          \refsize{Thm~\ref{thm:gen-upper-bnd}, Prop~\ref{pro:aspects:posreachabilityBoolean}}} \\
  $< 1$ & ''
        & ''
        & \makecell{\NP-complete\\
          \refsize{Thm~\ref{thm:gen-upper-bnd}, Prop~\ref{pro:aspects:unsurereachability}}}
        & '' \\
  $\leq 0$ & ''
           & ''
           & \makecell{\NP-complete\\
             \refsize{Thm~\ref{thm:gen-upper-bnd}, Prop~\ref{pro:complexity:qualitativereach}}}
           & '' \\
  $\geq 1$ & ''
           & ''
           & ''
           & '' \\
\hline
\end{tabular}
\caption{The complexity landscape for qualitative reachability in
  simple pMCs and pMDPs. Observe that the decision problems for pMCs and
  pMDPs (for maximal and minimal reachability probability values) are different, but
  (with respect to the considered classes) the categorisation coincides.
  Unlisted combinations of comparison operators and thresholds yield trivial
  decision problems.
}

\label{tab:qualcomplexity}
\end{table}

\subsection{Upper bounds}
Towards a general upper bound, recall that inside graph-consistent valuation
sets the
sets of vanishing transitions are invariant. The following is a corollary of
Proposition~\ref{pro:encoding:qualitativegraphproperty} and the fact that
(maximal and minimal) reachability values in MDPs are computable in polynomial time.
\begin{theorem}\label{thm:gen-upper-bnd}
  The qualitative reachability problems for simple pMDPs are all decidable in \NP.
\end{theorem}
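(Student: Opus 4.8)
The plan is to exhibit, for each of the eight variants (the four listed properties, each with $\quant\in\{\exists,\forall\}$), a single nondeterministic polynomial-time procedure. The first step is to eliminate the inner scheduler quantifier. Since schedulers range over the finite set $\DMSched{\pmdp}$, a statement $\quant\,\sched\in\DMSched{\pmdp}:\pr{\instantiated{\pmdp}{\val}}{\sched}{\eventually{\target}}\bowtie\lambda$ is equivalent to comparing $\lambda$ against either $\max_{\sched\in\DMSched{\pmdp}}\pr{\instantiated{\pmdp}{\val}}{\sched}{\eventually{\target}}$ or $\min_{\sched\in\DMSched{\pmdp}}\pr{\instantiated{\pmdp}{\val}}{\sched}{\eventually{\target}}$: for instance, positive reachability with $\quant=\exists$ asks whether the maximum exceeds $0$, and with $\quant=\forall$ whether the minimum exceeds $0$; the six remaining cases (unsure reachability, safety, almost-sure reachability) are handled symmetrically, always bottoming out in a single optimal value compared with $0$ or $1$. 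By Proposition~\ref{pro:poly-time-MDPs}, for a fixed MDP both optima are computable in polynomial time, so only the outer quantifier $\exists\,\val\in R$ remains to be resolved nondeterministically.

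The key idea for the outer quantifier is that, for simple pMDPs, the \emph{qualitative} value---whether the relevant optimum equals $0$, is positive, equals $1$, or is strictly below $1$---depends on $\val$ only through the topology of $\instantiated{\pmdp}{\val}$, i.e.\ through $\vanish[\pmdp]{\val}$. In a simple pMDP every transition carries the label $x$, $1-x$, or a nonnegative rational, so a transition labelled $x$ vanishes iff $\val(x)=0$, one labelled $1-x$ vanishes iff $\val(x)=1$, and constant transitions never vanish. Hence the induced topology is pinned down by recording, for each parameter $x\in\pars$, merely whether $\val(x)$ equals $0$, equals $1$, or lies in $(0,1)$. First I would therefore guess one of these three options per parameter---an $O(|\pars|)$-size certificate---, check its consistency with the prescribed set ($R=\wdval{\pmdp}$ admits every guess, $R=\gpval{\pmdp}$ only the all-$(0,1)$ guess, and $R=\pureval{\pmdp}$ only guesses avoiding $(0,1)$), and then build a representative valuation $\val^\ast$ by assigning $0$, $1$, or $\nicefrac{1}{2}$ accordingly. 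Because $\pmdp$ is simple, $\val^\ast\in R$ and $\instantiated{\pmdp}{\val^\ast}$ is a genuine MDP whose topology is exactly the guessed one. Finally I would compute the relevant optimum in $\instantiated{\pmdp}{\val^\ast}$ via Proposition~\ref{pro:poly-time-MDPs} and accept iff it satisfies the required comparison.

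Correctness hinges on Proposition~\ref{pro:encoding:qualitativegraphproperty}: any two instantiations with the same set of vanishing transitions form a graph-consistent family on which both $\pr{\mdp}{\sched}{\eventually{\target}}=0$ and $\pr{\mdp}{\sched}{\eventually{\target}}=1$ are preserved when $\mdp$ is replaced by any other member, for each fixed $\sched$. Maximising (respectively minimising) over the fixed finite set $\DMSched{\pmdp}$ then shows that the four qualitative conditions on the optimum are likewise invariant across the family, so the verdict computed on $\val^\ast$ coincides with that for \emph{every} $\val\in R$ realising the guessed topology. For completeness, each $\val\in R$ induces a topology captured by exactly one consistent guess, whose representative $\val^\ast\in R$ gives the same verdict; thus the procedure accepts iff some $\val\in R$ witnesses the property, and it clearly runs in \NP.

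The step I expect to be the main obstacle is precisely this reduction from the uncountable set $R$ to a finite, polynomially describable collection of topologies: it is what lets us replace possibly irrational probabilities by the single rational representative $\val^\ast$ without altering any qualitative outcome, and it is exactly here that graph-consistency (Proposition~\ref{pro:encoding:qualitativegraphproperty}) does the essential work. A secondary point requiring care is checking, separately for $\wdval{\pmdp}$, $\gpval{\pmdp}$, and $\pureval{\pmdp}$, that the consistency test and the representative $\val^\ast$ indeed remain inside the intended set.
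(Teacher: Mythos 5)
Your proposal is correct and follows essentially the same route as the paper: the paper's own (very terse) argument is precisely to guess an assignment of $0$, $\nicefrac{1}{2}$, or $1$ to each parameter, thereby fixing a graph-consistent set of valuations, and then to verify the property on the instantiated MDP in polynomial time via Propositions~\ref{pro:encoding:qualitativegraphproperty} and~\ref{pro:poly-time-MDPs}. You have simply spelled out in full the invariance argument across graph-consistent families and the membership of the representative $\val^\ast$ in each of the three valuation sets, both of which the paper leaves implicit.
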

Indeed, one can guess a graph-consistent set of valuations by, for instance, guessing an
assignment of the parameters with values $0$, $\nicefrac{1}{2}$, or $1$, for all of
them. In the instantiated MDP one can verify that the property holds in
polynomial time.

There are three particular cases in which the problem is tractable: when
considering graph-preserving valuations only, when the problem is positive
reachability, and when the number of parameters is fixed.
\begin{theorem}\label{thm:easy-cases}
  The following problems for simple pMDPs are decidable in polynomial time:
  \begin{itemize}
    \item all the qualitative reachability problems with respect
      to graph-preserving valuations;
    \item the positive reachability problems that include graph-preserving valuations; and  
    \item all the qualitative reachability problems for a fixed
      number of parameters.
  \end{itemize}
\end{theorem}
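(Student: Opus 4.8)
The plan is to reduce all three items to a constant number of polynomial-time qualitative model-checking queries on concrete MDPs. The common observation is that every one of the four qualitative constraints really concerns whether a reachability probability equals $0$ or $1$: since probabilities lie in $[0,1]$, we have $\Pr \leq 0 \iff \Pr = 0$ and $\Pr \geq 1 \iff \Pr = 1$, while $>0$ and $<1$ are the negations of these. By Proposition~\ref{pro:encoding:qualitativegraphproperty}, whether a fixed deterministic memoryless scheduler yields probability exactly $0$ (resp.\ exactly $1$) is invariant across any graph-consistent set of valuations, and these properties are decidable by the standard graph-based algorithms in polynomial time. Together with Proposition~\ref{pro:poly-time-MDPs}, this lets me resolve the inner quantifier $\quant\,\sched \in \DMSched{\pmdp}$, whether existential or universal over schedulers, on any single fixed instantiation.

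For the first item I would note that $\gpval{\pmdp}$ is itself one graph-consistent set: by definition no transition vanishes under a graph-preserving valuation, so all of them induce the topology of $\pmdp$. Hence, for each fixed $\sched$, the truth of $\pr{\instantiated{\pmdp}{\val}}{\sched}{\eventually\target} \bowtie \lambda$ is the same for every $\val \in \gpval{\pmdp}$, and therefore so is the inner statement $\quant\,\sched\colon\ldots$. It then suffices to fix one representative graph-preserving valuation --- for a simple pMDP, setting each parameter to $\nicefrac{1}{2}$ works, since every label $x$ or $1{-}x$ then evaluates to $\nicefrac{1}{2}\in(0,1)$ --- instantiate it, and decide the qualitative query in polynomial time; the outer $\exists\,\val$ becomes vacuous.

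For positive reachability over a valuation set containing the graph-preserving ones (e.g.\ $\wdval{\pmdp}$), the plan is to exploit that the graph-preserving instantiation carries the maximal edge set and that positivity under a fixed scheduler is monotone in that edge set. Concretely, Lemma~\ref{lem:more-better} gives, for a graph-preserving $\val_0$ and any $\sched$, that $\pr{\instantiated{\pmdp}{\val_0}}{\sched}{\eventually\target} > 0$ iff \emph{some} instantiation in the generator achieves positivity under $\sched$. I would apply this in both directions of the scheduler quantifier: any witnessing $\val$ can be replaced by $\val_0$ without losing positivity, and $\val_0$ itself lies in the valuation set. The problem then collapses to computing $\max_{\sched}$ or $\min_{\sched}$ of the reachability probability on the single MDP $\instantiated{\pmdp}{\val_0}$, which is polynomial by Proposition~\ref{pro:poly-time-MDPs}.

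Finally, for a fixed number $k$ of parameters I would enumerate topology classes. In a simple pMDP every transition is labelled $x$, $1{-}x$, or a rational constant, so a transition vanishes precisely when its parameter hits a boundary ($x=0$ for label $x$, $x=1$ for label $1{-}x$); consequently the vanishing set, and hence by Proposition~\ref{pro:encoding:qualitativegraphproperty} the qualitative answer, depends only on the \emph{boundary signature} assigning to each parameter a value in $\{0,1,\text{interior}\}$, yielding at most $3^k$ classes. For fixed $k$ this is a constant. I would take the representative valuation of each signature (coordinate values $0$, $1$, or $\nicefrac{1}{2}$), keep those lying in the prescribed set $R$, decide the inner qualitative query on each in polynomial time, and report the existential over $R$ as true iff some retained representative succeeds. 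The main obstacle I anticipate is the second item under a \emph{universal} scheduler quantifier: reconciling an existential valuation with a universal scheduler requires the uniform monotonicity supplied by Lemma~\ref{lem:more-better} rather than a naive ``best valuation'' argument; a secondary point is verifying rigorously that for simple pMDPs the boundary signature pins down the vanishing set exactly.
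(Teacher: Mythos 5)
Your proposal is correct and follows essentially the same route as the paper: a single graph-preserving representative (all parameters set to $\nicefrac{1}{2}$) for the first two items, with Lemma~\ref{lem:more-better} supplying the per-scheduler monotonicity needed for positive reachability, and the $3^{|X|}$ partition into maximal graph-consistent classes (Lemma~\ref{lem:decomposingwdtogp}) for the fixed-parameter case. The extra care you take with the universal scheduler quantifier in the second item is a point the paper glosses over, but it is resolved exactly as you suggest, since Lemma~\ref{lem:more-better} is stated uniformly for each fixed $\sched \in \DMSched{\pmdp}$.
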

The main idea behind the proof is the same as for the previous claim.
Indeed, one can guess a graph-consistent set of valuations by choosing a `dummy variable
assignment' giving a value of $0$, $\nicefrac{1}{2}$, or $1$, for all of them.
We observe that the set of all such valuations forms a finite partition of the set of
well-defined valuations:
\begin{lemma}\label{lem:decomposingwdtogp}
  Let $\pmdp$ be a pMDP with parameters $X$.
  The set $\wdval{\pmdp}$ may be partitioned into at most $3^{|X|}$
  maximal graph-consistent sets of valuations.
\end{lemma}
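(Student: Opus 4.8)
The plan is to identify the maximal graph-consistent sets with the classes of an explicit equivalence relation and then bound the number of classes. Define $\val \sim \val'$ on $\wdval{\pmdp}$ by $\vanish[\pmdp]{\val} = \vanish[\pmdp]{\val'}$. This is evidently an equivalence relation, and by Definition~\ref{def:parametric:graph_consistent} its classes are precisely the maximal graph-consistent sets: a set is graph-consistent iff it is contained in a single $\sim$-class, and each $\sim$-class is itself graph-consistent and cannot be enlarged without admitting a valuation with a different vanishing set. Hence the $\sim$-classes partition $\wdval{\pmdp}$, and it suffices to bound the number of distinct vanishing-transition sets $\vanish[\pmdp]{\val}$ by $3^{|X|}$.

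Next I would exploit that we work in the simple setting of this section, so that each label satisfies $\pprob(s,\act,s') \in \{x,\,1-x \mid x \in X\} \cup \QQnn$ and $\wdval{\pmdp} = [0,1]^{X}$. By the definition of vanishing transitions a transition is vanishing under $\val$ exactly when its label is not the zero polynomial yet evaluates to $0$. For a label $x$ this happens iff $\val(x)=0$; for a label $1-x$ iff $\val(x)=1$; and a constant label never vanishes, since a non-zero rational is never $0$ and the zero polynomial is excluded by the requirement $\pprob(s,\act,s') \neq 0$ in that definition. Consequently $\vanish[\pmdp]{\val}$ is completely determined by recording, for each parameter $x \in X$, whether $\val(x)=0$, $\val(x)\in(0,1)$, or $\val(x)=1$; that is, it factors through the three-valued signature $\mathrm{sig}(\val) \in \{0,\ast,1\}^{X}$.

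Given this observation the count is immediate. If $\mathrm{sig}(\val)=\mathrm{sig}(\val')$ then $\vanish[\pmdp]{\val}=\vanish[\pmdp]{\val'}$, so every $\sim$-class is a union of signature-fibres and the number of classes is at most the number of distinct signatures, namely $|\{0,\ast,1\}^{X}| = 3^{|X|}$. Distinct signatures may collapse to the same vanishing set, or have empty fibres, which only lowers the true count; this is why the bound is phrased as ``at most''. Combined with the first paragraph, this yields the desired partition of $\wdval{\pmdp}$ into at most $3^{|X|}$ maximal graph-consistent sets.

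The only genuine content, and the single place where the simple structure is essential, is the claim that $\vanish[\pmdp]{\val}$ depends on $\val$ solely through this coordinate-wise three-valued abstraction. For an arbitrary polynomial pMDP this would fail: a label could be a higher-degree polynomial vanishing at interior points of $[0,1]$, producing genuinely more than three regions per parameter and breaking the clean $3^{|X|}$ bound. I therefore expect the main thing to verify carefully to be precisely that, for simple pMDPs, the only relevant zeros of the admissible labels $x$ and $1-x$ occur at the boundary values $0$ and $1$; everything else is routine bookkeeping.
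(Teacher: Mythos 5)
Your proof is correct and follows essentially the same route the paper intends: the paper states the lemma without a formal proof, but the surrounding remark about guessing a ``dummy variable assignment'' with values $0$, $\nicefrac{1}{2}$, or $1$ per parameter is exactly your three-valued signature $\{0,\ast,1\}^{X}$, which for simple pMDPs determines the set of vanishing transitions and hence the maximal graph-consistent class. Your closing observation that the simplicity assumption is essential (a general polynomial label could vanish at interior points) is a correct and worthwhile caveat that the paper leaves implicit.
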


\noindent We can now argue that the theorem holds.
\begin{proof}[Proof of Theorem~\ref{thm:easy-cases}]
  If we only consider graph-preserving valuations then the structure remains
  fixed. Hence, the qualitative reachability problems are essentially equivalent to their
  parameter-free counterparts (obtained, e.g., by assigning $\frac{1}{2}$ to
  all parameters) and therefore in \Ptime.

  For positive reachability, we observe that removing transitions is never
  beneficial and (non-empty) graph-preserving valuations are, in that sense, optimal
  for positive reachability --- see Lemma~\ref{lem:more-better}. Hence, the
  positive reachability problems can be decided by considering any
  graph-preserving instantiation (e.g., assigning $\frac{1}{2}$ for
  all parameters). Therefore, for well-defined instantiations, positive
  reachability is in \Ptime.

  When we have a fixed number of parameters, even when ranging over all
  well-defined instantiations, there are only constantly many different
  graph-consistent valuation sets --- see Lemma~\ref{lem:decomposingwdtogp}.
  Consequently, the problem reduces to a constant number of
  problems in \Ptime.
\end{proof}

\noindent
In the sequel we give \NP-lower bounds for the remaining cases.

\subsection{Lower bounds for Boolean valuations}

This hardness result crucially depends on the absence of graph-preserving
instantiations and is inspired by a construction in
\cite{baiercomplexity}.
\begin{proposition} \label{pro:aspects:posreachabilityBoolean}
  The qualitative reachability problems with respect to Boolean valuations are
  \NP-hard even for acyclic simple pMCs.
\end{proposition}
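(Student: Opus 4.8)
The plan is to establish \NP-hardness by a polynomial-time reduction from \textsc{3-Sat}, arranging matters so that the discreteness forced by Boolean valuations carries the combinatorial content. Fix a formula $\phi = C_1 \wedge \dots \wedge C_m$ over variables $x_1,\dots,x_n$ and take the parameter set to be $X = \{x_1,\dots,x_n\}$; a Boolean valuation $\nu \in \mathbb{B}$ is then exactly a truth assignment. The single building block is a \emph{literal check}: a state whose two outgoing edges are labelled $x_i$ and $1-x_i$, so that (since $\nu(x_i)\in\{0,1\}$) precisely one of them survives---the $x_i$-edge survives iff $\nu(x_i)=1$ and the $1-x_i$-edge iff $\nu(x_i)=0$. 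Routing the surviving edge forward and the other to a sink yields a guard that is open under $\nu$ iff a chosen literal $\ell$ (of either polarity) is true. This is exactly where the \emph{absence of graph-preserving valuations} is used: over the graph-preserving set all guards would be simultaneously open and the problem would collapse to the polynomial-time cases of Theorem~\ref{thm:easy-cases}.

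First I would assemble an \emph{existential gadget} $\pmc_\exists$: a chain of clause blocks with entry states $e_1,\dots,e_{m+1}$ where $e_{m+1}=t$, and from each $e_j$ with $j\le m$ a constant three-way split (each branch carrying the admissible rational probability $\tfrac13$) enters one literal check per literal of $C_j$, all forward edges of block $j$ leading to $e_{j+1}$ and all complementary edges to a second absorbing state $b$. A short induction shows that under $\nu$ the state $t$ is reached with positive probability iff each clause has a true literal, i.e. iff $\nu \models \phi$. Dually I would assemble a \emph{conjunctive gadget} $\pmc_\forall$: from the initial state a constant $m$-way split (probabilities $\tfrac1m$) feeds $m$ blocks, where block $j$ is a \emph{series} of literal checks whose forward edges survive iff the respective literal of $C_j$ is \emph{false}; the series reaches $t$ iff the whole clause is violated, and the complementary edges go to $b$. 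Hence in $\pmc_\forall$ the state $t$ is reached with positive probability iff some clause is violated, i.e. iff $\nu \not\models \phi$. Both pMCs are simple, polynomial in $|\phi|$, and acyclic apart from self-loops on the two absorbing states $t,b$ (as permitted, e.g., in Fig.~\ref{fig:parametric:tiny}); and since for simple pMCs $\mathbb{B} \subseteq [0,1]^X = \wdval{\pmc}$, every Boolean valuation is well-defined.

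It then remains to choose which of $t,b$ is the target $\target$, and I claim the four routings cover the four qualitative problems at once; write $\Pr_\nu(\lozenge \target)$ for the reachability probability in $\instantiated{\pmc}{\nu}$. For \emph{positive reachability} use $\pmc_\exists$ with $\target = \{t\}$: then $\Pr_\nu(\lozenge \target) > 0$ for some $\nu$ iff $\phi$ is satisfiable. For \emph{unsure reachability} use $\pmc_\exists$ with $\target = \{b\}$: as every path ends in $t$ or $b$, $\Pr_\nu(\lozenge \target) < 1$ iff the non-target state $t$ is reachable with positive probability, again iff $\phi$ is satisfiable. For \emph{safety} use $\pmc_\forall$ with $\target = \{t\}$: then $\Pr_\nu(\lozenge \target) = 0$ iff no clause is violated, i.e. iff $\phi$ is satisfiable. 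For \emph{almost-sure reachability} use $\pmc_\forall$ with $\target = \{b\}$: then $\Pr_\nu(\lozenge \target) = 1$ iff $t$ is unreachable iff no clause is violated, i.e. iff $\phi$ is satisfiable. Because a pMC admits a unique (trivial) scheduler, the quantifier over schedulers is vacuous, so each equivalence decides the problem uniformly for $\quant \in \{\exists,\forall\}$.

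The routine steps are checking simplicity, acyclicity, and well-definedness of all Boolean valuations, together with the two elementary inductions bounding where positive probability can flow. The one delicate point---and the heart of the result---is that the reduction must genuinely rely on the complementary guards $x_i$ and $1-x_i$ never being open simultaneously; this is guaranteed precisely by $\nu(x_i)\in\{0,1\}$, and it is what separates the Boolean case from the graph-preserving and fixed-parameter cases that Theorem~\ref{thm:easy-cases} places in \Ptime.
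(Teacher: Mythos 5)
Your proof is correct and follows the same overall strategy as the paper's: a reduction from 3SAT whose combinatorial core is the literal gadget with complementary edges labelled $x$ and $1-x$, exactly one of which vanishes under a Boolean valuation. The clause-level gadgets differ, however. The paper introduces auxiliary parameters $y_{i,1},y_{i,2},y_{i,3}$ at each clause state, so that a well-defined Boolean valuation is forced to select a single witness literal per clause; as a consequence the reachability probability is always $0$ or $1$, and the safety, almost-sure and unsure variants follow by merely retargeting the same pMC. You instead use constant $\nicefrac{1}{3}$-splits that probe all three literals of a clause in parallel; this loses the $0$-or-$1$ property (the probability of passing a clause block is $\nicefrac{k}{3}$ with $k$ the number of true literals), and you correctly compensate by building the dual series gadget $\pmc_\forall$, in which reaching $t$ with positive probability witnesses a \emph{violated} clause, to handle safety and almost-sure reachability. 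Your variant has two small advantages: every Boolean valuation is well-defined (in the paper's construction only those Boolean valuations with exactly one $y_{i,j}$ per clause set to $1$ are), and your pMC satisfies the paper's definition of \emph{simple} to the letter, since all outgoing probabilities sum to the constant polynomial $1$, whereas the sums $y_{i,1}+y_{i,2}+y_{i,3}$ in the paper's gadget only sum to $1$ under well-defined valuations. Both arguments are sound and yield polynomial-size, acyclic constructions.
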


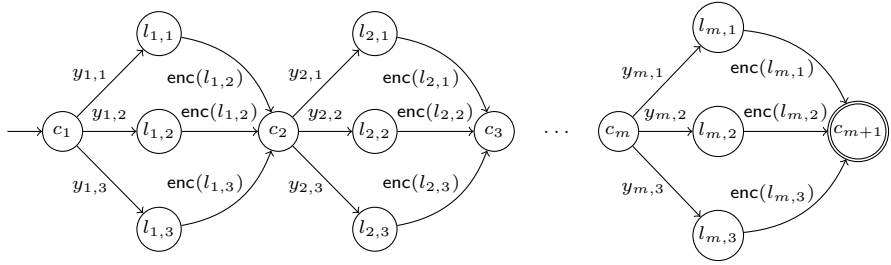
\begin{figure}
  \centering
  \begin{tikzpicture}[scale=1, every node/.style={scale=1, font=\scriptsize},
  st/.style={draw, circle, inner sep=1pt, minimum size=15pt}, initial text=]
\node[st, initial] (c0) {$c_1$};
\node[st, right=0.7cm of c0] (l02) {$l_{1,2}$};
\node[st, above=0.7cm of l02] (l01) {$l_{1,1}$};
\node[st, below=0.7cm of l02] (l03) {$l_{1,3}$};

\node[st, right=1cm of l02] (c1) {$c_2$};
\node[st, right=0.7cm of c1] (l12) {$l_{2,2}$};
\node[st, above=0.7cm of l12] (l11) {$l_{2,1}$};
\node[st, below=0.7cm of l12] (l13) {$l_{2,3}$};

\node[st, right=1cm of l12] (c2) {$c_3$};
\node[right=0.25cm of c2] (c3) {$\hdots$};
\node[st, right=0.25cm of c3] (cm) {$c_m$};
\node[st, right=0.7cm of cm] (lm2) {$l_{m,2}$};
\node[st, above=0.7cm of lm2] (lm1) {$l_{m,1}$};
\node[st, below=0.7cm of lm2] (lm3) {$l_{m,3}$};
\node[st, right=1.1cm of lm2, accepting] (t) {$c_{m+1}$};

\draw[->] (c0) edge node[above left=-1.5mm] {$y_{1,1}$} (l01);
\draw[->] (c0) edge node[above] {$y_{1,2}$} (l02);
\draw[->] (c0) edge node[below left=-1.5mm] {$y_{1,3}$} (l03);

\draw[<-] (c1) edge[bend right] node[pos=0.3,left] {$\enc(l_{1,2})$} (l01);
\draw[<-] (c1) edge node[above] {$\enc(l_{1,2})$} (l02);
\draw[<-] (c1) edge[bend left]  node[pos=0.3,left] {$\enc(l_{1,3})$} (l03);

\draw[->] (c1) edge node[above left=-1.5mm] {$y_{2,1}$} (l11);
\draw[->] (c1) edge node[above] {$y_{2,2}$} (l12);
\draw[->] (c1) edge node[below left=-1.5mm] {$y_{2,3}$} (l13);

\draw[<-] (c2) edge[bend right] node[pos=0.3,left] {$\enc(l_{2,1})$} (l11);
\draw[<-] (c2) edge node[above] {$\enc(l_{2,2})$} (l12);
\draw[<-] (c2) edge[bend left]  node[pos=0.3,left] {$\enc(l_{2,3})$} (l13);

\draw[->] (cm) edge node[above left=-1.5mm] {$y_{m,1}$} (lm1);
\draw[->] (cm) edge node[above] {$y_{m,2}$} (lm2);
\draw[->] (cm) edge node[below left=-1.5mm] {$y_{m,3}$} (lm3);

\draw[<-] (t) edge[bend right] node[pos=0.3,left] {$\enc(l_{m,1})$} (lm1);
\draw[<-] (t) edge node[above] {$\enc(l_{m,2})$} (lm2);
\draw[<-] (t) edge[bend left] node[pos=0.3,left] {$\enc(l_{m,3})$} (lm3);

\end{tikzpicture}
  \caption{pMC construction for \NP-hardness of positive reachability in pMCs.}	
  \label{fig:aspects:positivereachabilitynphard}
\end{figure}

\begin{proof}
  We show a reduction from 3SAT to prove positive reachability is \NP-hard and
  comment on how to adapt the argument for the other problems.
	Let
  \[ \psi \coloneqq c_1 \wedge \dots \wedge c_m \]
  be a given 3SAT-formula, i.e. the clauses $c_i$ are of the form
  \[ c_i = l_{i,1} \vee l_{i,2} \vee l_{i,3}, \]
  where the $l_{i,j}$ are \emph{literals} (variables or negated variables).
  Let $\vars = \{ x_1, \dots, x_k \}$ be the variables of $\psi$.
  The pMC for $\psi$ is outlined in Figure~\ref{fig:aspects:positivereachabilitynphard}. 
  Formally, the pMC $\pmc \coloneqq (S, \init, \pars, \pprob)$ is defined as
  follows: The $4m + 2$ states
  \[
    S \coloneqq \{ c_i \mid 1 \leq i \leq m+1 \} \cup \{ l_{i,j} \mid 1 \leq i
    \leq m, 1 \leq j \leq 3 \} \cup \{ \bot \} \text{ with } \init \coloneqq c_1,
  \]
  $3m + k$ parameters
  \[
    \pars \coloneqq \{ \tilde{x} \mid x \in \vars \} \cup \{ y_{i,j} \mid  1
    \leq i \leq m, 1 \leq j \leq 3 \},  
  \]
  and with transitions
  \begin{align*}
    \pprob(s,s') \coloneqq \begin{cases}
      y_{i,j}     & \text{if }s = c_i, s' = l_{i,j}\text{ for some }1 \leq i \leq m, 1 \leq j \leq 3, \\
      \enc(l_{i,j}) & \text{if }s = l_{i,j}, s' = c_{i+1}\text{ for some }1 \leq i \leq m, 1 \leq j \leq 3, \\
      1-\enc(l_{i,j}) & \text{if }s = l_{i,j}, s' = \bot\text{ for some }1 \leq i \leq m, 1 \leq j \leq 3, \\
      0 & \text{otherwise,} 
    \end{cases}
  \end{align*}
  using
  \[
    \enc(l_{i,j}) \coloneqq \begin{cases}
    \tilde{x} & \text{if }l_{i,j} = x, \\
    1-\tilde{x} & \text{if } l_{i,j} = \overline{x}. 
    \end{cases}
   \]
   The target states are $\target = \{ c_{m+1} \}$. It should be clear that this
   construction can be realised in polynomial time.

  We will argue that $\psi$ is satisfiable if and only if there exists $\dtmc
  \in \generator[\pureval{\pmc}]{\pmc}$ such that
  $\pr{\dtmc}{}{\eventually\target} > 0$.
  Intuitively, the variables $l_{i,j}$ represent the witness literal for each satisfied clause, i.e., the literal that makes the clause true. 
  The parameters $\tilde{x}$ correspond to the $x$ variables in the 3SAT-formula as follows:
  For a valuation $\val$ of variables $\vars$ in $\psi$ and a valuation $\val'$ of $\pars$ such that $\val'(\tilde{x}) = 1 \text{ iff } \val(x) = \true$ it holds that:
  \[ \enc(l_{i,j})[\val']= 1 \iff \val(l_{i,j}) =\true. \]
  Formally, first assume there exists a satisfying assignment $\val$ for $\psi$.  Then, this assignment makes at least one
  literal $l_{i,*}$ in every clause $c_i$ true.  We consider $\val'$ with the
  corresponding $y_{i,*}$ assigned to $1$ and $\tilde{x}$ assigned $1$ iff
  $\val(x) = \true$.  Then, in the MC $\instantiated{\pmc}{\val'}$, there is a path
  from $\init$ to $\target$.
 
 Now assume that there exists an MC $\dtmc \in \generator[\mathbb{B}]{\pmc}$ with a path from $\init$ to $\target$.
 Observe that this path in $\dtmc$ is the only path to the target. 
 We construct a satisfying assignment $\val$ for $\psi$. 
 This path goes through a set of $l_{i,*}$.
 These become the witness literals that make all the clauses true. 
 The assignment to the variables $\vars$ are obtained from the occurrences of
 $\tilde{x}$ along the path, or equivalently, by lookup from the witness
 literals given by the path.
 
 For safety, almost-sure, and unsure reachability, we observe that the
 probability to reach $c_{m+1}$ in $\pmc$ is either zero or one for any
 Boolean valuation so the corresponding proofs are straightforward
 adaptions of the one given above.
\end{proof}

\subsection{Lower bounds for well-defined valuations}
We have argued that positive reachability is in \Ptime. We now show that all
other qualitative problems are \NP-complete. We begin with the almost-sure
reachability and safety problems.
\begin{proposition}[From~\cite{DBLP:conf/rp/Chonev19}]
  \label{pro:complexity:qualitativereach}
	The safety and almost-sure reachability problems are \NP-hard even for
  simple pMCs.
\end{proposition}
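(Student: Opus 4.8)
The plan is to reduce 3SAT to both problems at once, using a single acyclic simple pMC in the spirit of the construction for Proposition~\ref{pro:aspects:posreachabilityBoolean}. The conceptual point—and the reason these problems are hard over \emph{well-defined} valuations whereas positive reachability is not—is that making a designated state \emph{un}reachable forces parameters to the extremes: an intermediate value in $(0,1)$ keeps \emph{both} branches of a coin flip alive and can therefore never disconnect a state. Hence, for reach-probability-zero questions the power of well-defined valuations collapses to that of Boolean ones, which is exactly where the 3SAT-hardness lives.

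Given $\psi = c_1 \wedge \dots \wedge c_m$ over variables $\vars = \{x_1,\dots,x_k\}$, I would build a pMC with one parameter $\tilde{x}$ per variable, reuse the literal encoding $\enc(x)=\tilde{x}$, $\enc(\overline{x})=1-\tilde{x}$, and add two absorbing sinks, a ``good'' sink $G$ and a ``bad'' sink $B$. Each clause $c_i = l_{i,1}\vee l_{i,2}\vee l_{i,3}$ becomes a short forward chain $d_{i,1},d_{i,2},d_{i,3}$: at $d_{i,j}$ (for $j<3$) the transition splits as $\enc(l_{i,j})$ to the entry $d_{i+1,1}$ of the next clause (``this literal satisfies $c_i$'') and $1-\enc(l_{i,j})$ to $d_{i,j+1}$ (``try the next literal''); at $d_{i,3}$ the failure branch $1-\enc(l_{i,3})$ instead leads to $B$, while the success branches of the last clause lead to $G$. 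The initial state is $d_{1,1}$. This is a simple, acyclic pMC of size $O(m)$, built in polynomial time, whose only sinks are $G$ and $B$. The core combinatorial observation is that for a \emph{Boolean} valuation $\val$ (identified with the assignment $\val(\tilde{x})=1 \Leftrightarrow x$ true) the support graph is a single deterministic path that, inside clause~$i$, advances until it hits a true literal (jumping to $c_{i+1}$) or exhausts all three (falling into $B$). Thus $B$ is unreachable under $\val$ iff every clause is satisfied, i.e.\ iff $\val\models\psi$; so a Boolean valuation makes $B$ unreachable iff $\psi$ is satisfiable.

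To lift this to well-defined valuations I would prove a small monotonicity lemma: for any simple pMC and target $\target$, some valuation in $\wdval{\pmc}$ achieves $\pr{\instantiated{\pmc}{\val}}{}{\eventually{\target}}=0$ iff some valuation in $\pureval{\pmc}$ does. One direction is immediate since $\pureval{\pmc}\subseteq\wdval{\pmc}$. For the converse, round any well-defined $\val$ with $\target$ unreachable to a Boolean $\val'$ by pushing each $\val(\tilde{x})\in(0,1)$ to $0$ or $1$ arbitrarily; since every parametric transition is labelled $\tilde{x}$ or $1-\tilde{x}$, the support of $\val'$ is contained in that of $\val$, so the underlying graph of $\instantiated{\pmc}{\val'}$ is a subgraph of that of $\instantiated{\pmc}{\val}$ and $\target$ remains unreachable. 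Combining the two steps, there is a well-defined valuation making $B$ unreachable iff $\psi$ is satisfiable. Because the construction is acyclic with $G,B$ its only sinks, for every valuation we have $\pr{\instantiated{\pmc}{\val}}{}{\eventually{G}}=1$ exactly when $B$ is unreachable, and likewise $\pr{\instantiated{\pmc}{\val}}{}{\eventually{B}}\le 0$ (equivalently $=0$) exactly when $B$ is unreachable. This settles almost-sure reachability (target $G$) and safety (target $B$) simultaneously.

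The routine parts are the gadget itself and the deterministic-path analysis for Boolean valuations. The step that deserves the most care, and is really the crux, is the monotonicity lemma: one must argue cleanly that intermediate parameter values never help to disconnect a target, so that well-defined valuations confer no advantage over Boolean ones for reach-probability-zero, which is what pins the hardness of safety and almost-sure reachability to \textbf{NP} rather than leaving these problems in \Ptime\ as happens for positive reachability.
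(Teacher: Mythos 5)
Your proof is correct, but it takes a genuinely different route from the paper's. The paper deliberately recalls Chonev's construction: a \emph{cyclic} pMC with a ``squaring'' gadget per variable (state $x_i$ is entered with probability $\tilde{x}_i$ and exits towards $\bot$ with probability $1-\tilde{x}_i$), which forces $\val(\tilde{x}_i)\in\{0,1\}$ whenever $\bot$ must be avoided, plus witness-selection parameters $y_{i,j}$ at the clause states; correctness then rests on the two-BSCC identity $\pr{\dtmc}{}{\eventually\target}+\pr{\dtmc}{}{\eventually{\{\bot\}}}=1$. You instead build an acyclic gadget using only the $k$ variable parameters and push all of the Boolean-ness into a separate rounding lemma: in a \emph{simple} pMC, rounding each $\val(\tilde{x})\in(0,1)$ to an arbitrary endpoint can only shrink the support graph, so a target is disconnectable by a well-defined valuation iff it is disconnectable by a Boolean one. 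That lemma is sound precisely because every parametric transition is labelled $x$ or $1-x$ (it would fail for general polynomial labels), and it is the right abstraction of why the reach-probability-zero questions collapse to the Boolean case while positive reachability does not. Your version is leaner --- fewer parameters, acyclic (so it even strengthens the statement to acyclic simple pMCs), and it cleanly isolates the monotonicity argument as a reusable observation. The paper's choice of Chonev's gadget is motivated not by necessity here but by reuse: the same construction is the basis for the quantitative \NP-hardness of $\reachdp{>}_\mathrm{wd}$ in Proposition~\ref{pro:e_reach_gr_np_hard}, where one must reason about actual probability values (the $\nicefrac{2}{3}$ bound) rather than mere reachability in the support graph, and your acyclic gadget would not serve that later purpose as directly.
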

\begin{figure}
  \centering
  	\begin{tikzpicture}[scale=1, every node/.style={scale=1, font=\scriptsize}, st/.style={draw, circle, inner sep=2pt, minimum size=13pt}, initial text=]
		\node[st,initial] (v_0) {$v_0$};
		\node[st, right=5mm of v_0,yshift=12mm] (x_1) {$x_1$};
		\node[st, right=5mm of v_0,yshift=-12mm] (notx_1) {$\overline{x}_1$};
		
		\node[st, right=12mm of v_0,xshift=6pt] (v_1) {$v_1$};
		\node[st, right=5mm of v_1,yshift=12mm] (x_2) {$x_2$};
		\node[st, right=5mm of v_1,yshift=-12mm] (notx_2) {$\overline{x}_2$};
		
		\node[st, right=12mm of v_1,xshift=6pt] (v_2) {$v_2$};
		\node[st, right=5mm of v_2,yshift=12mm] (x_3) {$x_3$};
		\node[st, right=5mm of v_2,yshift=-12mm] (notx_3) {$\overline{x}_3$};
		
		\node[right=11mm of v_2] (dots) {$\ldots$};
		\node[st, right=1mm of dots,yshift=12mm] (x_m) {$x_k$};
		\node[st, right=1mm of dots,yshift=-12mm] (notx_m) {$\overline{x}_k$};
		
		\node[st, right=11mm of dots] (v_m) {$v_k$};
		
		\node[st, right=of v_m, xshift=10mm,yshift=16mm] (phi_1) {$c_1$};
		\node[st, right=of v_m, xshift=10mm,yshift=-16mm] (phi_k) {$c_m$};
		\node[right=of v_m, xshift=14mm,rotate=90,anchor=center] (dots2) {$\dots$};
		
		\node[left=of phi_1, xshift=-6mm,yshift=12mm,rotate=90,anchor=center] (literals_1) {to literals};
		\node[left=of phi_k, xshift=-6mm,yshift=-12mm,rotate=90,anchor=center] (literals_k) {to literals};
		
		\node[st,above=of v_2,yshift=16mm] (bot1) {$\bot$};
		\node[st,below=of v_2,yshift=-16mm] (bot2) {$\bot$};
		
		\node[st,accepting,above=of v_m,yshift=16mm] (T) {$T$};
		
		%%%%%%%%%%%%%%%%
		
		\draw[->] (v_0) -- node[sloped, anchor=center,above] {$\tilde{x}_1$} (x_1);
		\draw[->] (v_0) -- node[sloped, anchor=center,below] {$1-\tilde{x}_1$} (notx_1);
		\draw[->] (x_1) -- node[sloped, anchor=center,above] {$\tilde{x}_1$} (v_1);
		\draw[->] (notx_1) -- node[sloped, anchor=center,below] {$1-\tilde{x}_1$} (v_1);
		
		\draw[->] (v_1) -- node[sloped, anchor=center,above] {$\tilde{x}_2$} (x_2);
		\draw[->] (v_1) -- node[sloped, anchor=center,below] {$1-\tilde{x}_2$} (notx_2);
		\draw[->] (x_2) -- node[sloped, anchor=center,above] {$\tilde{x}_2$} (v_2);
		\draw[->] (notx_2) -- node[sloped, anchor=center,below] {$1-\tilde{x}_2$} (v_2);
		
		\draw[->] (v_2) -- node[sloped, anchor=center,above] {$\tilde{x}_3$} (x_3);
		\draw[->] (v_2) -- node[sloped, anchor=center,below] {$1-\tilde{x}_3$} (notx_3);
		
		\draw[->] (x_m) -- node[sloped, anchor=center,above] {$\tilde{x}_k$} (v_m);
		\draw[->] (notx_m) -- node[sloped, anchor=center,below] {$1-\tilde{x}_k$} (v_m);
		
		\draw[->] (v_m) -- node[sloped, anchor=center,above] {$\nicefrac{1}{m+1}$} (phi_1);
		\draw[->] (v_m) -- node[sloped, anchor=center,below] {$\nicefrac{1}{m+1}$} (phi_k);
		
		\draw[->] (phi_1) edge[bend left=30] node[sloped, anchor=center,above] {$y_{1,1}$} (literals_1);
		\draw[->] (phi_1) edge[bend right=30] node[sloped, anchor=center,above] {$y_{1,3}$} (literals_1);
		\draw[->] (phi_1) edge[bend left=0] node[sloped, anchor=center,above] {$y_{1,2}$} (literals_1);
		
		\draw[->] (phi_k) edge[bend left=30] node[sloped, anchor=center,below] {$y_{m,1}$} (literals_k);
		\draw[->] (phi_k) edge[bend right=30] node[sloped, anchor=center,below] {$y_{m,3}$} (literals_k);
		\draw[->] (phi_k) edge[bend left=0] node[sloped, anchor=center,below] {$y_{m,2}$} (literals_k);
		
		\draw[->] (x_1) -- node[sloped, anchor=center,above] {$1-\tilde{x}_1$} (bot1);
		\draw[->] (x_2) -- node[sloped, anchor=center,above] {$1-\tilde{x}_2$} (bot1);
		\draw[->] (x_3) -- node[sloped, anchor=center,below] {$1-\tilde{x}_3$} (bot1);
		\draw[->] (x_m) -- node[sloped, anchor=center,above] {$1-\tilde{x}_k$} (bot1);
		
		\draw[->] (notx_1) -- node[sloped, anchor=center,above] {$\tilde{x}_1$} (bot2);
		\draw[->] (notx_2) -- node[sloped, anchor=center,above] {$\tilde{x}_2$} (bot2);
		\draw[->] (notx_3) -- node[sloped, anchor=center,above] {$\tilde{x}_3$} (bot2);
		\draw[->] (notx_m) -- node[sloped, anchor=center,above] {$\tilde{x}_k$} (bot2);
		
		\draw[->] (v_m) -- node[left] {$\nicefrac{1}{m+1}$} (T);
		
		\draw[->] (bot1) edge [loop right] node[right] {$1$} (bot1);
		
		\draw[->] (bot2) edge [loop right] node[right] {$1$} (bot2);
	\end{tikzpicture}	
		
  \caption{pMC construction for \NP-hardness of almost-sure reachability in pMCs.}
  \label{fig:aspects:chonevconstruction}
\end{figure}
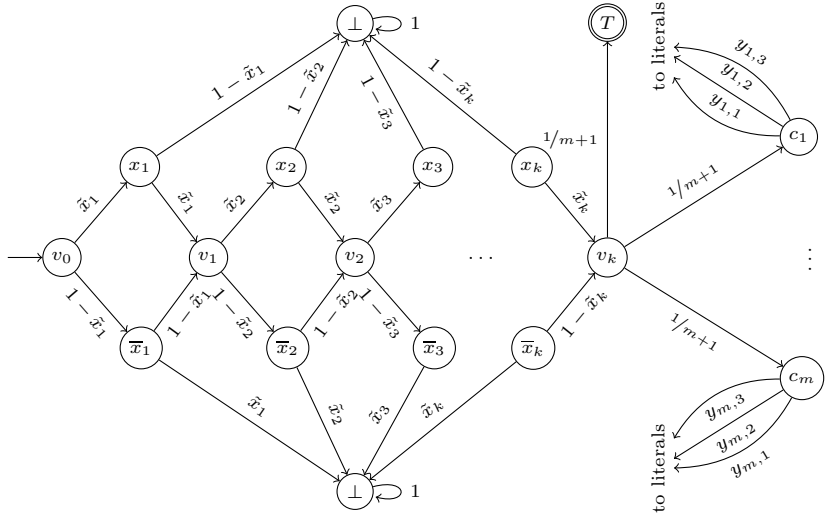

We deliberately recall the proof from~\cite{DBLP:conf/rp/Chonev19}
rather than adapting the construction used to prove
Proposition~\ref{pro:aspects:posreachabilityBoolean}, as the former is a
crucial step towards Proposition~\ref{pro:e_reach_gr_np_hard}.
The essential idea here is to enforce Boolean
valuations.

\begin{proof}
  We reduce from 3SAT once more. To that end, let \( \psi = c_1 \wedge \dots
  \wedge c_m \) be a given 3SAT-formula with clauses $c_i$ of the form
  \( c_i = l_{i,1} \vee l_{i,2} \vee l_{i,3} \) and variable set $\vars = \{ x_1, \dots,
  x_\nrvars \}$.
	The pMC for $\psi$ is
  outlined in Figure~\ref{fig:aspects:chonevconstruction}, where state $\bot$ is duplicated to avoid clutter.  Formally, the pMC
  $\pmc = (S, \init, \pars, \pprob)$ is defined as follows:
  \[
    S \coloneqq
    \{v_i \mid 0 \leq i \leq k\}~\uplus~
    \{x_i, \overline{x_i} \mid 1 \leq i \leq k\}~\uplus~
    \{c_i \mid 1 \leq i \leq m\}~\uplus~
    \{T, \bot\}
  \]
  are the $3k + m + 3$ states, $v_0 = \init$ is the initial state, $T$ and
  $\bot$ indicate target and sink respectively,
  \[
    \pars \coloneqq \{\tilde{x} \mid x \in \vars \}~\cup~\{y_{i,j} \mid 1 \leq
    i \leq m, 1 \leq j \leq 3 \}
  \]
  are the $k + 3m$ parameters, for all $1 \leq i \leq m$ and $1 \leq j \leq k$
  we define the transition probabilities as
  \begin{align*}
    &\pprob(v_{i-1}, x_i) \coloneqq \tilde{x}_i, && \pprob(v_{i-1}, \overline{x_i}) \coloneqq 1-\tilde{x}_i, \\
    &\pprob(x_i, v_i) \coloneqq \tilde{x}_i, && \pprob(\overline{x_i}, v_i) \coloneqq 1-\tilde{x}_i, \\
    &\pprob(x_i, \bot) \coloneqq 1-\tilde{x}_i, && \pprob(\overline{x_i}, \bot) \coloneqq \tilde{x}_i, \\
    &\pprob(v_k, c_i) \coloneqq \frac{1}{m+1}, && \pprob(v_k, T) \coloneqq
    \frac{1}{m+1}, \\
    &\pprob(c_j, x_i) \coloneqq y_{j,r} \text{ if } l_{j,r} = x_i \text{ (in }\psi\text{)}, && \pprob(c_j, \overline{x}_i) \coloneqq y_{j,r} \text{ if } l_{j,r} = \overline{x}_i \text{ (in }\psi\text{)}.
  \end{align*}
  We let $\pprob(s,t) = 0$ for each pair $(s,t)$ of states not specified above.

  Observe that under any well-defined valuation, there are exactly two bottom
  strongly connected components, namely $\bot$ and $\target$. As a
  consequence:
  \begin{equation} \label{eq:gadget:twoendcomponents}
    \text{for all } \dtmc \in \generator{\pmc}:\; \quad
    \pr{\dtmc}{}{\eventually\target} + \pr{\dtmc}{}{\eventually\{ \bot \}} =
    1.
  \end{equation}
  
  We will argue that $\psi$ is satisfiable if and only if there exists $\dtmc
  \in \generator{\pmc}$ such that $\pr{\dtmc}{}{\eventually\target} \geq 1$.
  For convenience, we write $1$ and $0$ instead of $\true$ and $\false$
  respectively.

  First, assume $\psi$ is satisfiable.  Choose some satisfying assignment
  $\val$ for $\psi$. We construct $\val' \in \wdval{\pmc}$ in two steps.
  First, let  $\val'(\tilde{x}_i) = \val(x_i) \in \{ 0, 1 \}$ for all $1 \leq
  i \leq k$. Thus, $\bot$ is unreachable. Second, for each clause $c_i$,
  select one literal $l_{i,j}$ which makes $c_i$ true, and set $\val'(y_{i,j})
  = 1$. Set all other $y_{i,j}$ to $0$. It follows from
  Equation~\eqref{eq:gadget:twoendcomponents} that
  $\pr{\pmc[\val']}{}{\eventually\target} = 1$.

  Now, assume there is a well-defined valuation $\val'$ such that
  $\pr{\instantiated{\pmc}{\val'}}{}{\eventually\target} \geq 1$. Then, using
  Equation~\eqref{eq:gadget:twoendcomponents}, no path leads to $\bot$.  For
  $\tilde{x}_i$, that means that necessarily $\val'(\tilde{x}_i) \in \{ 0, 1
  \}$. Note that $\val'$ must be such that we can choose for each $c_i$ a literal
  $l_{i,j}$ (a $y_{i,j}$ set to $1$) which surely reaches $v_k$ again.
  These $l_{i,j}$ are exactly the witness literals making every clause
  true. It follows that the assignment for $\tilde{x}$ gives rise to a satisfying
  valuation $\val$ for $\psi$. 
 
  To conclude we observe that the construction can be easily adapted to show
  \NP-hardness of the safety problem.
\end{proof}

To close this section we describe how to adapt the construction used to prove
Proposition~\ref{pro:aspects:posreachabilityBoolean} in order to show the
unsure reachability problem is also \NP-hard.
\begin{proposition} \label{pro:aspects:unsurereachability}
	The unsure reachability problems are \NP-hard even for
  simple pMCs.
\end{proposition}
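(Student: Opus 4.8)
The plan is to reduce from 3SAT by \emph{adapting} the pMC of Figure~\ref{fig:aspects:positivereachabilitynphard}, making only two structural changes: \emph{close the clause chain into a cycle} and \emph{reinterpret the rejection sink as the target}. Concretely, the literal states of the last clause lead back to $c_1$ instead of to a fresh state $c_{m+1}$, and the former sink $\bot$ becomes the unique target $\target$ (with a self-loop). As before, from each clause state $c_i$ we branch to the three literal states $l_{i,j}$ according to the $y_{i,j}$ (realised by a binary tree of $\{y,1-y\}$ splits so that the pMC stays \emph{simple}), and from $l_{i,j}$ we continue to $c_{i+1}$ (indices modulo $m$) with probability $\enc(l_{i,j})$ and move to $\target$ with probability $1-\enc(l_{i,j})$, where $\enc(x)=\tilde x$ and $\enc(\overline x)=1-\tilde x$. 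The statement to prove is that $\psi$ is satisfiable iff there is a valuation $\val\in\wdval{\pmc}$ with $\pr{\instantiated{\pmc}{\val}}{}{\eventually\target}<1$; as the witnessing valuation will be Boolean, this single reduction settles both the well-defined and the Boolean column of Table~\ref{tab:qualcomplexity}.

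For the forward direction I would start from a satisfying assignment, set each $\tilde x_i$ to the corresponding Boolean value, and for every clause place all outgoing mass of $c_i$ on one literal $l_{i,j}$ that the assignment makes true (i.e.\ $y_{i,j}=1$). Then $\enc(l_{i,j})=1$, so the edge $l_{i,j}\to\target$ \emph{vanishes} and the walk deterministically cycles $c_1\to c_2\to\dots\to c_m\to c_1$ forever, never reaching $\target$; hence $\pr{\instantiated{\pmc}{\val}}{}{\eventually\target}=0<1$. This is precisely the mechanism separating well-defined from graph-preserving valuations: a boundary valuation may vanish edges and thereby \emph{create} a bottom SCC avoiding $\target$, whereas under any graph-preserving valuation every $l_{i,j}$ retains a positive-probability edge to $\target$ and, since the model otherwise loops, $\target$ is reached almost surely. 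Thus on the interior the reachability probability always equals $1$, matching the polynomial-time verdict of Theorem~\ref{thm:easy-cases} for graph-preserving valuations and confirming that the hardness genuinely relies on edge-vanishing.

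For the converse I would use that, in a finite MC, $\pr{\instantiated{\pmc}{\val}}{}{\eventually\target}<1$ holds iff some bottom SCC $B$ avoiding $\target$ is reachable from $c_1$. Since $\target$ is the only sink, $B\subseteq\{c_i,l_{i,j}\}$, and no literal edge leaving $B$ may leak to $\target$: every $l_{i,j}\in B$ with $y_{i,j}>0$ must satisfy $\enc(l_{i,j})=1$ and lead to $c_{i+1}\in B$. Following the cycle then shows that $B$ contains all clause states $c_1,\dots,c_m$, so every clause has a witness literal with $\enc=1$. Reading off $\tilde x_i\in\{0,1\}$ from these witnesses yields a Boolean assignment; it is automatically \emph{consistent}, because a single parameter $\tilde x_i$ cannot meet both $\tilde x_i=1$ (demanded by a positive witness) and $\tilde x_i=0$ (demanded by a negative witness), and by construction it satisfies $\psi$.

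The main obstacle I anticipate is the converse: one must argue carefully that $\pr{\instantiated{\pmc}{\val}}{}{\eventually\target}<1$ forces a \emph{single} bottom SCC wrapping around the \emph{entire} cycle (so that every clause is covered), and that the vanishing conditions $\enc(l_{i,j})=1$ pin down a globally consistent truth assignment rather than locally contradictory literal choices — the cyclic topology is exactly what makes both facts hold. A secondary, routine obstacle is realising the three-way clause branch and the back-edges within the restricted syntax of simple pMCs and verifying that the construction is polynomial. Extending the argument to the remaining unsure-reachability variants is then immediate, since for pMCs the scheduler quantifier is vacuous.
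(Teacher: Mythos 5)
Your proposal is correct and is essentially the paper's own proof: the paper likewise reuses the pMC of Figure~\ref{fig:aspects:positivereachabilitynphard}, closes the clause chain into a cycle via a probability-one edge from $c_{m+1}$ back to $c_1$ (you merge the two states, which is equivalent), and declares $\bot$ the sole target, with the same ``satisfiable iff no probability leaks to $\bot$, otherwise $\bot$ is the only reachable BSCC and is hit almost surely'' dichotomy. Your explicit bottom-SCC analysis of the converse and the binary-tree realisation of the three-way clause branch are careful elaborations of steps the paper leaves as a sketch.
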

\begin{proof}[Proof sketch]
  We reuse the pMC from Proposition~\ref{pro:aspects:posreachabilityBoolean} depicted in Figure~\ref{fig:aspects:positivereachabilitynphard}
  and extend it with a transition from $c_{m+1}$ to $c_1$ with probability
  $1$. Further, we set $\bot$ as the only target state.

  Observe that if $c_{m+1}$ is reached with probability $1$ then no
  probability `leaks' to the $\bot$-state. Hence the target is reached with
  probability $0$. Otherwise, the target is the only bottom strongly
  connected component and the probability to reach that becomes $1$. The
  result thus follows from an almost identical argument as the one given for
  Proposition~\ref{pro:aspects:posreachabilityBoolean}.
\end{proof}

\section{Quantitative Reachability Problems}
\begin{table}[tb]
\centering
\setlength{\tabcolsep}{3pt}
\begin{tabular}{|ll|c|c|c|}
\hline
           &                      & Fixed \#  &
\multicolumn{2}{c|}{Arbitrary \# parameters} \\
\cline{4-5}
& & parameters & well-defined & graph-preserving \\
\hline
\hline
\parbox[t]{3.5mm}{\multirow{5}{*}{\rotatebox[origin=c]{90}{pMC}}} &
  $\reachdp{\geq/\leq}$ & \makecell{in
  \Ptime\\\refsize{Thm~\ref{thm:aspects:ptimepmcs}}} &
  \multicolumn{2}{c|}{--- \ETR-complete
  \refsize{Thm~\ref{thm:etr:pmcsarehard}} ---} \\
& $\reachdp{>}$ & '' &
  \makecell{\NP-hard\\\refsize{Prop~\ref{pro:e_reach_gr_np_hard}}} &
  \makecell{$\reachdp{>}_\mathrm{wd}$-complete\\\refsize{Prop~\ref{pro:e_reach_less_red_e_reach_less_gp},
  Prop~\ref{pro:e_reach_gp_red_e_reach}}} \\
& $\reachdp{<}$ & '' &
  \makecell{\NP-hard\\\refsize{Prop~\ref{pro:e_reach_gr_np_hard}}}
&\makecell{$\reachdp{>}_\mathrm{wd}$-complete\\\refsize{Prop~\ref{pro:complexity:restrictrelations}}}
  \\
\hline
\parbox[t]{3.5mm}{\multirow{6}{*}{\rotatebox[origin=c]{90}{pMDP}}} &
  $\exists\reachdp{\geq/\leq}$ & \makecell{in \NP\\
  \refsize{Prop~\ref{pro:fp_ee_reach_in_np}}} & \multicolumn{2}{c|}{---
  \ETR-complete~\refsize{(trivial)} ---}  \\
& $\exists\reachdp{>}$ & '' & \multicolumn{2}{c|}{---
  $\reachdp{>}_\mathrm{wd}$-complete
  \refsize{Prop~\ref{pro:existwdgeeqexistwdge}, Prop~\ref{pro:eeggpeqerggwd}} ---}
  \\
& $\exists\reachdp{<}$  & '' &
  \makecell{$\reachdp{<}_\mathrm{wd}$-complete\\\refsize{Prop~\ref{pro:existwdgeeqexistwdge}}}
  &  \makecell{ $\reachdp{>}_\mathrm{wd}$-hard \\\refsize{(trivial)}}\\
& $\forall\reachdp{\bowtie}$ & \makecell{in
  \NP~\\\refsize{Thm~\ref{thm:fp_ea_reach_in_np}}} & \multicolumn{2}{c|}{---
  \ETR-complete \refsize{Thm~\ref{thm:etr:mdpsarehard}} ---}\\
\hline 
\end{tabular}
\caption{The complexity landscape for \emph{quantitative} reachability in
simple pMDPs. All problems are in \ETR.}

\label{tab:complexity}
\end{table}

Table~\ref{tab:complexity} summarises the results we will present in this
section. 
We use the following notation for conciseness:
For $\quant \in \{\exists,\forall\}$ and 
$\mathrm{\bowtie} \in
\{\leq, <, >, \geq\}$, let
\[ 
  \quant\reachdp{\bowtie}_{\mathrm{wd}} \stackrel{\mathrm{def}}{\iff}
  \exists\;\val \in \wdval{\pmdp},
  \quant\;\sched \in \DMSched{\pmdp}:\;
    \pr{\instantiated{\pmdp}{\val}}{\sched}{\eventually\target} \bowtie \frac{1}{2}
\]
be the \emph{quantitative reachability problems}. We write
$\quant\reachdp{\bowtie}_{\mathrm{gp}}$ whenever we consider graph-preserving
instantiations. We write $\quant\reachdp{\bowtie}_{*}$ to denote both
the $\mathrm{wd}$ and $\mathrm{gp}$ variants. Furthermore, if $\pmdp$ is a pMC
we omit the quantifier, e.g. $\reachdp{<}_{*}$.

\paragraph{Fixed threshold}
\label{page:fixedthreshold}
Note that we have fixed a threshold of
$\nicefrac{1}{2}$. This is without loss of generality as any \emph{given}
rational threshold $\lambda$ may be reduced to $\nicefrac{1}{2}$: Simply
prepend a transition with probability $\nicefrac{1}{2}$ to the initial state,
one with probability $\nicefrac{1}{2}(1-\lambda)$ to the target state and a
third one with probability $\nicefrac{1}{2}\lambda$ to a sink state. Then it
can be readily checked that the reachability probability in the original model compares to $\lambda$ in some desired way iff it compares to $\nicefrac{1}{2}$ in the modified model.

We first show that well-defined and graph-preserving sets of valuations are
\emph{semialgebraic}, i.e., they can be described by an ETR formula. Then we give a detailed account on how to encode the
reachability problems for pMCs into the ETR.  First, we consider reachability
probabilities and the easier case of graph-preserving valuation subsets,
then in general for well-defined valuation subsets.
We then show the lifted encodings to pMDPs.

\subsection{ETR encoding for pMCs}
Below, we show that sets of all well-defined or graph-preserving valuations
are indeed semialgebraic.
The following set of constraints is a natural encoding of Definition~\ref{def:parametric:welldefined}.
\begin{constraints}[Well-defined sets of valuations]
\label{constraints:aspects:encoding:polynomialwelldefined}
The following constraints capture well-defined valuations for a polynomial pMDP $\pmdp$:
\begin{alignat*}{3}
   & 0 \cleq \pprob(s,\act,s')  \cleq 1 & & \text{ for all }s,s' \in S, \act \in \EnAct(s)\quad\text{(with }\pprob(s,\act,s') \neq 0\text{)}, \\
   & \sum_{s'\in S} \pprob(s,\act,s') = 1 & & \text{ for all }s \in S,  \act \in \EnAct(s).
\end{alignat*}	
We denote the corresponding formula for this constraint system with $\regenc{\pmdp}{\mathrm{wd}}$.
\end{constraints}
The constraints ensure that (1)~all (non-zero) transitions are evaluated to a
probability, and (2)~transition probabilities describe  distributions. It follows that the set of well-defined
valuations of some $\pmdp$ is semialgebraic.
\begin{example}
\label{ex:parametric:wdregion}
  Recall the pMC $\pmc$ for the Knuth-Yao die from
  Figure~\ref{fig:parametric:pkydiepmc}, with the well-defined valuations as
  in Example~\ref{ex:parametric:welldefined}.
  We have:
  \begin{alignat*}{2} \regenc{\pmc}{\mathrm{wd}} &=~& & x \cgeq 0 \,\land\, 1-x \cgeq 0  \, \land \, x+1-x \ceq 1 \\
      & \ \land~& & y \cgeq 0 \, \land \, 1-y \cgeq 0 \, \land \, y+1-y \ceq 1. 
  \end{alignat*}
  This formula simplifies to $0 \cleq x \cleq 1 \land 0 \cleq y \cleq 1$.

  Now recall the pMDP $\pmdp$ for rock-paper-scissors from
  Figure~\ref{fig:parametric:rpspmdp}, with the well-defined valuations as in
  Example~\ref{ex:parametric:welldefined}.  We get:
  \begin{alignat*}{2}
    \regenc{\pmdp}{\mathrm{wd}} &=~& &  x_R \cgeq 0 \,\land\, x_P \cgeq 0  \, \land \, x_S \cgeq 0 \, \land \, x_R+x_P+x_S \ceq 1 \\
      & \ \land~& &  x'_R \cgeq 0 \,\land\, x'_P \cgeq 0  \, \land \, x'_S \cgeq 0 \, \land \, x'_R+x'_P+x'_S \ceq 1.
  \end{alignat*}
\end{example}
This encoding is easily extended with strict inequalities to describe graph-preserving valuations, based on Definition~\ref{def:parametric:graph_preservation}.

We now move to the more interesting question of how to actually encode
reachability.  We start with pMCs, which we consider extensively as the ideas
for pMDPs are mostly straightforward extensions. 

\subsubsection{Qualitative analysis}
Before we treat quantitative problems, we start with the qualitative ones.
\begin{definition}
  Let $\pmc$ be a pMC. The \emph{zero-states} for valuation $\val \in
  \wdval{\pmc}$ are \[ \pzerostates[\target]{\val} \coloneqq \{ s \in S
  \mid \prsol[s]{\target}{\pmc}[\val] = 0 \} \] containing the states that
  reach the target with probability zero in instantiation $\val$ and the
  \emph{one-states} for valuation $\val \in \wdval{\pmc}$ is the set \[
  \ponestates[\target]{\val} \coloneqq \{ s \in S \mid
  \prsol[s]{\target}{\pmc}[\val] = 1 \} \] containing all states that reach the
  target almost surely.
\end{definition}
These sets vary for different valuations. However,
for any $\val, \val' \in \wdval{\pmc}$, \[ \vanish[\pmc]{\val} \subseteq
\vanish[\pmc]{\val'}\text{ implies }\pzerostates[\target]{\val}
\subseteq \pzerostates[\target]{\val'},\] and \[ \vanish[\pmc]{\val} =
\vanish[\pmc]{\val'}\text{ implies }\pzerostates[\target]{\val} =
\pzerostates[\target]{\val'}.\]
Essentially, removing transitions may cut states from having a path to the
target states, but never adds new paths. 

\paragraph{Computing the sets}
Proposition~\ref{pro:encoding:qualitativegraphproperty} justifies the notation
$\pzerostates[\target]{\region}$ and $\ponestates[\target]{\region}$ for graph-consistent $\region$ as being
the (unique) sets $\pzerostates[\target]{\val}$, $\ponestates[\target]{\val}$
for any $\val \in \region$, respectively. Crucially, for any fixed
graph-consistent valuation set, the sets $\pzerostates[\target]{\val}$ and
$\ponestates[\target]{\val}$ may be computed on the parameter-free
$\instantiated{\pmc}{\val}$.  However, when regarding a non-graph-consistent
valuation set, this does not necessarily suffice. 
The essential idea is to encode the graph-algorithm together
with a ranking function.

\subsubsection{Quantitative analysis}
We move from the qualitative setting to a quantitative one. The principle is
again to generalise the parameter-free case.  We develop the
encoding in two steps: First, we consider an encoding only valid for
graph-preserving valuations. In particular, it requires the zero-states to be
known a-priori. Later, we combine this encoding with the earlier qualitative
encodings to compute the zero-states on the fly.

\paragraph{Graph-preserving case}
We lift the classical equation system for parameter-free MCs to polynomial pMCs. 

\begin{constraints}
  Let $\pmc$ be a polynomial pMC. 
  We assume a graph-preserving valuation set $\region \subseteq \gpval{\pmc}$.
  Consider real variables $\{ p_s \mid s \in S \}$ and variables for the parameters $\pars$ of $\pmc$:
  \label{const:pmcequations}
  \begin{align*}
  & p_s \ceq 1 & &\text{for all } s \in \target, \\
  & p_s \ceq 0 & &\text{for all } s \in \pzerostates[\target]{\region}, \\
  & p_s \ceq \sum_{s' \in S} \pprob(s,s') \cdot p_{s'} & &\text{for all } s \in S \setminus \big(\target \cup \pzerostates[\target]{\region}).
\end{align*}
We denote the corresponding formula with $\regenc{\pmc}{\mathrm{gp}}$.
\end{constraints}
Note that the constraints do not actually depend on $\region$, only the fact that $\region$ is graph-preserving matters. The constraints are essentially identical to those for parameter-free MCs. 
The key difference is that the transition probabilities are no longer constants.
Therefore (in general\footnote{The notable exceptions are systems where the parameters only occur in states where \emph{all} successor states are sink- or target-states.}) the encodings are non-linear.

Recall that we have to restrict the parameter valuations accordingly and
encode that the induced probability in the initial state compares $\bowtie
\nicefrac{1}{2}$.  We add these constraints and obtain:
\begin{theorem} \label{thm:encoding:pmcgp}
  Let $\pmc$ be a polynomial pMC with target states $\target$ and let $\region
  \subseteq \gpval{\pmc}$ be a semialgebraic set given by $\psi_R$. We define
	\[ \psi \coloneqq \regenc{\pmc}{\mathrm{gp}} \land p_\init \bowtie \nicefrac{1}{2} \land \psi_R. \] 
  Then, for all $\val \in \Val$,
  \[
    \val \text{ satisfies } \psi \text{ iff }
    \val \in R \land
    \pr{\instantiated{\pmc}{\proj{\val}{\pars}}}{}{\eventually\target} \bowtie
    \nicefrac{1}{2}.
  \]
\end{theorem}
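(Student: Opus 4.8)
The plan is to exploit the fact that, for a fixed graph-preserving parameter valuation, the constraint system $\regenc{\pmc}{\mathrm{gp}}$ has a \emph{unique} solution in the $p$-variables, and that this solution is precisely the vector of reachability probabilities. The biconditional then splits along the conjunction defining $\psi$: the conjunct $\psi_R$ accounts verbatim for the requirement $\val \in R$ on both sides, so it suffices to show that, whenever $\proj{\val}{\pars} \in R$, the remaining conjunct $\regenc{\pmc}{\mathrm{gp}} \land p_\init \bowtie \nicefrac{1}{2}$ is satisfiable by, and forces, $p_\init = \pr{\instantiated{\pmc}{\proj{\val}{\pars}}}{}{\eventually\target}$.

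First I would fix a valuation $\val$ whose parameter restriction lies in $R$ and record that, since $R \subseteq \gpval{\pmc}$ is graph-preserving, it is in particular graph-consistent; by Proposition~\ref{pro:encoding:qualitativegraphproperty} (and the justification of the notation $\pzerostates[\target]{\region}$) the set appearing in the constraints is exactly the set of zero-states of the instantiated MC $\instantiated{\pmc}{\proj{\val}{\pars}}$, independently of which valuation in $R$ was chosen. This is the only place where graph-preservation is used, and it is essential: it guarantees that the same linear system correctly zeroes out all and only the states that reach $\target$ with probability $0$.

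The central step is then the classical characterisation of reachability probabilities in finite MCs~\cite{BK08,Put94}: once the target states are set to $1$ and the zero-states to $0$, the remaining equations $p_s = \sum_{s'} \pprob(s,s')[\val] \cdot p_{s'}$ admit a unique solution, namely $p_s = \prsol[s]{\target}{\pmc}[\proj{\val}{\pars}]$. I would recall the standard argument: writing the non-trivial equations in matrix form $p = A\,p + b$, where $A$ is the sub-stochastic matrix obtained by restricting to $S \setminus (\target \cup \pzerostates[\target]{\region})$, every such state reaches $\target \cup \pzerostates[\target]{\region}$ with positive probability, so the spectral radius of $A$ is strictly below $1$; hence $I - A$ is invertible and the solution $p = (I-A)^{-1} b$ is unique and coincides with the reachability-probability vector.

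With uniqueness in hand, both directions are immediate. For the forward direction, any $\val \models \psi$ satisfies $\psi_R$, hence $\val \in R$, and its $p$-component solves $\regenc{\pmc}{\mathrm{gp}}$; by uniqueness $p_\init = \pr{\instantiated{\pmc}{\proj{\val}{\pars}}}{}{\eventually\target}$, and the conjunct $p_\init \bowtie \nicefrac{1}{2}$ transfers the comparison to the reachability probability. Conversely, given $\val \in R$ with the reachability probability comparing $\bowtie \nicefrac{1}{2}$, assigning each auxiliary variable $p_s := \prsol[s]{\target}{\pmc}[\proj{\val}{\pars}]$ satisfies $\regenc{\pmc}{\mathrm{gp}}$ by the characterisation above, satisfies $p_\init \bowtie \nicefrac{1}{2}$ by assumption, and satisfies $\psi_R$ since $\val \in R$; thus $\val$, extended by these values for the auxiliary variables, satisfies $\psi$. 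I expect the main obstacle to be the uniqueness argument, in particular pinning down why zeroing exactly $\pzerostates[\target]{\region}$ (rather than leaving those variables free) is both necessary and sufficient for the solution to be unique; graph-preservation is precisely what rules out valuation-dependent zero-states that would otherwise introduce spurious free variables.
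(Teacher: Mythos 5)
Your proposal is correct and follows exactly the route the paper intends: the paper states Theorem~\ref{thm:encoding:pmcgp} without an explicit proof, presenting Constraints~\ref{const:pmcequations} as a direct lifting of the classical linear-equation characterisation of reachability probabilities in parameter-free MCs, and your argument simply fills in that standard reasoning (graph-preservation fixes the zero-state set via Proposition~\ref{pro:encoding:qualitativegraphproperty}, the substochastic matrix $I-A$ is invertible, hence the $p$-solution is unique and equals the reachability vector, and both directions of the biconditional follow). No gaps; the only caveat is the paper's own notational looseness about whether $\val$ assigns the auxiliary $p_s$ variables, which you resolve in the natural way by extending the parameter valuation with the unique solution.
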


\paragraph{Well-defined case}
We extend the encoding to any well-defined valuation set.  An
essential assumption before was that the set of zero-states is fixed and may
be precomputed.  This assumption is no longer valid.  We thus encode the
computation of the zero-states using the encoding for positive reachability.
\begin{constraints}
 Let $\pmc$ be a polynomial pMC with states $S$.
Consider Boolean variables $\{ q_s \mid s \in S \}$, real variables $\{ p_s, r_s \mid s \in S \}$, and variables for the parameters:
	\begin{align*}
	& p_s = 1 & & \text{for all } s \in \target, \\
	& q_s \text{ is true} & &\text{for all } s \in \target, \\
	& q_s \leftrightarrow \bigvee_{s'\in S} \left( \pprob(s,s') \cgt 0 \land \left( q_{s'} \land r_{s} \clt r_{s'} \right) \right)	 \quad & &\text{for all } s \in S \setminus \target, \\
	& \neg q_s \rightarrow p_s = 0 & & \text{for all } s \in S \setminus \target, \\
	& q_s \rightarrow p_s \ceq \sum_{s' \in S} \pprob(s,s') \cdot p_{s'} & &\text{for all } s \in S \setminus \target .
\end{align*}
We denote the corresponding formula with $\regenc{\pmc}{\mathrm{wd}}$.
\end{constraints} The meaning of the variables is as before: The variables
$q_s$ determine whether we have to compute the non-zero probability to the
target or whether this probability is zero.  The $r_s$ variables are auxiliary
variables ranking the states.  The specialised constraints for the
graph-preserving case (Constraints~\ref{const:pmcequations}) are obtained by
setting all variables of non-zero states $q_s$ to $\true$.  The following
theorem is the analogue to Theorem~\ref{thm:encoding:pmcgp}.

\begin{theorem} \label{thm:encoding:pmcwd}
  Let $\pmc$ be a polynomial pMC with target states $\target$ and let $\region
  \subseteq \wdval{\pmc}$ be a semialgebraic set given by $\psi_R$. 
  We define
	\[ \psi \coloneqq \regenc{\pmc}{\mathrm{wd}} \land p_\init \bowtie \nicefrac{1}{2} \land \psi_R. \] 
  Then, for any $\val \in \Val$,
	\[
    \val \text{ satisfies } \psi \text{ iff }
    \val \in R \land
    \pr{\instantiated{\pmc}{\proj{\val}{\pars}}}{}{\eventually\target} \bowtie
    \nicefrac{1}{2}.
  \]
\end{theorem}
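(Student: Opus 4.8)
The plan is to mirror the proof of Theorem~\ref{thm:encoding:pmcgp}, the only genuinely new ingredient being that the zero-states are no longer assumed known but must be discovered by the formula itself. As in the graph-preserving case, I would establish the stated equivalence by showing that the constraints $\regenc{\pmc}{\mathrm{wd}}$ force the real variables $p_s$ to coincide, for every state $s$, with the reachability probability $\prsol[s]{\target}{\pmc}[\proj{\val}{\pars}]$; conjoining $p_\init \bowtie \nicefrac{1}{2}$ and $\psi_R$ then yields both directions exactly as before. The whole argument therefore reduces to understanding the role of the auxiliary Boolean variables $q_s$ and the rank variables $r_s$.

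The crux is to verify that $q_s$ and $r_s$ together characterise the non-zero states, i.e.\ that $\{ s \mid q_s = \true \} = S \setminus \pzerostates[\target]{\region}$ in every satisfying valuation. I would first prove \emph{soundness}: if $q_s = \true$ then $s \notin \pzerostates[\target]{\region}$. Here the strict rank inequalities in the defining biconditional are essential. Starting from a state with $q_s = \true$ and repeatedly following the successor $s'$ that witnesses the disjunction, one obtains a walk along positive-probability edges on which the ranks strictly progress; strictness forbids any repetition of a state, so by finiteness of $S$ the walk must terminate, and by the base case of the recursion it can only terminate in $\target$. Thus $q_s = \true$ exhibits a concrete finite path of positive mass to $\target$. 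For the converse, I would show that whenever $s$ reaches $\target$ with positive probability there is a choice of ranks --- e.g.\ derived from graph distance to $\target$ --- witnessing $q_s = \true$, reusing exactly the graph-theoretic reasoning behind Proposition~\ref{pro:encoding:qualitativegraphproperty} and the computation of $\pzerostates[\target]{\region}$.

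With $\{ s \mid q_s = \true \}$ pinned down to the non-zero states, the surviving constraints ($p_s = 1$ on $\target$, $p_s = 0$ wherever $\neg q_s$, and $p_s = \sum_{s'} \pprob(s,s') p_{s'}$ wherever $q_s$) are \emph{verbatim} Constraints~\ref{const:pmcequations} for the graph-consistent set $\region$, with $\pzerostates[\target]{\region}$ plugged in. I would then invoke Theorem~\ref{thm:encoding:pmcgp} --- or equivalently the standard fact that, once all states unable to reach $\target$ have been removed, the linear system $(I-A)p = b$ has a \emph{unique} solution --- to conclude $p_s = \prsol[s]{\target}{\pmc}[\proj{\val}{\pars}]$ for all $s$, and in particular $p_\init = \pr{\instantiated{\pmc}{\proj{\val}{\pars}}}{}{\eventually\target}$. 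The equivalence claimed in the theorem is then immediate.

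I expect the correctness of the $q_s$/$r_s$ encoding to be the main obstacle. Soundness (no genuine zero-state is ever marked as reaching $\target$) is handled cleanly by strict monotonicity and finiteness, but the delicate point is to guarantee that the selected set is \emph{exactly} $S \setminus \pzerostates[\target]{\region}$ and not a proper subset: the equation block determines $p$ only relative to the set chosen by $q$, so any under-selection would lower $p_\init$ and could break the equivalence for the comparison operator at hand. Making this identification tight --- and checking that it interacts correctly with each $\bowtie \in \{\leq,<,>,\geq\}$ --- is the one step where the well-defined case is substantially more subtle than the graph-preserving case of Theorem~\ref{thm:encoding:pmcgp}.
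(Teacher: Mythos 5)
The paper states Theorem~\ref{thm:encoding:pmcwd} without an explicit proof (it is introduced only as ``the analogue'' of Theorem~\ref{thm:encoding:pmcgp}), so your plan --- show that every satisfying assignment pins $\{s \mid q_s\}$ to exactly the non-zero states, then fall back on the uniqueness argument of the graph-preserving case --- is the natural one, and your soundness half (strictly increasing ranks along witnessing successors yield a positive-probability path into $\target$) is correct. The problem is the step you yourself flag as ``delicate'' and then leave open: with the constraints as written, $\{s \mid q_s\}$ is \emph{not} forced to contain all non-zero states, and no argument can close this gap because the claim is false for the literal encoding. Concretely, take any parameter valuation, set $r_s = 0$ for all $s$ and $q_s = \false$ for all $s \notin \target$. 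Since the rank conjunct $r_s \clt r_{s'}$ sits inside \emph{both} directions of the biconditional, the constant ranking falsifies every disjunct on the right-hand side, so the biconditional is satisfied (indeed it \emph{forces} $q_s = \false$); then $\neg q_s \rightarrow p_s \ceq 0$ yields $p_\init = 0$. For a pMC whose true reachability probability is $1$, this satisfying assignment has $p_\init \leq \nicefrac{1}{2}$, so for $\bowtie \in \{\leq, <\}$ the left-hand side of the claimed equivalence holds while the right-hand side fails. Your converse argument (``there is a choice of ranks, e.g.\ graph distance, witnessing $q_s = \true$'') only produces \emph{one} good satisfying assignment; it does not exclude the bad ones, which is what the theorem needs.

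What is missing is the un-ranked reverse implication $\bigl(\bigvee_{s'}(\pprob(s,s') \cgt 0 \land q_{s'})\bigr) \rightarrow q_s$: the ranking certificate is only needed to prevent over-selection ($q_s$ true at a genuine zero-state), whereas under-selection must be excluded by requiring $\{s \mid \neg q_s\}$ to be closed under positive-probability successors. With that repaired constraint your skeleton goes through: soundness gives $\{s \mid q_s\} \subseteq S \setminus \pzerostates[\target]{\val}$, closure gives the reverse inclusion, every $q$-state then reaches $\target$ inside the $q$-states, and the surviving equations are exactly Constraints~\ref{const:pmcequations} relative to the correct zero set, so uniqueness of the resulting linear system delivers $p_s = \prsol[s]{\target}{\pmc}(\val)$ in every satisfying assignment. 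Without the repair only a one-sided statement survives: under-selection can only decrease $p_\init$, so the equivalence still holds for $\bowtie \in \{\geq, >\}$ under the ``exists an extension'' reading, but it genuinely breaks for $\leq$ and $<$. In short, your proposal identifies the right crux but defers exactly the step at which the argument (and, as written, the encoding itself) fails.
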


\subsubsection{Alternative encoding via solution functions}
The encodings presented above contain $\mathcal{O}(|S|+|\pars|)$ many
variables. As solving ETR is exponential in the number of variables, the large
number of variables is a significant hurdle.  In this section, we present
encodings that prevent the dependency on the number of states, by
incorporating the solution function. 

We reconsider the encoding for pMCs under the assumption that we may
precompute the zero-states.  Reachability in an MC corresponds to a linear
equation system $A \cdot \vec{p} = \vec{b}$ where $\vec{p}$ is the solution
vector, and $A$ is a matrix and $b$ is a vector (see, e.g.,~\cite{BK08}). 
For pMCs, $A \cdot \vec{p} = \vec{b}$ may be viewed as a linear equation
system over the field $\QQ(\pars)$ of rational functions with rational
coefficients.  That is, the entries of $A$ are no longer rational numbers, but
rational functions instead~\cite{baiercomplexity, DBLP:conf/ictac/Daws04}.

By basic linear algebra, we obtain that
for all pMCs $\pmc$ with targets $\target$, there exists $f \in \QQ(\pars)$
such that 
\[ \prsol{\target}{\pmc}[\val] = f[\val]\text{ for all }\val \in \gpval{\pmc}.\]
The rational function $f$ is exactly the entry $q_\init$ of the unique
solution $q$ for the system $A \cdot \vec{p} = \vec{b}$.  Thus, solving linear
equation systems (symbolically) is sufficient to find these solution
functions. 
We observe that $f$ is the restriction of $\prsol{\target}{\pmc}$ to
$\gpval{\pmc}$. We denote this restriction with $_\mathrm{gp}\prsol{\target}{\pmc}$.

We conclude this alternative ETR encoding by stating its main property.
\begin{theorem}\label{thm:encoding:solfuncencoding}
  Let $\pmc$ be a polynomial pMC with target states $\target$. Let $_\mathrm{gp}\prsol{\target}{\pmc} = \nicefrac{f}{g}$ for polynomials $f$ and $g$ be the solution function  of $\pmc$ on graph-preserving valuations and let $\region
  \subseteq \gpval{\pmc}$ be a semialgebraic set given by $\psi_R$. 
  We define
	\[
    \psi \coloneqq  \left( (g \cgt 0 \land f \bowtie \nicefrac{1}{2} \cdot g) \lor (g < 0
    \land \nicefrac{1}{2} \cdot g \bowtie f) \right) \land \psi_R.
  \] 
  Then, for any $\val \in \Val$,
	\[
    \val \text{ satisfies } \psi \text{ iff }
    \val \in R \land
    \pr{\instantiated{\pmc}{\proj{\val}{\pars}}}{}{\eventually\target} \bowtie
    \nicefrac{1}{2}.
  \]
\end{theorem}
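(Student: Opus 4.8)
The plan is to reduce the theorem to a purely pointwise statement about clearing the denominator of the rational function $\nicefrac{f}{g}$, with the comparison $\bowtie$ resolved by a case distinction on the sign of $g$. I would first isolate the two facts that the encoding rests upon. Since $\region \subseteq \gpval{\pmc}$ is exactly the semialgebraic set described by $\psi_R$, a valuation $\val \in \Val$ satisfies $\psi_R$ if and only if $\val \in \region$, and every such $\val$ is graph-preserving. For a graph-preserving $\val$ the reachability probability equals the solution function, i.e. $\pr{\instantiated{\pmc}{\proj{\val}{\pars}}}{}{\eventually\target} = {}_\mathrm{gp}\prsol{\target}{\pmc}[\val] = \nicefrac{f[\val]}{g[\val]}$, where $\nicefrac{f}{g}$ arises from symbolically solving the linear system $A \cdot \vec{p} = \vec{b}$ over $\QQ(\pars)$ as discussed above. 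The single point I would treat with care is that $g[\val] \neq 0$ for every graph-preserving $\val$: this holds because ${}_\mathrm{gp}\prsol{\target}{\pmc}$ is a genuine $[0,1]$-valued function on all of $\gpval{\pmc}$, so its denominator cannot vanish there (equivalently, $g$ is, up to sign, the non-zero determinant of the coefficient matrix restricted to the transient states, which is invertible on graph-preserving instantiations). This non-vanishing is what makes the split $g > 0$ versus $g < 0$ exhaustive on the relevant domain, and I expect it to be the main obstacle.

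Next I would prove both directions of the biconditional at once by fixing a graph-preserving $\val$ and writing $p := \nicefrac{f[\val]}{g[\val]}$, and showing that $p \bowtie \nicefrac{1}{2}$ is equivalent to the bracketed disjunction of $\psi$ holding at $\val$. I split on the sign of $g[\val]$, which is exhaustive by the previous paragraph. If $g[\val] > 0$, then multiplying the inequality $p \bowtie \nicefrac{1}{2}$ by the positive number $g[\val]$ preserves the direction of $\bowtie$, so $p \bowtie \nicefrac{1}{2}$ is equivalent to $f[\val] \bowtie \nicefrac{1}{2} \cdot g[\val]$, which is precisely the first disjunct, while the second disjunct is false since it demands $g < 0$. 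If $g[\val] < 0$, then multiplying by $g[\val]$ reverses the comparison; the encoding anticipates exactly this by writing the reversed literal $\nicefrac{1}{2} \cdot g \bowtie f$, so that $p \bowtie \nicefrac{1}{2}$ becomes equivalent to $\nicefrac{1}{2} \cdot g[\val] \bowtie f[\val]$, which is the second disjunct, while the first disjunct is false. Hence in either case exactly one disjunct is equivalent to $p \bowtie \nicefrac{1}{2}$.

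Finally I would assemble the statement. Conjoining with $\psi_R$, a valuation $\val$ satisfies $\psi$ if and only if $\val \in \region$ and the disjunction holds at $\val$; since $\val \in \region$ forces $\val$ to be graph-preserving, the previous two paragraphs let me replace the disjunction by $\pr{\instantiated{\pmc}{\proj{\val}{\pars}}}{}{\eventually\target} \bowtie \nicefrac{1}{2}$, yielding exactly the claimed equivalence with $\val \in R \land \pr{\instantiated{\pmc}{\proj{\val}{\pars}}}{}{\eventually\target} \bowtie \nicefrac{1}{2}$. Everything beyond the denominator non-vanishing is elementary manipulation of inequalities, with the only subtlety being the deliberate reversal of the relation in the $g < 0$ branch that compensates for the sign change.
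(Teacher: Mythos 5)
Your proof is correct and follows exactly the argument the paper intends (the paper states the theorem without an explicit proof, but the preceding discussion of solving $A \cdot \vec{p} = \vec{b}$ over $\QQ(\pars)$ sets up precisely the facts you use). You also correctly identify and resolve the one genuine subtlety --- that $g$ must not vanish on $\gpval{\pmc}$, which holds when $g$ is taken (as the linear-algebra derivation yields) to be the determinant of the coefficient matrix over the transient states, nonzero on every graph-preserving instantiation --- and the sign-split with the reversed relation for $g<0$ is exactly why the encoding is written as a disjunction.
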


\subsection{ETR encoding for pMDPs}
In this section, we generalise the encodings from pMCs to pMDPs.  We
distinguish between existential and universal nondeterminism.  Together, this
subsection establishes that for every pMDP the set of all valuations giving a
positive answer to reachability problems is semialgebraic.

\subsubsection{Qualitative analysis}
Again, we first give some preliminary considerations regarding the qualitative
case before moving to the quantitative setting.
\begin{definition}
  Let $\pmdp$ be a pMDP. The \emph{exist-zero states} for a valuation $\val
  \in \wdval{\pmdp}$ is the set
  \[ \pzeroexstates[\target]{\val} \coloneqq \{ s \in S \mid \exists \sched \in \DMSched{\pmdp}
  \text{ s.t. } \prsol[s]{\target}{\induced{\pmdp}{\sched}}[\val] = 0 \} \]
  containing the states that reach the target with probability zero in
  instantiation $\val$, and the \emph{exist-one states} is the set 
  \[ \poneexstates[\target]{\val} \coloneqq \{ s \in S \mid \exists \sched \in \DMSched{\pmdp}
  \text{ s.t. }\prsol[s]{\target}{\induced{\pmdp}{\sched}}[\val] = 1 \}.
  \]
  The sets $\pzeroallstates[\target]{\val}$ and $\poneallstates[\target]{\val}$ are defined analogously.	
\end{definition}
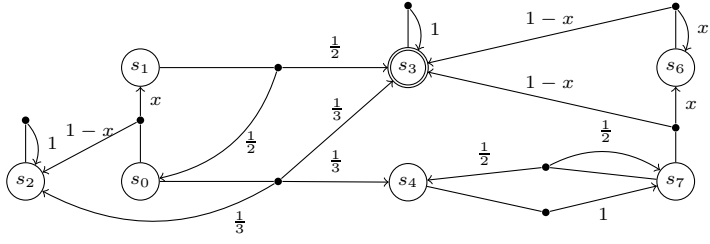
\begin{figure}
\centering
\begin{tikzpicture}[ every node/.style={scale=1, font=\scriptsize}, st/.style={draw, circle, inner sep=2pt, minimum size=13pt}]
	\node[st] (s00) {$s_1$};
	\node[st, below=of s00] (s10) {$s_0$};
	\node[st, left=of s10] (s20) {$s_2$};
	\node[st, right=3cm of s00, accepting] (s01) {$s_3$};
	\node[st, below=of s01] (s11) {$s_4$};
	%\node[st, below=of s11] (s21) {$s_5$};
	\node[st, right=3cm of s01] (s02) {$s_6$};
	\node[st, below=of s02] (s12) {$s_7$};
	%\node[st, below=of s12] (s22) {$s_8$};
	
	\node[circle, inner sep=1pt, fill=black, above=0.5cm of s02] (a1s02) {};
	\node[circle, inner sep=1pt, fill=black, right=1.5cm of s11,yshift=-0.4cm] (a1s11) {};
	\node[circle, inner sep=1pt, fill=black, right=1.5cm of s11,yshift=0.2cm] (a1s12) {};
	\node[circle, inner sep=1pt, fill=black, above=0.4cm of s12] (a2s12) {};
	
	\node[circle, inner sep=1pt, fill=black, above=0.5cm of s01] (a1s01) {};
	
	%\node[circle, inner sep=1pt, fill=black, above=0.5cm of s00] (a1s00) {};
	\node[circle, inner sep=1pt, fill=black, right=1.5cm of s00] (a2s00) {};

	\node[circle, inner sep=1pt, fill=black, above=0.5cm of s10] (a1s10) {};
	\node[circle, inner sep=1pt, fill=black, right=1.5cm of s10] (a2s10) {};
	
	\node[circle, inner sep=1pt, fill=black, above=0.5cm of s20] (a1s20) {};
	
	%\draw[-] (s00) -- (a1s00);
	%\draw[->] (a1s00) edge[bend left] node[right] {$1$} (s00);
	
	\draw[-] (s20) -- (a1s20);
	\draw[->] (a1s20) edge[bend left] node[right] {$1$} (s20);

	\draw[-] (s10) -- (a2s10);
	\draw[->] (a2s10) edge[bend left] node[below,pos=0.15] {$\frac{1}{3}$} (s20);
	\draw[->] (a2s10) edge node[above] {$\frac{1}{3}$} (s11);
	\draw[->] (a2s10) edge node[above] {$\frac{1}{3}$} (s01);
	
	\draw[-] (s10) -- (a1s10);
	\draw[->] (a1s10) edge node[right] {$x$} (s00);
	\draw[->] (a1s10) edge node[above] {$1-x$} (s20);
	
	\draw[-] (s00) -- (a2s00);
	\draw[->] (a2s00) edge node[above] {$\frac{1}{2}$} (s01);
	\draw[->] (a2s00) edge[bend left] node[right] {$\frac{1}{2}$} (s10);

	\draw[-] (s01) -- (a1s01);
	\draw[->] (a1s01) edge[bend left] node[right] {$1$} (s01);

	\draw[-] (s02) -- (a1s02);
	\draw[->] (a1s02) edge[bend left] node[right] {$x$} (s02);
	\draw[->] (a1s02) edge node[above] {$1-x$} (s01);
	
	\draw[-] (s11) -- (a1s11);
	\draw[->] (a1s11) edge node[below] {$1$} (s12);
	
	\draw[-] (s12) -- (a1s12);
	\draw[->] (a2s12) edge node[above] {$1-x$} (s01);
	\draw[->] (a2s12) edge node[right] {$x$} (s02);
	
	\draw[-] (s12) -- (a2s12);
	\draw[->] (a1s12) edge[bend left] node[above] {$\frac{1}{2}$} (s12);
	\draw[->] (a1s12) edge node[above] {$\frac{1}{2}$} (s11);

\end{tikzpicture}
\caption{Example pMDP}	
\label{fig:encoding:pmdpexample}
\end{figure}
\begin{example}
Consider the pMDP in Figure~\ref{fig:encoding:pmdpexample}, with $T = \{ s_3 \}$.
First, assume $\val \coloneqq \{ x \mapsto \nicefrac{1}{2} \}$.
We have \[ \pzeroexstates[\target]{\val} = \{ s_2, s_4, s_7 \}, \text{ and }\pzeroallstates[\target]{\val} = \{ s_2 \}.\]
For $\val \coloneqq \{ x \mapsto 0 \}$, we have \[ \pzeroexstates[\target]{\val} = \{ s_0, s_2, s_4, s_7 \}\text{, and }\pzeroallstates[\target]{\val} = \{ s_2 \},\]
and for $\val \coloneqq \{ x \mapsto 1 \}$, we have \[\pzeroexstates[\target]{\val} = \{ s_2, s_4, s_6, s_7 \}= \pzeroallstates[\target]{\val}.\]
\end{example}
%%
%%
%\paragraph{Nondeterminism}
%For existential strategies, a state has a positive probability to reach the target
%iff there is some action that leads to at least one successor which has a
%positive probability to reach the target and is closer to the target. \todo{T: and what about universal non-determinism? Also this paragraph should maybe go before the example?}
%
\subsubsection{Quantitative analysis}
For pMDPs, we omit the special case of graph-preservation. 
Instead, we consider existential and universal nondeterminism separately.
Contrary to pMCs, but in line with parameter-free MDPs, we now also have to
distinguish properties with lower bounds and properties with upper bounds.

\paragraph{Existential nondeterminism}
Existential nondeterminism is conceptually simple, as we existentially quantify
over both parameter values and strategies.  In a game-theoretic sense, one
player chooses both parameter values and strategies, and we may just
generalise the pMC encoding and use the ETR (where the player selects the
values for all variables).  We may, however, avoid variables for the
strategies by observing that the quantification over strategies is over a
finite set, and that this choice may be represented by a (finite) disjunction.
This disjunction ranges over exponentially many strategies.  We avoid this
explicit blowup by recalling that the nondeterminism is resolved locally.
Instead of a disjunction over all strategies, we make disjunctions over the
local action choices, similar to the encoding of the qualitative case.  These
insights yield a compact encoding, detailed below.
\begin{constraints}[Upper-bounded reachability, existential nondeterminism] 
\begin{align*}
	& p_s \ceq 1 & & \text{for all } s \in \target, \\
	& q_s \text{ is true} & &\text{for all } s \in \target, \\
	& q_s \leftrightarrow \bigwedge_{\act \in \EnAct(s)} \bigvee_{s'\in S} \left( \pprob(s,\act, s') \cgt 0 \rightarrow \left( q_{s'} \land r_{s} \clt r_{s'} \right) \right)	 \quad & &\text{for all } s \in S \setminus \target, \\
	 	& \neg q_s \rightarrow p_s \ceq 0 & & \text{for all } s \in S \setminus \target, \\
	& q_s \rightarrow \bigvee_{\act \in \EnAct(s)} \left( p_s \ceq \sum_{s' \in S} \pprob(s,\act,s') \cdot p_{s'} \right) & &\text{for all } s \in S \setminus \target. 
\end{align*}%	
We refer to the corresponding formula as $\etrau{\pmdp}$.
\end{constraints}
Under existential nondeterminism, we can freely choose the action at every
state: the probability of reaching the target is the sum over the
probabilities of reaching the target from the successors after taking this
action.  As we can choose the action, we thus have a disjunction over
equalities for every state and these disjunctions are guarded by the flag that
the probability is positive from this state, as for pMCs. 

For upper bounds on the reachability probability, under existential
nondeterminism, the strategy tries to minimise the probability.  In
particular, the strategies sets states to probability zero if there is any
strategy to do so.  As the interpretation of the $q_s$ variables is positive
reachability, i.e., $q_s$ is $\true$ iff it is not a zero-state, we obtain a
conjunction over all actions in the encoding.

Below, we give the encoding for lower bounds on the probability.
\begin{constraints}[Lower-bounded reachability, existential nondeterminism] 
\begin{align*}
	& p_s \ceq 1 & & \text{for all } s \in \target, \\
	& q_s \text{ is true} & &\text{for all } s \in \target, \\
	& q_s \leftrightarrow \bigvee_{\act \in \EnAct(s)} \bigvee_{s'\in S} \left( \pprob(s,\act, s') \cgt 0 \rightarrow \left( q_{s'} \land r_{s} \clt r_{s'} \right) \right)	 \quad & &\text{for all } s \in S \setminus \target, \\
	 	& \neg q_s \rightarrow p_s \ceq 0 & & \text{for all } s \in S \setminus \target, \\
	& q_s \rightarrow \bigvee_{\act \in \EnAct(s)} \left( p_s \ceq \sum_{s' \in S} \pprob(s,\act,s') \cdot p_{s'} \right) & &\text{for all } s \in S \setminus \target. 
\end{align*}%	
We refer to the corresponding formula as $\etral{\pmdp}$.
\end{constraints}
For lower bounds, (only) the computation of the zero states changes, as we now
try to avoid setting a state to probability zero. Thus, we only set the
reachability probability to zero if all actions lead to zero-states. Again,
as the interpretation of $q_s$ is that $s$ is not a zero-state, we obtain a
disjunction over the actions.
\begin{theorem}
\label{thm:etrangelic}
 Let $\pmdp$ be a pMDP with target states $\target$. Let $\region \subseteq
 \wdval{\pmdp}$ be a semialgebraic set given by $\psi_R$. For $\bowtie \;\in \{ \leq, <, \geq, > \}$ we define 
     \[ \psi \coloneqq \etra{\pmdp} \land  p_\init \bowtie \nicefrac{1}{2} \land \psi_R. \]
  Then, for any $\val \in \Val$,
	\[
    \val \text{ satisfies } \psi \text{ iff }
    \val \in R \land
    \exists\;\sched \in \DMSched{\pmdp}:\;
    \pr{\instantiated{\pmdp}{\val}}{\sched}{\eventually\target} \bowtie
    \nicefrac{1}{2}.
  \]
\end{theorem}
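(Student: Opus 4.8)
The plan is to prove the biconditional by setting up a correspondence between models of $\psi$ and pairs $(\val,\sched)$ consisting of a valuation $\val\in R$ and a deterministic memoryless strategy $\sched\in\DMSched{\pmdp}$ witnessing $\pr{\instantiated{\pmdp}{\val}}{\sched}{\eventually\target}\bowtie\nicefrac{1}{2}$. Recall that $\etra{\pmdp}$ abbreviates $\etrau{\pmdp}$ when $\bowtie\;\in\{\leq,<\}$ and $\etral{\pmdp}$ when $\bowtie\;\in\{\geq,>\}$, and that the two sub-formulae differ only in how the flag $q_s$ is aggregated over the enabled actions: conjunctively in the upper-bounded case (where an existentially-resolved strategy \emph{minimises} the probability) and disjunctively in the lower-bounded case (where it \emph{maximises}). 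I would treat both cases in parallel and isolate this single point of divergence. Since the existential theory binds the parameter variables together with the reals $p_s$, the ranks $r_s$, and the Booleans $q_s$, exhibiting a model amounts to choosing $\val$ and certifying that one strategy realises the required comparison.

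For the direction from a witness to a model, the key simplification is to pass to an \emph{optimal} strategy. By Proposition~\ref{pro:detmemless-suff} a witnessing $\sched$ with $\pr{\instantiated{\pmdp}{\val}}{\sched}{\eventually\target}\bowtie\nicefrac{1}{2}$ exists if and only if the minimising (for $\bowtie\;\in\{\leq,<\}$), respectively maximising (for $\bowtie\;\in\{\geq,>\}$), deterministic memoryless strategy already attains the bound. I would fix this optimal $\sched$, set the parameter variables to $\val$, put $p_s\coloneqq\pr[s]{\instantiated{\pmdp}{\val}}{\sched}{\eventually\target}$, declare $q_s$ true exactly when $p_s\cgt 0$, and choose $r_s$ to increase strictly along a shortest path to $\target$ within the positive states of $\induced{\pmdp}{\sched}$. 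The target clauses and the guarded equation $p_s\ceq\sum_{s'}\pprob(s,\act,s')p_{s'}$ with $\act\coloneqq\sched(s)$ then hold by the definition of reachability in $\induced{\pmdp}{\sched}$; the flag biconditional holds because the conjunction (resp. disjunction) over actions exactly characterises the optimal zero-states, i.e.\ $q_s$ is true iff $s$ lies outside $\pzeroexstates[\target]{\val}$ in the minimising case and outside $\pzeroallstates[\target]{\val}$ in the maximising case. Finally $p_\init\bowtie\nicefrac{1}{2}$ and $\psi_R$ hold by assumption, so the assignment models $\psi$.

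For the converse, I would take a model of $\psi$ and read off a strategy $\sched$ by letting $\sched(s)$ be an action witnessing the active disjunct of the guarded equation at each non-target $s$ with $q_s$ true, and, at each $s$ with $q_s$ false, the action exhibited by the failed conjunct (resp. disjunct) in the flag constraint. Resolving all action choices by $\sched$ collapses the constraint block into precisely the well-defined pMC encoding of the induced chain $\induced{\pmdp}{\sched}$, so that Theorem~\ref{thm:encoding:pmcwd} applies: the rank variables $r_s$ forbid a strictly increasing cycle and thereby rule out the spurious fixpoints of the linear system, forcing the $q_s$ to coincide with the genuine non-zero states of $\induced{\pmdp}{\sched}$ and the $p_s$ to be its true reachability probabilities. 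Consequently $p_\init\bowtie\nicefrac{1}{2}$ transfers to $\pr{\instantiated{\pmdp}{\val}}{\sched}{\eventually\target}\bowtie\nicefrac{1}{2}$, and $\val\in R$ follows from $\psi_R$.

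The hard part will be verifying that the two independent uses of local choice in the encoding cohere into a \emph{single} deterministic memoryless strategy. The flag $q_s$ aggregates over actions one way, while the equation for $p_s$ is guarded by a separate action-disjunction, so I must argue that the action witnessing the $p_s$-equation can always be selected to respect the zero-state structure certified by the $q_s$ and $r_s$; the careful strategy extraction above (the $p_s$-witness at positive states, the failing-conjunct action at zero states) is what closes this gap. For upper bounds this amounts to showing the minimising strategy never routes positive mass through a state the conjunction excluded, and symmetrically for lower bounds through the disjunction. The needed correspondence between the flag structure and the sets $\pzeroexstates[\target]{\val}$ and $\pzeroallstates[\target]{\val}$ ultimately rests on the graph characterisation of qualitative reachability in Proposition~\ref{pro:encoding:qualitativegraphproperty}, and I expect this alignment, rather than the reduction to the pMC case, to carry the technical weight.
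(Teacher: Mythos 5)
The paper supplies no explicit proof of Theorem~\ref{thm:etrangelic}; the only justification is the prose surrounding the two constraint systems, and your proposal follows exactly the intended architecture (reduce to the optimal strategy via Proposition~\ref{pro:detmemless-suff}, certify the zero-states with the $q$/$r$ machinery, extract a local strategy from a model). The architecture is right, but one concrete step in your forward direction fails. For $\bowtie\;\in\{\leq,<\}$ you define the ranking $r_s$ from shortest paths to $\target$ \emph{inside the induced chain} $\induced{\pmdp}{\sched}$ of the minimising strategy. The flag constraint in $\etrau{\pmdp}$ is, however, a conjunction over \emph{all} enabled actions: at every $q$-true state, \emph{every} action must admit a positive-probability successor that is $q$-true and of strictly larger rank, whereas induced-chain distances only certify this for the single action $\sched(s)$. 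Concretely, let the minimising action at $s$ reach $\target$ directly with probability $\nicefrac{1}{2}$ (so $r_s$ sits one level below $r$ of the target), while a second action leads surely into a deterministic chain $s\rightarrow u\rightarrow u'\rightarrow\target$; then $r_u<r_s$ under your ranking, the disjunct for the second action has no witness, and the biconditional forces $q_s$ to be false even though $s\notin\pzeroexstates[\target]{\val}$. The ranking must instead be taken from the stage at which $s$ enters the universal attractor, i.e.\ the least fixed point of $U\mapsto\target\cup\{s\mid\forall\act\in\EnAct(s)\,\exists s'\,(\pprob(s,\act,s')[\val]>0\wedge s'\in U)\}$, which accounts for all actions simultaneously; your induced-chain ranking is only adequate for the disjunctive, lower-bounded case.

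The converse direction has a second soft spot that you flag as ``the hard part'' but do not discharge. At a $q$-false state the negated conjunct only provides an action under which every positive successor is $q$-false \emph{or} of non-larger rank; it does not confine the successors to the $q$-false set. The strategy you extract there may therefore enter a $q$-true state, where it is defined by the $p$-equation witness and may reach $\target$ with positive probability, so the claim that the value $p_s=0$ is realised by some strategy---which is what soundness of $p_\init\bowtie\nicefrac{1}{2}$ for $\bowtie\;\in\{\leq,<\}$ rests on---does not follow from the extraction as described. Closing this requires proving both inclusions of the correspondence between the $q$-assignment and $\pzeroexstates[\target]{\val}$ (in particular that $q$-falsity genuinely certifies membership in $\pzeroexstates[\target]{\val}$), and this is precisely where the technical weight of the theorem lies; appealing to Theorem~\ref{thm:encoding:pmcwd} after collapsing to the induced chain does not suffice, because the collapse itself presupposes that the two independent local action choices (the one in the flag constraint and the one guarding the $p_s$-equation) can be realised by a single deterministic memoryless strategy.
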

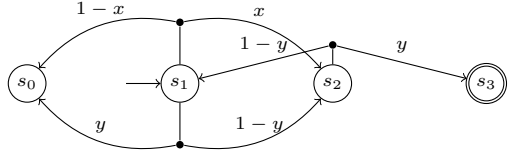
\begin{figure}
\centering
\begin{tikzpicture}[ every node/.style={scale=1, font=\scriptsize}, st/.style={draw, circle, inner sep=2pt, minimum size=13pt}]
	\node[st, initial, initial text=] (s0) {$s_1$};
	
	\node[st, right=1.5cm of s0] (s1) {$s_2$};
	\node[st, left=1.5cm of s0] (s2) {$s_0$};
	\node[st, accepting, right=1.5cm of s1] (s3) {$s_3$};
	
	\node[circle, inner sep=1pt, fill=black, above=0.5cm of s0] (a1s0) {};
	\node[circle, inner sep=1pt, fill=black, below=0.5cm of s0] (a2s0) {};
	
	\node[circle, inner sep=1pt, fill=black, above=0.2cm of s1] (a2s1) {};
	
	\draw[-] (s0) -- (a1s0);
	\draw[->] (a1s0) edge[bend left] node[above] {$x$} (s1);
	\draw[->] (a1s0) edge[bend right] node[above] {$1-x$} (s2);
	
	\draw[-] (s0) -- (a2s0);
	\draw[->] (a2s0) edge[bend right] node[above] {$1-y$} (s1);
	\draw[->] (a2s0) edge[bend left] node[above] {$y$} (s2);

	\draw[-] (s1) -- (a2s1);
	\draw[->] (a2s1) edge node[above] {$1-y$} (s0);
	\draw[->] (a2s1) edge node[above] {$y$} (s3);	
\end{tikzpicture}
\caption{Small pMDP to illustrate encodings}	
\label{fig:encoding:pmdpexamplesmall}
\end{figure}
\begin{example}
  Consider the pMDP $\pmdp$ in Figure~\ref{fig:encoding:pmdpexamplesmall} and let $\region \subseteq \wdval{\pmdp}$ be an arbitrary semialgebraic set. We first consider the encoding from Theorem \ref{thm:etrangelic} for an upper bound (i.e., $\bowtie\;=\;\leq)$.
%  for whether $\exists \val \in \region, \exists \sched \colon \pr{\instantiated{\pmdp}{\val}}{\sched}{\eventually\target} \leq  \frac{1}{2}.$
  For conciseness, we simplified several constraints.
	\begin{align*}
		&  p_1 \cleq \nicefrac{1}{2} \land \psi_\region \land p_0 \ceq 0 \land \neg q_0 \land p_3 \ceq 1 \land q_3, \\
		& q_1 \leftrightarrow \Big( \big( x \cgt 0 \land q_2 \land r_1 \clt r_2 \big)  \land \big( 1-y \cgt 0 \land q_2 \land r_1 \clt r_2  \big) \Big), \\
		& q_2 \leftrightarrow \Big( \big( y \cgt 0 \big) \lor \big( 1-y \cgt 0 \land q_1 \land r_2 \clt r_1 \big) \Big), \\
		& q_1 \rightarrow \Big( \big( p_1 \ceq x \cdot p_2 \big) \lor \big( p_1 \ceq (1-y) \cdot p_2 \big)  \Big) \land \neg q_1 \rightarrow p_1 \ceq 0, \\
		& q_2 \rightarrow \big( p_2 \ceq (1-y) \cdot p_1 + y \big) \land \neg q_1 \rightarrow p_2 \ceq 0.
	\end{align*}
	Below, we give the encoding for $\bowtie\;=\;\geq$.
%	whether
%  $\exists \val \in \region, \exists \sched \colon 
%  \pr{\instantiated{\pmdp}{\val}}{\sched}{\eventually\target} \geq
%  \frac{1}{2}.$
  Compared to the encoding above, only the first constraint and one further connective in the second line changed:
  \textcolor{lightgray}{\begin{align*}
		& p_1 \textcolor{black}{\ \cgeq\ } \nicefrac{1}{2} \land \psi_\region \land p_0 \ceq 0 \land \neg q_0 \land p_3 \ceq 1 \land q_3, \\
		& q_1 \leftrightarrow \Big( \big( x \cgt 0 \land q_2 \land r_1 \clt r_2 \big) \textcolor{black}{\ \lor\ } \big( 1-y \cgt 0 \land q_2 \land r_1 \clt r_2  \big) \Big), \\
		&\textcolor{black}{\ldots}
	\end{align*}}
\end{example}

\paragraph{Universal nondeterminism}
For the universal case, we existentially quantify over parameter values and
universally over strategies.  The insight is that the universal quantification
is over a finite domain and may therefore be turned in a conjunction,
analogously to the existential case above.  However, when applying the
conjunction locally at the states, we have to ensure that we do not expect all
equalities to hold simultaneously.  Instead, we adapt the encoding of the
\emph{Bellman inequations} from parameter-free MDPs.  All further ideas are then
straightforward analogues.  Naturally we have to change the zero-states to the
universal case.
\begin{constraints}[Upper-bounded reachability, universal nondeterminism] 
\label{constr:ubreachdnondet}
\begin{align*}
	& p_s \ceq 1 & & \text{for all } s \in \target, \\
	& q_s \text{ is true} & &\text{for all } s \in \target, \\
	& q_s \leftrightarrow \bigvee_{\act \in \EnAct(s)} \bigvee_{s'\in S} \left( \pprob(s,\act,s') \cgt 0 \rightarrow \left( q_{s'} \land r_{s} \clt r_{s'} \right) \right)	 \quad & &\text{for all } s \in S \setminus \target, \\
	 	& \neg q_s \rightarrow p_s \ceq 0 & & \text{for all } s \in S \setminus \target, \\
	& q_s \rightarrow \bigwedge_{\act \in \EnAct(s)} \left( p_s \cgeq \sum_{s' \in S} \pprob(s,\act, s') \cdot p_{s'} \right) & &\text{for all } s \in S \setminus \target.
\end{align*}%
We refer to the corresponding formula as $\etrdu{\pmdp}$.	
\end{constraints}
For lower-bounded reachability, it suffices to
change the zero-state computation and the inequalities on the probabilities.
We refer to the corresponding formula as $\etrdl{\pmdp}$.
The accompanying encoding is then similar to the existential case.
\begin{theorem}
 Let $\pmdp$ be a pMDP with target states $\target$. Let $\region \subseteq
 \wdval{\pmdp}$ be a semialgebraic set given by $\psi_R$.
 For $\bowtie \;\in \{ \leq, <, \geq, > \}$ we define  
 \[ \psi \coloneqq \etrd{\pmdp} \land  p_\init \bowtie \nicefrac{1}{2} \land \psi_R \]
% \begin{itemize}	
%   \item Either $\bowtie \;\in \{ \leq, < \}$, and we define 
%     \[ \psi \coloneqq \etrdu{\pmdp} \land  p_\init \bowtie \lambda \land \psi_R, \]
%   \item or $\bowtie \;\in \{ \geq, > \}$, and we define
%     \[ \psi \coloneqq \etrdl{\pmdp} \land  p_\init \bowtie \lambda \land \psi_R. \]
%  \end{itemize}
  Then, for any $\val \in \Val$,
	\[
    \val \text{ satisfies } \psi \text{ iff }
    \val \in R \land
    \forall\;\sched \in \DMSched{\pmdp}:\;
    \pr{\instantiated{\pmdp}{\val}}{\sched}{\eventually\target} \bowtie
    \nicefrac{1}{2}.
  \]
\end{theorem}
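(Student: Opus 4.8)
The plan is to reduce the universal statement to a comparison of the \emph{optimal} reachability value, and then to argue that $\etrd{\pmdp}$ computes exactly this value via the classical Bellman (in)equations, lifted pointwise to the instantiation $\instantiated{\pmdp}{\val}$. First I would dispatch the conjunct $\psi_R$, which contributes precisely the requirement $\val \in \region$, and concentrate on the remaining conjunction $\etrd{\pmdp} \land p_\init \bowtie \nicefrac{1}{2}$ after the parameters are fixed to $\val$ (the variables $p_s, q_s, r_s$ being auxiliary quantities whose consistent existence we must characterise). By Proposition~\ref{pro:detmemless-suff} the quantification over \emph{all} strategies may be replaced by one over the finite set $\DMSched{\pmdp}$, and since this set is finite the optimum is attained; thus $\forall\,\sched : \pr{\instantiated{\pmdp}{\val}}{\sched}{\eventually\target} \bowtie \nicefrac{1}{2}$ is equivalent to $\opt \bowtie \nicefrac{1}{2}$, where $\opt = \max_{\sched} \pr{\instantiated{\pmdp}{\val}}{\sched}{\eventually\target}$ for $\bowtie\;\in\{\leq,<\}$ (handled by $\etrdu{\pmdp}$) and $\opt = \min_{\sched}\pr{\instantiated{\pmdp}{\val}}{\sched}{\eventually\target}$ for $\bowtie\;\in\{\geq,>\}$ (handled by $\etrdl{\pmdp}$).

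Next I would analyse the two layers of the encoding separately. The variables $q_s$, together with the ranking variables $r_s$, are intended to flag the states with positive optimal reachability: the biconditional defining $q_s$ expresses reachability of $\target$ in the underlying graph under the resolution of nondeterminism relevant to the chosen optimum, while the strict rank constraint $r_s \clt r_{s'}$ certifies well-foundedness and thereby rules out spurious cyclic solutions in which a set of states mutually justify one another's $q$-flags without any genuine path to $\target$. Appealing to Proposition~\ref{pro:encoding:qualitativegraphproperty} and the qualitative analysis, I would show that in any model $q_s$ is forced to hold exactly on the non-zero states of $\instantiated{\pmdp}{\val}$ (for the relevant optimum), and that $\neg q_s \rightarrow p_s \ceq 0$ then pins the value $0$ on the optimal zero-states. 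On the complementary states the surviving constraints are the Bellman inequalities: $p_s \cgeq \max_{\act} \sum_{s'} \pprob(s,\act,s') \cdot p_{s'}$ in the upper-bounded case $\etrdu{\pmdp}$ and the dual $p_s \cleq \min_{\act}\sum_{s'}\pprob(s,\act,s')\cdot p_{s'}$ in $\etrdl{\pmdp}$. The standard monotone-operator argument for parameter-free MDPs (cf.\ Proposition~\ref{pro:poly-time-MDPs}) then yields that the maximal reachability vector is the pointwise-\emph{least} solution of the former system and the minimal reachability vector the pointwise-\emph{greatest} solution of the latter.

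With these facts in place both directions follow. For ``$\Leftarrow$'', assuming $\opt \bowtie \nicefrac{1}{2}$, I would exhibit the canonical witness: set $p_s$ to the true optimal reachability probability of $s$, set each $q_s$ to $\true$ iff that probability is positive, and choose $r_s$ by, e.g., a breadth-first layering towards $\target$; a routine check shows all constraints of $\etrd{\pmdp}$ hold and $p_\init = \opt \bowtie \nicefrac{1}{2}$. For ``$\Rightarrow$'', given any satisfying assignment, the extremality of the optimal vector yields $p_\init \geq \opt$ in the max case and $p_\init \leq \opt$ in the min case; combined with $p_\init \bowtie \nicefrac{1}{2}$ this forces $\opt \bowtie \nicefrac{1}{2}$, where for the strict relations I would invoke attainment of the optimum over the finite set $\DMSched{\pmdp}$.

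I expect the main obstacle to be the joint correctness of the qualitative and quantitative layers rather than either in isolation: one must verify that the $q_s$/$r_s$ fragment selects \emph{precisely} the optimal zero-states (no false positives created by cycles, no false negatives), since only then do the Bellman (in)equalities act on the correct index set and the least/greatest-solution characterisation apply. Matching each of the four relations to the correct encoding ($\etrdu{\pmdp}$ versus $\etrdl{\pmdp}$) and to the correct optimum, and checking that the direction of the fixed point is consistent with the direction of $\bowtie$, is the delicate bookkeeping; once the parameters are substituted, the remaining content reduces to the well-known optimality characterisation for parameter-free MDPs, so no genuinely new quantitative argument is needed beyond that reduction.
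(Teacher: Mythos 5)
Your plan is correct and matches the argument the paper intends but never spells out (the paper gives no explicit proof of this theorem, deferring to the analogy with the existential case and the pMC encodings): reduce the universal quantifier to the attained optimum via Proposition~\ref{pro:detmemless-suff}, show that the $q_s$/$r_s$ layer pins down exactly the relevant zero-states, and invoke the least/greatest-solution characterisation of the Bellman inequality systems for the maximal/minimal reachability vectors. Your matching of $\etrdu{\pmdp}$ to $\{\leq,<\}$ with the maximum and of $\etrdl{\pmdp}$ to $\{\geq,>\}$ with the minimum, and the direction of the extremality argument in each case, are consistent with Constraints~\ref{constr:ubreachdnondet}.
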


\subsection{Lower bounds}
Following the results from the previous sections, we have an \ETR{} upper bound.
For Boolean valuations, we even have an \NP{} upper bound by guessing which parameters are assigned to one. 
Therefore, the lower bound for the qualitative and the upper bound for the quantitative case coincide.

We first reduce some entries in the table to each other. 
The remainder of this section then first considers hardness results for the case of arbitrary parameters, and then shows better upper bounds for the case where the number of parameters is fixed.

\paragraph{Considerations for the comparison relations}
The number of combinations that we need to consider is significantly reduced
by a couple of reductions that follow from structural properties of pMCs.
Intuitively, the first property is based on the duality of target and bad
states:
\begin{proposition} \label{pro:complexity:restrictrelations}
  For every $\quant \in \{\exists,\forall\}$, 
  there are polynomial-time many-one reductions
  \begin{itemize}
    \item among the problems $\quant \reachdp{>}_{\mathrm{gp}}$ and
      $\quant \reachdp{<}_{\mathrm{gp}}$ and
    \item among the problems $\quant \reachdp{\geq}_{\mathrm{gp}}$ and
      $\quant \reachdp{\leq}_{\mathrm{gp}}$.
  \end{itemize}	
\end{proposition}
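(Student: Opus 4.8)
The plan is to exploit the \emph{duality between target and bad states}: in any Markov chain, reaching $\target$ with probability $p$ is, up to complement, the same as getting trapped forever off $\target$ with probability $1-p$. I would therefore construct from an instance $(\pmdp,\target)$ a new simple pMDP $\pmdp'$ over the \emph{same} parameter set, together with a target $\target'$, such that for every graph-preserving $\val$ and every $\sched \in \DMSched{\pmdp}$ (identifying strategies of $\pmdp$ and $\pmdp'$)
\[
  \pr{\instantiated{\pmdp'}{\val}}{\sched}{\eventually{\target'}} = 1 - \pr{\instantiated{\pmdp}{\val}}{\sched}{\eventually\target}.
\]
Given this, the reductions are immediate: since $p > \nicefrac{1}{2} \iff 1-p < \nicefrac{1}{2}$ and $p \cgeq \nicefrac{1}{2} \iff 1-p \cleq \nicefrac{1}{2}$, a yes-instance of $\quant\reachdp{>}_{\mathrm{gp}}$ is exactly a yes-instance of $\quant\reachdp{<}_{\mathrm{gp}}$ after the transformation, and likewise for $\reachdp{\geq}/\reachdp{\leq}$. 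Crucially the quantifier $\quant$ is \emph{preserved}, because complementing the probability turns an optimisation over strategies into the opposite one, which flipping the comparison turns back: for $\quant=\exists$, the condition ``$\exists\sched$ with $\pr{\instantiated{\pmdp}{\val}}{\sched}{\eventually\target} > \nicefrac{1}{2}$'' reads $\max_\sched \pr{\instantiated{\pmdp}{\val}}{\sched}{\eventually\target} > \nicefrac{1}{2}$, which under the complement is $\min_\sched \pr{\instantiated{\pmdp'}{\val}}{\sched}{\eventually{\target'}} < \nicefrac{1}{2}$, i.e.\ ``$\exists\sched$ with $\pr{\instantiated{\pmdp'}{\val}}{\sched}{\eventually{\target'}} < \nicefrac{1}{2}$''; the case $\quant=\forall$ is symmetric. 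Running the analogous construction with the roles of target and bad set interchanged yields the converse reductions.

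The construction is cleanest for pMCs, and this already drives the general case. Restricting to graph-preserving valuations fixes the underlying graph to $\graph{\pmc}$ (Proposition~\ref{pro:encoding:qualitativegraphproperty}), so the set $B$ of states from which $\target$ is unreachable in $\graph{\pmc}$ is identical for all $\val \in \gpval{\pmc}$ and computable in polynomial time by a backward graph search. Making $\target$ absorbing, every bottom strongly connected component then lies either inside $\target$ or inside $B$, whence $\pr{\instantiated{\pmc}{\val}}{}{\eventually\target} + \pr{\instantiated{\pmc}{\val}}{}{\eventually{B}} = 1$ for all graph-preserving $\val$. Taking $\target' \coloneqq B$ realises the displayed identity, and both reductions follow.

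For pMDPs I would lift this via the classical duality between reachability and safety: maximal reachability of $\target$ equals $1$ minus the minimal probability of avoiding $\target$ forever, and dually for minimal reachability and maximal safety. The complementary event must be re-expressed as reachability of a \emph{fixed} target, namely the maximal subsystem from which $\target$ can be avoided almost surely, or the set $B$ above after collapsing the maximal end components of $S \setminus (\target \cup B)$. These qualitative regions depend only on $\graph{\pmdp}$, hence are invariant across $\gpval{\pmdp}$ by Proposition~\ref{pro:encoding:qualitativegraphproperty} and computable in polynomial time; making $\target$ and the chosen dual set absorbing produces $\pmdp'$, still a simple pMDP with the same parameters and graph-preserving valuations.

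The main obstacle is precisely this last step. Unlike for pMCs, the mass that never reaches $\target$ can be trapped in \emph{strategy-dependent} end components, so ``avoid $\target$ forever'' is not literally reachability of a strategy-independent state set, and a naive choice of $\target' = B$ fails (an avoidable safe loop lets a strategy escape $B$ without reaching $\target$). Resolving this—turning max-safety into reachability of the graph-determined safe region, and min-safety into reachability of $B$ after eliminating avoidable safe loops so that the optimal strategy's trapped mass is captured exactly—is the delicate part, and it is exactly where graph-preservation is indispensable, since it keeps every such qualitative region fixed while the parameter values vary.
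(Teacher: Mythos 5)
Your proposal follows essentially the same route as the paper's proof: both reduce $\quant\reachdp{>}_{\mathrm{gp}}$ to $\quant\reachdp{<}_{\mathrm{gp}}$ (and $\geq$ to $\leq$) via the reachability/safety duality, replacing $\target$ by a dual target set $\target'$ that is determined purely by the underlying graph --- the paper obtains it from \cite[Thms.~10.122, 10.127]{BK08} as a union of maximal end components --- and both identify graph-preservation as exactly the hypothesis that makes $\target'$ valuation-independent and the construction polynomial-time. The one caveat is that your opening per-strategy identity $\pr{\instantiated{\pmdp'}{\val}}{\sched}{\eventually{\target'}} = 1 - \pr{\instantiated{\pmdp}{\val}}{\sched}{\eventually\target}$ is too strong for pMDPs (as you yourself observe, the mass avoiding $\target$ is trapped in strategy-dependent end components); but the reduction only needs, and the paper only asserts, the value-level identity $\max_{\sched}\pr{\instantiated{\pmdp}{\val}}{\sched}{\eventually\target} = 1 - \min_{\sched}\pr{\instantiated{\pmdp}{\val}}{\sched}{\eventually{\target'}}$, which is precisely what your chain of max/min equivalences uses, so the argument goes through.
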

\begin{proof}
  We prove only the first item for $\quant = \exists$. All other cases may be
  proven analogously.  First, we deduce from \cite[Thm.~10.122 and
  Thm.~10.127]{BK08} that, in polynomial time, and without regarding the actual
  transition probabilities, we can compute from $\pmdp$ and a target set $T$,
  a target set\footnote{Which is some adequate union of particular maximal end
  components in $\pmdp$.}  $T'$ such that for each $\mdp \in
  \generator[\mathrm{gp}]{\pmdp}$: 
  \begin{align*}
    \max_{\sched \in \DMSched{\pmdp}} \pr{\mdp}{\sched}{\eventually\target} = 1 - \min_{\sched
    \in \DMSched{\pmdp}} \pr{\mdp}{\sched}{\eventually \target'}.
  \end{align*}
  Please observe that the step above in general does not work without the
  restriction to graph-preserving instantiations. We combine this to obtain:
  \begin{align*}
    & \exists\;\val \in \gpval{\pmdp},\;
    \exists\;\sched \in \DMSched{\pmdp}:\;
    \pr{\mdp[\val]}{\sched}{\eventually\target} > \frac{1}{2}\\
    \iff & 
    \exists\; \val \in \gpval{\pmdp}:\;
    \max_{\sched \in \DMSched{\pmdp}}
    \pr{\mdp[\val]}{\sched}{\eventually\target} > \frac{1}{2}\\
    \iff &
    \exists\; \val \in \gpval{\pmdp}:\;
    \left(1 - \min_{\sched \in \DMSched{\pmdp}}
           \pr{\mdp[\val]}{\sched}{\eventually\target'}\right) > \frac{1}{2}\\
    \iff &
    \exists\; \val \in \gpval{\pmdp}:\;
    \min_{\sched \in \DMSched{\pmdp}} 
     \pr{\mdp[\val]}{\sched}{\eventually\target'} < \frac{1}{2}\\
    \iff &
    \exists\; \val \in \gpval{\pmdp},
    \exists\; \sched \in \DMSched{\pmdp}:\;
     \pr{\mdp[\val]}{\sched}{\eventually\target'} < \frac{1}{2}.
  \end{align*}
\end{proof}
For strict lower-bounded reachability, we can restrict our attention to
graph-preserving parameter instantiations.
\begin{proposition}
\label{pro:e_reach_less_red_e_reach_less_gp}
$\reachdp{>}_\mathrm{wd} $ is polynomially reducible to $ \reachdp{>}_{\mathrm{gp}}$.
\end{proposition}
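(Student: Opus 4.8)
The plan is to show that, on simple pMCs, the two decision problems have \emph{identical} yes-instances, so that the identity map already serves as the (trivially polynomial-time) many-one reduction: an instance $\pmc$ of $\reachdp{>}_\mathrm{wd}$ is sent to the same $\pmc$ regarded as an instance of $\reachdp{>}_{\mathrm{gp}}$. Recall that for a simple pMC we have $\wdval{\pmc} = [0,1]^{\pars}$ and $\gpval{\pmc} = (0,1)^{\pars}$, hence $\gpval{\pmc} \subseteq \wdval{\pmc}$. Thus a graph-preserving witness for $\pr{\instantiated{\pmc}{\val}}{}{\eventually{\target}} > \nicefrac{1}{2}$ is in particular a well-defined one, and only the converse requires work: I must argue that every well-defined valuation achieving reach $> \nicefrac{1}{2}$ can be traded for a graph-preserving one.

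The key fact I would establish is that the map $\val \mapsto \pr{\instantiated{\pmc}{\val}}{}{\eventually{\target}}$ is \emph{lower semi-continuous} on $[0,1]^{\pars}$. For each horizon $h$, the bounded reachability probability $\pr{\instantiated{\pmc}{\val}}{}{\eventually[h]{\target}}$ is a finite sum, over first-passage state-sequences of length at most $h$, of products of transition polynomials; the index set is independent of $\val$ (sequences using a ``missing'' transition simply contribute $0$), so this is a polynomial in the parameters and hence continuous on all of $[0,1]^{\pars}$, with no graph-consistency assumption needed. These bounded probabilities are nondecreasing in $h$ and converge pointwise to the full reachability probability, so the latter is a pointwise supremum of continuous functions and therefore lower semi-continuous.

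Given lower semi-continuity the argument is short. The superlevel set $U \coloneqq \{ \val \in [0,1]^{\pars} \mid \pr{\instantiated{\pmc}{\val}}{}{\eventually{\target}} > \nicefrac{1}{2} \}$ is open in $[0,1]^{\pars}$, since preimages of $(\nicefrac{1}{2},\infty)$ under a lower semi-continuous function are open. By the hypothesis of the converse direction, $U \neq \emptyset$. Because the graph-preserving region $(0,1)^{\pars}$ is dense in $[0,1]^{\pars}$, the nonempty open set $U$ must contain some $\val' \in (0,1)^{\pars}$, which is exactly the desired graph-preserving witness, completing the equivalence and hence the reduction.

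The main obstacle, and the reason the statement is restricted to the strict relation, is precisely the potential discontinuity of reachability at the boundary of $\wdval{\pmc}$: vanishing transitions may create new bottom SCCs and make the probability jump. Lower semi-continuity is the one-sided control that survives this (moving to the boundary can only make reachability jump downward, never upward), and it is what renders the superlevel set open. I expect the care to lie in justifying the monotone limit of the bounded-horizon polynomials cleanly, and in emphasizing that openness genuinely fails for $\geq$: there the analogous set $\{ \text{reach} \geq \nicefrac{1}{2} \}$ need not be open, which is consistent with the fact that the $\geq$/almost-sure variants really do require boundary valuations, as witnessed by the construction in Proposition~\ref{pro:complexity:qualitativereach}.
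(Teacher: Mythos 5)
Your proof is correct and follows essentially the same route as the paper, which disposes of this proposition in one line by invoking the (lower) semi-continuity of the solution function for simple pMCs, citing \cite[Thm.~5]{DBLP:conf/uai/Junges0WQWK018}, exactly the property you establish. The only difference is that you additionally supply a self-contained proof of that semi-continuity (via monotone convergence of the bounded-horizon polynomials) and spell out the openness-plus-density step that the paper leaves implicit as an ``immediate consequence.''
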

This proposition is an immediate consequence of the semi-continuity of the solution
function for simple pMCs~\cite[Thm.~5]{DBLP:conf/uai/Junges0WQWK018}.
Conversely, we can also construct gadgets that avoid valuations which are not
graph-preserving.  Using the gadget in Figure~\ref{fig:e_reach_gp_red_e_reach}
we can ensure that for any non graph-preserving instantiation, the probability to
reach the target is $0$, while the reachability probabilities
for graph-preserving instantiations are not affected. Together with semi-continuity of the
solution function, we deduce:
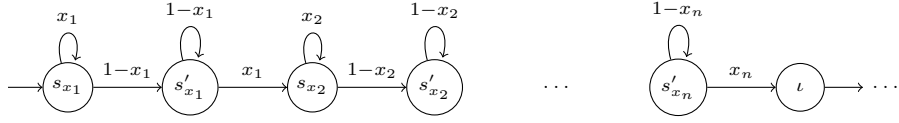
\begin{figure}
\centering
\begin{tikzpicture}[initial text=, every node/.style={font=\scriptsize}, st/.style={draw, circle, inner sep=2pt, minimum size=18pt},]
	\node[initial,st] (s1) {$s_{x_1}$};
	\node[st, right=0.9cm of s1] (s2) {$s'_{x_1}$};
	\node[st, draw, right=0.9cm of s2] (s3) {$s_{x_2}$};
	\node[st, right=0.9cm of s3] (s4) {$s'_{x_2}$};
	\node[circle, right=0.9cm of s4] (s5) {$\hdots$};
	\node[st, right=0.9cm of s5] (s6) {$s'_{x_n}$};
	\node[st, right=0.9cm of s6] (s7) {$\init$};
	\node[right=0.5cm of s7] (s8) {$\hdots$};
	
	\draw[->] (s1) edge node[auto] {$1{-}x_1$} (s2);
	\draw[->] (s2) edge node[auto] {$x_1$} (s3);
	\draw[->] (s3) edge node[auto] {$1{-}x_2$} (s4);
	%\draw[->] (s4) edge node {$p_2$} (s2);
	\draw[->] (s6) edge node[auto] {$x_n$} (s7);

	\draw[->] (s1) edge[loop above] node[auto] {$x_1$} (s1);
	\draw[->] (s2) edge[loop above] node[auto] {$1{-}x_1$} (s2);
	\draw[->] (s3) edge[loop above] node[auto] {$x_2$} (s3);
	\draw[->] (s6) edge[loop above] node[auto] {$1{-}x_n$} (s6);
	\draw[->] (s4) edge[loop above] node[auto] {$1{-}x_2$} (s4);
	\draw[->] (s7) -- (s8);
	
\end{tikzpicture}	
\caption{Gadget for the reduction from Prop.~\ref{pro:e_reach_gp_red_e_reach}. $\init$ is the initial state of the given pMC.}
\label{fig:e_reach_gp_red_e_reach}
\end{figure}
\begin{proposition} \label{pro:e_reach_gp_red_e_reach}
  $\reachdp{>}_{\mathrm{gp}} $ is polynomially reducible to
  $\reachdp{>}_\mathrm{wd}$, and similarly $\reachdp{\geq}_{\mathrm{gp}} $ is
  polynomially reducible to $\reachdp{\geq}_\mathrm{wd}$.
\end{proposition}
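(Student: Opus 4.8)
The plan is to exhibit a many-one reduction that, given a simple pMC $\pmc$ with parameters $\pars = \{x_1,\dots,x_n\}$, initial state $\init$, and target set $\target$, prepends the chain gadget of Figure~\ref{fig:e_reach_gp_red_e_reach} to obtain a new simple pMC $\pmc'$. The gadget consists of one pair of states $(s_{x_i}, s'_{x_i})$ per parameter, wired so that $s_{x_i}$ carries a self-loop with probability $x_i$ and moves to $s'_{x_i}$ with probability $1-x_i$, while $s'_{x_i}$ carries a self-loop with probability $1-x_i$ and advances (to $s_{x_{i+1}}$, or to $\init$ for $i=n$) with probability $x_i$. The new initial state is $s_{x_1}$ and the target set is unchanged. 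Crucially, the gadget reuses the \emph{same} parameters, so a single valuation $\val$ controls both it and the embedded copy of $\pmc$; since only the affine terms $x_i$ and $1-x_i$ occur, $\pmc'$ is again simple, whence $\wdval{\pmc'} = [0,1]^{\pars}$ and $\gpval{\pmc'} = (0,1)^{\pars}$, and the construction is clearly polynomial.

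The core is a case analysis on a well-defined valuation $\val \in [0,1]^{\pars}$, showing the gadget behaves as a filter passing exactly the graph-preserving valuations. First I would treat the interior case $\val \in (0,1)^{\pars}$: here both exit probabilities $1-x_i$ out of $s_{x_i}$ and $x_i$ out of $s'_{x_i}$ are strictly positive, so in the induced finite Markov chain every gadget state is almost surely left after finitely many self-loops; chaining these pairs, $\init$ is reached with probability $1$. As the gadget contains no target state, $\target$ is reachable only through $\init$, and therefore $\pr{\instantiated{\pmc'}{\val}}{}{\eventually\target} = \pr{\instantiated{\pmc}{\val}}{}{\eventually\target}$.

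Second, the boundary case: if $\val(x_i) \in \{0,1\}$ for some $i$, then one gadget state becomes an absorbing self-loop outside the accepting chain --- $s_{x_i}$ when $\val(x_i)=1$ (its exit probability $1-x_i$ is $0$) and $s'_{x_i}$ when $\val(x_i)=0$ (its exit probability $x_i$ is $0$). This traps all mass in a bottom strongly connected component not containing $\init$, so $\init$ and hence $\target$ are unreachable and $\pr{\instantiated{\pmc'}{\val}}{}{\eventually\target}=0$. From these two facts the equivalence is immediate for the lower-bounded relations: for $\bowtie\,\in\,\{>,\geq\}$ a boundary valuation yields $0 \not\bowtie \nicefrac{1}{2}$, so any well-defined witness for $\pmc'$ must lie in $(0,1)^{\pars} = \gpval{\pmc}$ and there agrees with $\pmc$, while conversely every graph-preserving witness for $\pmc$ is a well-defined witness for $\pmc'$ of the same value. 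Hence $\reachdp{>}_\mathrm{gp}$ (resp.\ $\reachdp{\geq}_\mathrm{gp}$) reduces to $\reachdp{>}_\mathrm{wd}$ (resp.\ $\reachdp{\geq}_\mathrm{wd}$). I would emphasise that forcing the boundary to $0$ is tailored to lower bounds: it would \emph{manufacture} spurious witnesses for $<$ and $\leq$, which is exactly why the claim is restricted to $>$ and $\geq$.

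The main obstacle, though a mild one, is the boundary analysis itself: one must verify rigorously that the two complementary self-loops $x_i$ and $1-x_i$ jointly trap all probability off the accepting chain \emph{precisely} when $\val(x_i)\in\{0,1\}$, while leaving the almost-sure traversal in the interior untouched. Finally, semi-continuity of the solution function for simple pMCs --- the very property underlying the companion reduction of Proposition~\ref{pro:e_reach_less_red_e_reach_less_gp} --- is what guarantees that restricting to the open graph-preserving region $(0,1)^{\pars}$ does not perturb the comparison against $\nicefrac{1}{2}$, so that the gap between $\gpval{\pmc}$ and $\wdval{\pmc'}$ closes and the two decision problems are genuinely equivalent.
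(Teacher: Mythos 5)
Your proposal is correct and follows essentially the same route as the paper: the identical chain gadget of Figure~\ref{fig:e_reach_gp_red_e_reach} prepended to the pMC, with the same case split showing that graph-preserving valuations traverse the gadget almost surely (leaving the reachability value unchanged) while any boundary valuation traps all mass in an absorbing gadget state and forces the value to $0$, which is incompatible with $>$ or $\geq \nicefrac{1}{2}$. The only cosmetic difference is your closing appeal to semi-continuity, which is not actually needed here --- the gadget alone closes the gap between $\gpval{\pmc}$ and $\wdval{\pmc'}$ --- but this does not affect correctness.
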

\begin{proof}
Let $\pmc$ be a simple pMC. We extend $\pmc$ with the gadget outlined in Figure~\ref{fig:e_reach_gp_red_e_reach}. Formally, we construct a pMC $\pmc'$ with states $S'\coloneqq S \cup \{ s_x, s'_x \mid x \in \pars \}$, initial state $s_{x_1}$ and
 \begin{align*} \pprob'(s,s') \coloneqq \begin{cases}
 \pprob(s,s) & \text{if }s,s' \in S, \\
 x    & \text{if }s = s' = s_x, \\
 1{-}x & \text{if }s = s' = s'_x, \\
 1{-}x & \text{if }s = s_x \text{ and } s' = s'_x, \\
 x & \text{if }s = s'_x \text{ and } s' = \text{next}(s'_{x}), \\
 0 & \text{otherwise.}	
 \end{cases}
 \end{align*}
where $\text{next}(s'_x)$ is $s_{x+1}$ if $x = x_i$ for some $i < |\pars|$, and $\init$ if $i = |\pars|$, where $\init$ is the initial state of $\pmc$.
The pMC $\pmc'$ is only linearly larger than $\pmc$. 
Observe that the construction of the gadget may be adapted for non-simple pMCs
(with different well-defined parameter valuations).
By construction, we have for every $\val \in \wdval{\pmc}$ and $T \subseteq S$:
 \begin{align*} \pr{\instantiated{\pmc'}{\val}}{}{\eventually\target} =  \pr{\instantiated{\pmc'}{\val}}{}{\eventually\{ \init \}}  \cdot \pr{\instantiated{\pmc}{\val}}{}{\eventually\target}.  \end{align*}
We observe the following: 
\begin{align*}& \forall\;\val \in \gpval{\pmc}:
  \pr{\pmc'[\val]}{}{\eventually\{ \init \}} = 1 \text{~~and thus~~} \pr{\instantiated{\pmc'}{\val}}{}{\eventually\target} =
  \pr{\instantiated{\pmc}{\val}}{}{\eventually\target}
%  \implies & \forall\; \val \in \gpval{\pmdp} \setminus \wdval{\pmdp}:
%  \pr{\instantiated{\pmc}{\val}}{}{\eventually\{ \init \}} = 0.
\end{align*}
and
\begin{align*}
	\forall\; \val \in \wdval{\pmdp} \setminus \gpval{\pmdp}:
	\pr{\instantiated{\pmc'}{\val}}{}{\eventually\{ \init \}} = 0 \text{~~and thus~~} \pr{\instantiated{\pmc'}{\val}}{}{\eventually\target} =
	0.
\end{align*}
Together, we deduce:
\[
  \exists\;\val \in \gpval{\pmc}: \pr{\instantiated{\pmc}{\val}}{}{\eventually\target} \unrhd \nicefrac{1}{2}
	\quad\iff\quad
	\exists\;\val \in \wdval{\pmc'}: \pr{\instantiated{\pmc'}{\val}}{}{\eventually\target} \unrhd \nicefrac{1}{2}.
\]
\end{proof}
\noindent We are not aware of any such reductions for upper bounds.

\subsubsection{pMCs with arbitrarily many parameters}
We first consider the upper right part of Table~\ref{tab:complexity}:
reachability in pMCs with an unbounded number of parameters.

\paragraph{Non-strict inequalities}
First, we establish the following theorem.
\begin{theorem} \label{thm:etr:pmcsarehard}
  The $\reachdp{\leq}_{*}$ and $\reachdp{\geq}_{*}$ problems are all
  \ETR-complete even for acyclic pMCs.
\end{theorem}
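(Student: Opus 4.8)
I would separate the two directions. For membership in \ETR, note that for simple pMCs both candidate regions are semialgebraic: $\wdval{\pmc}=[0,1]^n$ and $\gpval{\pmc}=(0,1)^n$ are described by the (non-strict, resp.\ strict) constraints of Constraints~\ref{constraints:aspects:encoding:polynomialwelldefined}. Hence Theorem~\ref{thm:encoding:pmcwd} (for the $\mathrm{wd}$ variant) and Theorem~\ref{thm:encoding:pmcgp} (for the $\mathrm{gp}$ variant) yield, in polynomial time, an ETR formula $\psi$ whose models are exactly the valuations $\val\in R$ with $\pr{\instantiated{\pmc}{\val}}{}{\eventually\target}\bowtie\nicefrac12$. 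Prefixing an existential block over the parameter variables then reduces $\reachdp{\leq}_{*}$ and $\reachdp{\geq}_{*}$ to deciding membership in the existential theory of the reals, placing both in \ETR.

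The substance is the matching lower bound. I would reduce from the \ETR-hard problem of deciding whether a degree-four polynomial $f\in\ZZ[x_1,\dots,x_n]$ has a root in the unit cube (Lemma~\ref{lem:etr:mb4feas}). Writing $f=\sum_\alpha c_\alpha x^\alpha$ and $M\coloneqq\sum_\alpha|c_\alpha|$, we have $|f|\le M$ on $[0,1]^n$, so $h\coloneqq\frac12+\frac{f}{2M}$ maps $[0,1]^n$ into $[0,1]$ and satisfies $h(x)=\frac12\iff f(x)=0$. The target of the construction is a simple acyclic pMC whose reachability probability equals $h^2+(1-h)^2=\frac12+\frac{f^2}{2M^2}$ for the $\reachdp{\leq}$-reduction and $2h(1-h)=\frac12-\frac{f^2}{2M^2}$ for the $\reachdp{\geq}$-reduction. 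Since $h(1-h)\le\frac14$ with equality iff $h=\frac12$, the first function is $\le\nicefrac12$ at some admissible $\val$ iff $f$ has a root, and the second is $\ge\nicefrac12$ at some $\val$ iff $f$ has a root; crucially the threshold is already exactly $\nicefrac12$.

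The key step is to realise $h$ as the probability of reaching a designated \emph{accept} exit of a simple acyclic gadget $H$. First I would realise each monomial $x^\alpha=x_{i_1}\cdots x_{i_d}$ by a length-$d$ chain that flips the relevant parameters and reaches the chain end with probability exactly $x^\alpha$. A single dispatcher state then branches, using \emph{rational constant} transitions, with weight $\frac{|c_\alpha|}{2M}$ into the gadget for $x^\alpha$: for $c_\alpha>0$ the chain end routes to the accept exit (contributing $\frac{c_\alpha}{2M}x^\alpha$), for $c_\alpha<0$ it routes to the reject exit while the complementary edges route to accept (contributing $\frac{|c_\alpha|}{2M}(1-x^\alpha)$), and a residual constant transition fixes the additive term so that the accept probability is exactly $h$. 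All labels have the form $x$, $1-x$, or a rational constant, and since $M$ and the number of monomials are polynomial in the input, $H$ is a legal simple pMC built in polynomial time. I would then chain two fresh copies of $H$: routing outcomes $(\mathrm{accept},\mathrm{accept})$ and $(\mathrm{reject},\mathrm{reject})$ to the target realises $h^2+(1-h)^2$, whereas routing $(\mathrm{accept},\mathrm{reject})$ and $(\mathrm{reject},\mathrm{accept})$ realises $2h(1-h)$, completing both reductions.

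The main obstacle is exactly this realisability step: unlike the non-simple pMC behind Proposition~\ref{pro:aspects:wdetrhard}, a simple pMC cannot carry $f$ on a single edge, so negative coefficients and values of $f$ outside $[0,1]$ must be absorbed structurally — which is precisely what the normalisation by $M$ together with the accept/reject routing achieves, while keeping everything acyclic and of polynomial size. The remaining care concerns the valuation domain: for $\reachdp{\bowtie}_{\mathrm{wd}}$ it is $[0,1]^n$ and for $\reachdp{\bowtie}_{\mathrm{gp}}$ the open cube $(0,1)^n$. I would therefore invoke Lemma~\ref{lem:etr:mb4feas} in a form guaranteeing a root in the interior of the cube (or argue that the hard instance can be perturbed so that any root is interior), so that the \emph{same} construction proves hardness for both the well-defined and graph-preserving variants; as all gadgets are acyclic, the hardness already holds for acyclic pMCs.
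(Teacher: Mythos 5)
Your proposal is correct in substance, and its core gadget is the same as the paper's: realising an affine rescaling of $f$ as a reachability probability of a simple acyclic pMC, with monomials as chains of $x$/$(1{-}x)$ transitions, rational constants at a dispatcher, and negative coefficients absorbed via the complementary exits. Your $h=\nicefrac{1}{2}+\nicefrac{f}{2M}$ is exactly the paper's $\nicefrac{(f+A)}{B}$ with $A=M$, $B=2M$ (Proposition~\ref{pro:aspects:chonevrescaling}, built on the rewriting of Lemma~\ref{lem:aspects:polyreformulation}; your ``complementary edges to accept'' trick is a mild variant of that rewriting). Where you genuinely diverge is at the top level. The paper exploits the \emph{non-negativity promise} in mb4FEAS-c: since $f\geq 0$ on the cube, $f[\val]\leq 0$ iff $\nicefrac{(f+A)}{B}\leq\nicefrac{1}{2}$ for $A=\nicefrac{B}{2}$, so a single gadget suffices, and the $\geq$ case is obtained by applying the construction to $-f$. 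You instead enforce non-negativity \emph{inside the model} by composing two copies of the gadget to realise $h^2+(1-h)^2=\nicefrac{1}{2}+\nicefrac{f^2}{2M^2}$ (resp.\ $2h(1-h)$ for the $\geq$ case). This buys you a reduction that works from arbitrary degree-4 polynomials without the promise and derives both comparison directions from one gadget, at the cost of an extra composition stage; the paper's route is shorter because the promise comes for free from the sum-of-squares construction behind Lemma~\ref{lem:etr:mb4feas}. Your handling of the graph-preserving variant (reduce from the open-cube version) matches the paper's use of mb4FEAS-o; the perturbation alternative you mention is unnecessary.

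One small repair is needed in the composition step: with literally two chained copies $H_1\to H_2$, the accept exit of $H_2$ would have to lead to the target when entered via $H_1$'s accept exit and to the sink when entered via $H_1$'s reject exit, which a Markov chain state cannot do. You must duplicate the second stage, attaching one fresh copy of $H$ to each exit of $H_1$ and wiring their exits according to the desired outcome pairs. This is still polynomial and acyclic, so it does not affect the result, but as written the construction is underspecified.
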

\noindent For this result, we reduce from the following \ETR-hard problem.
\begin{definition}
\label{def:etr:mb4feasc}
  The \emph{modified-closed-bounded-4-feasibility} (mb4FEAS-c) problem asks: 
  Given a (non-negative) polynomial $f$ of degree 4, does there exist some
  \( \val\colon \pars \rightarrow [0,1] \) such that \(f[\val] \leq 0\)?
  The \emph{modified-open-bounded-4-feasibility} (mb4FEAS-o) problem is analogously
  defined with $\val$ ranging over $(0,1)$. 
\end{definition}
\noindent This problem easily reduces to its $\geq$-variant by multiplying $f$
with $-1$.
\begin{lemma}
\label{lem:etr:mb4feas}
  The mb4FEAS-c and mb4FEAS-o problems are \ETR-hard.
\end{lemma}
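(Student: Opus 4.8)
The plan is to reduce from a bounded-domain version of the general real-feasibility problem. Since $f$ is promised non-negative, the condition $f[\val]\leq 0$ is equivalent to $f[\val]=0$, so mb4FEAS-c (resp.\ mb4FEAS-o) really asks whether a non-negative polynomial of degree at most $4$ has a zero inside $[0,1]^{\pars}$ (resp.\ $(0,1)^{\pars}$). I would start from the \ETR-hard problem of deciding whether a single polynomial equation $g=0$, with $g\in\QQ[\pars]$ of \emph{arbitrary} degree, has a solution inside the box $[0,1]^{\pars}$, respectively $(0,1)^{\pars}$. Reducing an entire polynomial \emph{system} to one equation is free via the sum-of-squares trick $p_1=\dots=p_r=0 \iff \sum_i p_i^2=0$. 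That feasibility over such a bounded box remains \ETR-hard is the delicate ingredient, which I would obtain by adapting the bounding technique of Schaefer and \v{S}tef\'{a}nkovi\v{c}~\cite{DBLP:journals/mst/SchaeferS17}, composed with an affine rescaling that moves whichever box their argument produces onto $[0,1]$ or $(0,1)$.

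The core of the reduction is an arithmetization that lowers the degree to (at most) $4$ while keeping every variable inside the box. I would write $g=\sum_j c_j m_j$ as a sum of monomials and introduce a fresh \emph{product variable} $w_j$ for each monomial $m_j$, together with multiplication gadgets that build $w_j$ from the variables in $\pars$ using only degree-$2$ equations of the shape $z-u\cdot v=0$. Exponents may be binary-encoded and hence exponentially large, but repeated squaring builds each power with only polynomially many gadgets, so the reduction stays polynomial-time. The point that makes the box-restriction compatible with this construction is that $[0,1]$ and $(0,1)$ are \emph{closed under multiplication}: if the factors lie in the box, every intermediate product does too, so each product variable may legitimately be constrained to the box.

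After linearization, $g$ is represented by the \emph{linear} form $L\coloneqq\sum_j c_j w_j$, and the original constraint $g=0$ becomes ``$L=0$ and all gadget equations $g_\ell\coloneqq z_\ell-u_\ell v_\ell=0$ hold''. I would then bundle everything into the single polynomial
\[
  f \coloneqq L^2 + \sum_\ell g_\ell^2,
\]
which is a sum of squares (hence non-negative, matching the promise) and has degree $4$, since each $g_\ell$ has degree $2$. It vanishes at a point of the box iff $L=0$ and every gadget holds there. Correctness is then routine in both directions: given a box solution of $g=0$, set each product variable to the actual value of its monomial (which lies in the box) to obtain a zero of $f$; conversely, any box zero of $f$ forces all gadgets, so the product variables equal their intended monomials, and $L=0$ yields $g[\val]=0$ with $\val$ in the box. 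Taking the closed box $[0,1]$ yields mb4FEAS-c and the open box $(0,1)$ yields mb4FEAS-o, using that products of strictly interior values remain strictly interior.

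The main obstacle is not this arithmetization, which is standard, but establishing \ETR-hardness of feasibility over a box whose description has polynomial size. One is tempted to simply rescale an unbounded instance into a box, but the only generic a priori bound on the magnitude of real solutions is doubly exponential and therefore not polynomially writable, so a direct rescaling fails. Circumventing exactly this issue is the purpose of the Schaefer--\v{S}tef\'{a}nkovi\v{c} construction, and reusing their bounding idea is the heart of the proof; everything downstream is the elementary degree-$4$ encoding described above.
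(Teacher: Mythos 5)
Your proposal is correct, but it arrives at the degree-$4$ sum-of-squares polynomial by a different decomposition than the paper. The paper's (equally sketchy) proof reduces directly from the existence of common roots of a system of \emph{quadratic} polynomials in the unit ball, which Schaefer's Lemma~3.9 already certifies as \ETR-complete; summing the squares of those quadratics then immediately yields a non-negative degree-$4$ polynomial, and all that remains is to shift the ball onto $[0,1]^{\pars}$ resp.\ $(0,1)^{\pars}$. You instead start from single-polynomial feasibility of arbitrary degree over a box and rebuild the degree reduction yourself: fresh product variables, degree-$2$ multiplication gadgets $z-uv$ with repeated squaring for binary exponents, and then $f \coloneqq L^2+\sum_\ell g_\ell^2$. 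This is sound --- your observation that $[0,1]$ and $(0,1)$ are closed under multiplication is exactly what makes the gadget variables compatible with the box constraint, and the two directions of correctness go through as you describe --- but it essentially re-derives the reduction to quadratic systems that is already packaged inside the cited lemma, so it is longer where the paper is shorter. Both arguments hinge on the same external crux, namely that root-finding over a bounded domain with a polynomial-size description remains \ETR-hard; you correctly isolate this as the heart of the matter and defer it to the Schaefer--\v{S}tef\'ankovi\v{c} bounding technique, which is precisely the level of deferral the paper's own proof permits itself. What your route buys is self-containedness on the degree-reduction side and a proof that works from the more generic hypothesis of single-equation box feasibility; what the paper's route buys is brevity, since the quadratic-system formulation over a bounded domain is taken off the shelf.
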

\begin{proof}[Proof sketch]
  Essentially, one reduces from the existence of common roots of quadratic
  polynomials lying in a unit ball, which is 
  \ETR-complete~\cite[Lemma 3.9]{Schaefer2013}. The reduction to mb4FEAS
  follows the reduction\footnote{Essentially the polynomial $f$ in mb4FEAS is
  constructed by taking the sum-of-squares of the quadratic polynomials, and
  further operations are adequately shifting the polynomial.} between
  unconstrained variants (i.e., variants in which the position of the root is
  not constrained) of the same decision problem~\cite[Lemma
  3.2]{DBLP:journals/mst/SchaeferS17}.
\end{proof}

Before presenting a proof of our \ETR-hardness claim we recall the following
result hinted at by Chonev~\cite{DBLP:conf/rp/Chonev19}.
More precisely, we consider the question: Given a polynomial $f$, does there exist a (simple, acyclic) pMC such that $\prsol{\target}{\pmc} = f$? 
We start with a positive example.
\begin{figure}
\centering
	\begin{tikzpicture}[initial text=,every node/.style={scale=1, font=\scriptsize}]
	\node[state, initial, scale=0.5] (s0) {};
	\node[state, right=of s0,yshift=1.1cm, scale=0.5] (b1) {};
	\node[state, right=of s0,yshift=-1.1cm, scale=0.5] (c1) {};
	\node[state, right=of b1, scale=0.5] (b2) {};
	\node[state, right=of b2, scale=0.5,accepting] (b3) {};
	\node[state, right=of c1, scale=0.5,accepting] (c2) {};
	\node[state, right=2cm of s0, scale=0.5] (bot) {};
	
	\draw[->] (s0) edge node[above] {$\nicefrac{1}{2}$} (b1);
	\draw[->] (s0) edge node[above] {$\nicefrac{1}{6}$} (bot);
	
	\draw[->] (s0) edge node[above] {$\nicefrac{1}{3}$} (c1);
	
	\draw[->] (b1) edge node[above] {$x$} (b2);
	\draw[->] (b1) edge node[left] {$1-x$} (bot);
	
	\draw[->] (b2) edge node[above] {$x$} (b3);
	\draw[->] (b2) edge node[right] {$1-x$} (bot);

	\draw[->] (c1) edge node[above] {$y$} (c2);
	\draw[->] (c1) edge node[right] {$1-y$} (bot);
	
	\draw[->] (bot) edge[loop right] node[right] {$1$} (bot);

\end{tikzpicture}
	\caption{pMC for the polynomial $\frac{1}{2}x^2 + \frac{1}{3}y$}
	\label{fig:aspects:frompoltopmc}
\end{figure}
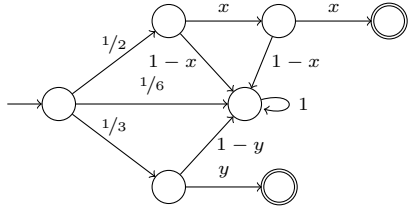
\begin{example}
	The polynomial $f = \frac{1}{2}x^2 + \frac{1}{3}y$ corresponds to the solution function of the pMC in Figure~\ref{fig:aspects:frompoltopmc}.
\end{example}
The polynomial in the example is easy to translate. In particular, all coefficients are positive and they sum up to a value less than one.
In the pMC, all transitions of the form $1-x$ (for any parameter $x$) go to the sink state immediately. 
To handle negative coefficients, we are going to make a more flexible use of the $1-x$ transitions. 
We first reformulate the polynomials.
\begin{lemma}(\cite[Remark~1]{DBLP:conf/rp/Chonev19})
\label{lem:aspects:polyreformulation}
Let $f \in \QQ[\vars]$ be a polynomial. We can rewrite $f$ as:
	\begin{align}
		\label{eq:chonevtrickintermediate}
		f = \sum_{i=1}^m a_i \cdot h_i + b\qquad\text{ with }h_i =
 \prod_{1 \leq j \leq \nrvars} x_j^{e_{i,j}}\cdot(1-x_j)^{e'_{i,j}}
	\end{align}
	 with  $a_i \in \QQpos$, $e_{i,j},e'_{i,j} \in \NN$, and $b \in \mathbb{Q}$.
\end{lemma}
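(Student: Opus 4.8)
The plan is to use the \emph{Bernstein representation} of polynomials over the unit cube, whose driving identity is the partition of unity
\[
  \prod_{j=1}^\nrvars \bigl(x_j + (1-x_j)\bigr)^N \;=\; 1 \qquad\text{for every } N \in \NN .
\]
The only genuine difficulty is that $f$, in its ordinary monomial form, may carry negative coefficients, whereas the $a_i$ are required to lie in $\QQpos$. All the work therefore goes into absorbing this negativity into the single additive constant $b$, and the partition of unity above is exactly the tool that makes this possible.

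Concretely, I would first fix an integer $N$ at least as large as the degree of $f$ in each variable (e.g.\ $N = \deg f$) and consider, for each multi-index $\vec e = (e_1,\dots,e_\nrvars) \in \{0,\dots,N\}^\nrvars$, the normalised product
\[
  \hat B_{\vec e} \;\coloneqq\; \prod_{j=1}^\nrvars \binom{N}{e_j}\, x_j^{e_j}(1-x_j)^{N-e_j} ,
\]
which is a positive rational multiple of a term $\prod_j x_j^{e_j}(1-x_j)^{N-e_j}$ already having the required shape with exponents in $\NN$. These $\hat B_{\vec e}$ are (a rescaling of) the tensor-product Bernstein basis and hence span all polynomials of multidegree at most $(N,\dots,N)$; in particular $f = \sum_{\vec e} \gamma_{\vec e}\, \hat B_{\vec e}$ for suitable coefficients. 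Since the change of basis between the monomial basis and the $\hat B_{\vec e}$ is an invertible integer (hence rational) matrix and $f \in \QQ[\vars]$, every $\gamma_{\vec e}$ lies in $\QQ$.

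It then remains to remove the negative $\gamma_{\vec e}$. Multiplying out the partition-of-unity identity shows $\sum_{\vec e} \hat B_{\vec e} = 1$, so, choosing the rational number $b \coloneqq \min_{\vec e} \gamma_{\vec e}$, I can write
\[
  f \;=\; \sum_{\vec e} (\gamma_{\vec e} - b)\, \hat B_{\vec e} \;+\; b\cdot\!\sum_{\vec e} \hat B_{\vec e}\;-\;b \;=\; \sum_{\vec e} (\gamma_{\vec e} - b)\, \hat B_{\vec e} \;+\; b ,
\]
where now every coefficient $\gamma_{\vec e} - b \geq 0$. Folding the positive factor $\prod_j \binom{N}{e_j}$ of $\hat B_{\vec e}$ together with $(\gamma_{\vec e}-b)$ into a single coefficient $a_{\vec e}$, discarding the multi-indices for which $a_{\vec e} = 0$, and setting $h_{\vec e} = \prod_j x_j^{e_j}(1-x_j)^{N-e_j}$, yields exactly the claimed form with $a_i \in \QQpos$, $e_{i,j}, e'_{i,j} = e_j, N-e_j \in \NN$, and $b \in \QQ$. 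The conceptual crux, which I regard as the main obstacle, is precisely the passage from the monomial basis (where negative coefficients are unavoidable) to a basis of \emph{non-negative} functions summing to one: once in that basis, a single global shift by the minimal coefficient makes all weights non-negative at the sole cost of the additive constant $b$.
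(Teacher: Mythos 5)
Your proof is correct, but it takes a genuinely different route from the paper's. (One harmless slip: in your displayed chain the middle expression $\sum_{\vec e}(\gamma_{\vec e}-b)\hat B_{\vec e} + b\sum_{\vec e}\hat B_{\vec e} - b$ evaluates to $f-b$, not $f$; the identity you actually need and do obtain is simply $f = \sum_{\vec e}(\gamma_{\vec e}-b)\hat B_{\vec e} + b$, which follows from $\sum_{\vec e}\hat B_{\vec e}=1$.) The paper works term by term: it proves, by induction on $d$, the telescoping identity $-x_1\cdots x_d = -1+\sum_{i=1}^d(1-x_i)\cdot x_{i+1}\cdots x_d$ and applies it only to the monomials of $f$ carrying negative coefficients, leaving the rest untouched. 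You instead re-expand all of $f$ in the tensor-product Bernstein basis and shift every coefficient by the minimum, using the partition of unity to push the shift into the constant $b$. Both arguments establish the lemma as stated. The important difference is quantitative: the paper's rewriting yields $m\in\mathcal{O}(td)$ terms ($t$ the number of monomials of $f$, $d$ its degree), and this bound is used explicitly in the proof of Proposition~\ref{pro:aspects:chonevrescaling} to construct the pMC in polynomial time, which in turn underlies the \ETR-hardness reduction of Theorem~\ref{thm:etr:pmcsarehard}. Your decomposition ranges over $(N+1)^{\nrvars}$ basis functions, i.e., exponentially many in the number of variables, and subtracting the minimum leaves generically almost all of them with strictly positive weight. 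So while your argument proves the lemma (which makes no size claim), it would not support the downstream polynomial-time construction. In short, your approach buys conceptual cleanliness --- a single global change to a basis of non-negative functions summing to one --- whereas the paper's local rewriting buys the size bound that the hardness results actually depend on.
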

\begin{proof}
	Observe that a monomial $-x_1 \cdot \dots \cdot x_d$ of degree $d \geq 0$ may be written as
	\begin{align}
		\label{eq:chonevmonomialtrick}
		-x_1 \cdot \dots \cdot x_d = {-}1 + \sum_{i=1}^d(1-x_i) \cdot x_{i+1} \cdot \dots \cdot x_d ,
	\end{align}
	which is proved by induction on $d$: For $d = 0$, both sides are ${-}1$ (an empty product equals $1$). For $d \geq 0$, we multiply both sides of \eqref{eq:chonevmonomialtrick} by $x_{d+1}$ to obtain
	\begin{align*}
		 -x_1 \cdot \dots \cdot x_d \cdot x_{d+1} 
		&= {-} x_{d+1} + \sum_{i=1}^{d}(1-x_i) \cdot x_{i+1} \cdot \dots \cdot x_d \cdot x_{d+1}  \\
		&= (1 - x_{d+1}) - 1  + \sum_{i=1}^{d}(1-x_i) \cdot x_{i+1} \cdot \dots \cdot x_d \cdot x_{d+1} \\
		&= {-1} + \sum_{i=1}^{d+1}(1-x_i) \cdot x_{i+1} \cdot \dots \cdot x_d \cdot x_{d+1}.
	\end{align*}	Hence applying \eqref{eq:chonevmonomialtrick} to every term of
  $f$ we obtain Equation~\eqref{eq:chonevtrickintermediate} 
%	\begin{align*}
%		f = \sum_{i=1}^m a_i \cdot h_i + b
%	\end{align*}
	 where the $a_i$ are \emph{positive} rational coefficients, the $h_i$  are nonempty products of terms from $\{x, (1-x) \mid x \in \pars \}$ and $b \in \mathbb{Q}$ is a constant term.
	  We may assume that $b \leq 0$, otherwise $b = b \cdot x + b \cdot (1-x)$
    for any $x \in \pars$ and we may ``pull'' $b$ inside the sum.	
\end{proof}
We will show that this reformulation
allows to translate and scale a polynomial $f$ such that there exists a pMC $\pmc$ with targets $\target$ and	\[  \frac{f + A}{B} = \prsol{\target}{\pmc} \quad\text{ for some }A \in \QQnn, B \in \QQpos \]

  \begin{figure}
  \centering
\begin{tikzpicture}[initial text=,every node/.style={scale=1, font=\scriptsize}]
	\node[state, initial, scale=0.5] (s0) {};
	\node[state, right=of s0,yshift=1.2cm, scale=0.5] (a1) {};
	\node[state, right=of s0,yshift=0.4cm, scale=0.5] (b1) {};
	\node[state, right=of s0,yshift=-0.4cm, scale=0.5] (c1) {};
	\node[state, right=of s0,yshift=-1.2cm, scale=0.5] (d1) {};
	\node[state, right=of a1, scale=0.5] (a2) {};
	\node[state, right=of a2, scale=0.5] (a3) {};
	\node[state, right=of a3, scale=0.5, accepting] (a4) {};
	\node[state, right=of b1, scale=0.5] (b2) {};
	\node[state, right=of b2, scale=0.5,accepting] (b3) {};
	\node[state, right=of c1, scale=0.5,accepting] (c2) {};
	\node[state, right=of d1, scale=0.5,accepting] (d2) {};
	
	\draw[->] (s0) edge[bend left] node[pos=0.75,below] {$\nicefrac{2}{8}$} (a1);
	\draw[->] (s0) edge node[above] {$\nicefrac{2}{8}$} (b1);
	\draw[->] (s0) edge node[above] {$\nicefrac{2}{8}$} (c1);
	\draw[->] (s0) edge[bend right] node[pos=0.75,above] {$\nicefrac{1}{8}$} (d1);
	
	\draw[->] (a1) edge node[above] {$1-x$} (a2);
	\draw[->] (a2) edge node[above] {$x$} (a3);
	\draw[->] (a3) edge node[above] {$y$} (a4);
	
	\draw[->] (b1) edge node[above] {$1-x$} (b2);
	\draw[->] (b2) edge node[above] {$y$} (b3);
	
	\draw[->] (c1) edge node[above] {$1-y$} (c2);
	\draw[->] (d1) edge node[above] {$y$} (d2);

\end{tikzpicture}
  \caption{pMC with $\prsolgp{\target}{\pmc} = \nicefrac{{-}2x^2y + y + 2}{8}$}
  \label{fig:chonevtrick}
  \end{figure}
\begin{example}
	Consider the polynomial ${-}2x^2y + y$. 
	We reformulate this to:
	\[ 2\cdot\left( (1-x)xy + (1-x)y + (1-y) - 1 \right) + y \] and then to \[ 2
	\cdot (1-x)xy + 2\cdot (1-x)y + 2\cdot(1-y) + y - 2.\]  
	 After shifting upwards (with $+2$) and rescaling (with $\frac{1}{8}$), we can construct the pMC $\pmc$ depicted in Figure~\ref{fig:chonevtrick}.
	\label{ex:chonevtrickpols}
\end{example}

\noindent
Formally, we show the following slightly more general proposition. 
\begin{proposition}(\cite{DBLP:conf/rp/Chonev19})
\label{pro:aspects:chonevrescaling}
	Let $f$ be a polynomial. For any $A$ and $B$ sufficiently large, there exists a pMC $\pmc$ with targets 
	$\target$ such that 
	\[  \frac{f + A}{B} = \prsol{\target}{\pmc}. \]
	
  Moreover, if $d$ is the total degree of $f$, $t$ the number of terms in $f$
  and $\kappa$ a bound on the (bit-)size of the coefficients and the thresholds $\mu$,
  $\lambda$, then $\pmc$ may be
  constructed in time $\mathcal{O}(poly(d,t,\kappa))$.	
\end{proposition}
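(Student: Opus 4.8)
The plan is to turn the reformulation of Lemma~\ref{lem:aspects:polyreformulation} directly into a pMC whose branching structure mirrors the sum $\sum_i a_i h_i + b$. Recall that this lemma lets us write $f = \sum_{i=1}^m a_i h_i + b$ with $a_i \in \QQpos$, $b \in \QQ$ and $b \leq 0$, where each $h_i = \prod_{j} x_j^{e_{i,j}}(1-x_j)^{e'_{i,j}}$ is a nonempty product of factors drawn from $\{x, 1-x \mid x \in \pars\}$. The key observation is that such a product $h_i$ is exactly the path-mass of a single acyclic \emph{chain} of states in a simple pMC: lay out one transition per factor of $h_i$, label it $x_j$ or $1-x_j$ accordingly, send the forward mass along the chain and route each complementary mass to an absorbing sink $\bot$; the chain then reaches the target with probability precisely $h_i$. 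The positive coefficients $a_i$ become the probabilities with which the initial state selects each chain, and the negative constant $b$ is absorbed by the upward shift $A$.

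Concretely, I would build $\pmc$ as follows. Introduce an initial state $\init$, a single absorbing sink $\bot$ (self-loop), and for each term $i$ a chain $C_i$ with one transition per factor of $h_i$ (hence $\deg(h_i)$ transitions), ending in a fresh target state. From $\init$ add a transition to the head of $C_i$ with probability $\nicefrac{a_i}{B}$, a direct transition to a target with probability $\nicefrac{(A+b)}{B}$ (realising the constant term), and a transition to $\bot$ carrying the remaining mass. Inside $C_i$, each state carries the forward transition labelled by the corresponding factor and sends its complement to $\bot$. To guarantee well-definedness it suffices to choose $A \geq -b$ (so that $A+b \geq 0$) and $B \geq \sum_{i} a_i + (A+b)$ (so that the outgoing probabilities of $\init$ sum to at most one); this is the precise content of ``$A$ and $B$ sufficiently large'' here, and e.g.\ $A = -b$, $B = \sum_i a_i$ works.

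For correctness, note that $\pmc$ is acyclic apart from the self-loops on $\bot$ and on the targets, and the only finite paths from $\init$ to $\target$ that avoid $\bot$ traverse a single chain $C_i$ in full. Hence the path-mass decomposition of the reachability probability gives
\[
  \prsol{\target}{\pmc} \;=\; \frac{A+b}{B} + \sum_{i=1}^m \frac{a_i}{B}\,h_i \;=\; \frac{1}{B}\Bigl(\sum_{i=1}^m a_i h_i + b + A\Bigr) \;=\; \frac{f + A}{B},
\]
and because the model is acyclic this is an identity of functions on the whole valuation domain, so no graph-preservation caveat is needed. For the size bound, the reformulation blows up a single degree-$d$ monomial into at most $d$ products of degree $\leq d$ (plus a contribution to the constant), so $m = \mathcal{O}(t\,d)$ and each chain has length $\leq d$; thus $\pmc$ has $\mathcal{O}(t\,d^2)$ states, and the coefficients $a_i$ together with $A$ and $B$ stay polynomial in $d$, $t$ and $\kappa$. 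The construction is therefore computable in time $\mathcal{O}(\mathrm{poly}(d,t,\kappa))$.

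The only genuinely delicate point is the handling of negative coefficients, and this is entirely outsourced to Lemma~\ref{lem:aspects:polyreformulation}: once every monomial is re-expressed with positive coefficients at the cost of extra $(1-x_j)$ factors and a single global negative constant, the remainder is the routine bookkeeping above. I would still double-check two boundary issues: that choosing $A+b = 0$ (so the constant branch has probability zero and is simply omitted, as in Figure~\ref{fig:chonevtrick}) causes no problem, and that the quantifier ``for any $A,B$ sufficiently large'' must be read with $B$ large relative to $A$ as above, since $A$ and $B$ cannot be enlarged fully independently without violating $\sum_i a_i + (A+b) \leq B$.
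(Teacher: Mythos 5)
Your construction is the same as the paper's: both apply Lemma~\ref{lem:aspects:polyreformulation} to obtain $f=\sum_i a_i h_i + b$ with positive $a_i$, shift by $A$ and rescale by $B$ so that the coefficients form a subdistribution, and then realise each product $h_i$ as the path-mass of an acyclic chain branching off the initial state (the paper's Figure~\ref{fig:chonevtrickidea}), with the complexity bound following from $m=\mathcal{O}(td)$ and chain length at most $d$. Your explicit conditions $A\geq -b$ and $B\geq \sum_i a_i + (A+b)$ are just a slightly more careful statement of the paper's ``$A$ and $B$ sufficiently large''; the argument is correct.
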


\begin{proof}
Recall that $f$ may be written as  
\[ 
f = \sum_{i=1}^m a_i \cdot h_i + b \quad\text{ with $a_i \in \QQnn$ and $b \in \QQneg$. }
 \]
 Let $A > b$. We reformulate: 
 \[ 
 f + A = \sum_{i=1}^m a_i \cdot h_i + b' \quad\text{ with $a_i \in \QQnn$ and $b' = (b+A) \in \QQnn$.}
 \]
 Let $B > \sum_{i=1}^m a_i + b'$.
We can write 
\[ 
\tilde{f} \coloneqq  \frac{f + A}{B} = \sum_{i=1}^m \tilde{a_i} \cdot h_i + \tilde{b}
\]  with $\tilde{a}_i, \tilde{b} \in \QQnn$ and  $\sum_{i=1}^m \tilde{a}_i + \tilde{b} < 1$.
	The modified polynomial $\tilde{f}$ naturally corresponds to a simple acyclic pMC $\tilde{\pmc}$ with $\prsol{\target}{\pmc} = \tilde{f}$ as shown in Figure~\ref{fig:chonevtrickidea}.
	\begin{figure}[tb]
	\centering
	\tikzset{decoration={snake,amplitude=.4mm,segment length=2mm,
                       post length=0mm,pre length=0mm}}
	\begin{tikzpicture}[every node/.style={scale=1, font=\scriptsize}]
		\node[state,initial,initial text=,scale=0.6] (s0)  {$ $};
		\node[right= of s0] (dots) {$\vdots$};
		\node[state,above=0.24cm of dots,scale=0.6] (s1) {};
		\node[state,below=0.24cm of dots,scale=0.6] (sm) {};
		\node[state,right=1.5cm of s1,scale=0.6,accepting]    (v1) {};
		\node[state,right=1.5cm of sm,scale=0.6,accepting]    (vm) {};
		
		\draw[->] (s0) edge node[above,xshift=-2mm] {$\tilde{a}_1$} (s1);
		\draw[->] (s0) edge node[below,xshift=-2mm] {$\tilde{a}_m$} (sm);
		\draw[->] (s0) edge[bend left=60] node[above,xshift=-2mm] {$\tilde{b}$} (v1);

		\draw[->,decorate] (s1) -- node[above] {$h_1$} (v1);
		\draw[->,decorate] (sm) -- node[above] {$h_m$} (vm);
		
	\end{tikzpicture}	
	\caption{The essential construction of the pMC in Proposition~\ref{pro:aspects:chonevrescaling}: Any probability mass not drawn goes to a sink. }
	\label{fig:chonevtrickidea}
	\end{figure}
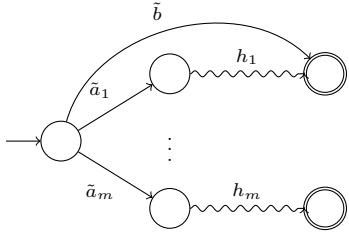

	For the complexity of the construction, notice that $m$ in the sum \eqref{eq:chonevtrickintermediate} is in $\mathcal{O}(td)$ where $d$ and $t$ are bounds on the total degree and the number of terms of $f$, respectively. 
	The $h_i$ are products of at most $d$ terms. 
	The $a_i$ are the absolute values of the original coefficients of $f$ and $b$ is the sum of at most $t$ of those coefficients. 
	Hence $a_i$, $b$, $A$, $B$ and the polynomial $\tilde{f}$ may be computed in time $\mathcal{O}(poly(t,d,\kappa))$. The same then also holds for the pMC $\tilde{\pmc}$.
\end{proof}	
%
%\begin{lemma}[Chonev's trick~{\cite[Remark~1]{DBLP:conf/rp/Chonev19}}]
%  \label{lem:chonevtrick}
%  Let $f \in \mathbb{Q}[X]$ be a polynomial, $\mu \in \QQ$ and any $0 < \lambda <
%  1$. There exists a simple acyclic pMC $\mdp$ with a target state $T$ such that
%  for all $\val \colon X \rightarrow [0,1]$ and all comparison relations
%  $\bowtie~\in \{<,\leq,\geq,>,=\}$ it holds that
%	\[ 
%    f[\val] \bowtie \mu \Longleftrightarrow \Pr_{\mdp[\val]}(\lozenge T) \bowtie \lambda.
%	\]
%  Moreover, if $d$ is the total degree of $f$, $t$ the number of terms in $f$
%  and $\kappa$ a bound on the (bit-)size of the coefficients and the thresholds $\mu$,
%  $\lambda$, then $\mdp$ can be
%  constructed in time $\mathcal{O}(poly(d,t,\kappa))$.
%\end{lemma}

\begin{proof}[Proof of Theorem~\ref{thm:etr:pmcsarehard}]
  The reduction from mb4FEAS-c to $\reachdp{\leq}_\mathrm{wd}$ consists in
  constructing for a given polynomial a pMC using Propostion~\ref{pro:aspects:chonevrescaling}
  with $\mu = 0$ and $\lambda = \frac{1}{2}$.  For
  $\reachdp{\leq}_\mathrm{gp}$, we reduce from the open variant
  and notice that as the construction in Propostion~\ref{pro:aspects:chonevrescaling} preserves
  all satisfying instantiations $\val\colon X \rightarrow [0,1]$ it, in
  particular, also preserves them on the graph-preserving parameter
  valuations.  For $\geq$, we apply Proposition~\ref{pro:aspects:chonevrescaling} on ${-}f$. 
\end{proof}
Observe that there are instances of the mb4FEAS problems which admit a unique
solution, and this solution may be irrational. In contrast, if there exists a
solution for a constraint $f \cgt 0$, then there exist infinitely many
(rational) solutions. To the best of our knowledge, the complexity of a
variant of these problems with strict bounds is open. Therefore, we have no
\ETR-hardness proof for $\reachdp{}$ with strict bounds.  In general,
\emph{conjunctions} of strict inequalities are also \ETR-complete
\cite{DBLP:journals/mst/SchaeferS17}. We exploit this in the proof of
Theorem~\ref{thm:etr:mdpsarehard}.

\paragraph{Strict inequalities}
We now move to the problems with strict inequalities.
%%%
\begin{theorem} \label{thm:e_reach_gr_gp_np_hard}
	$\reachdp{>}_{\text{*}}$ and $\reachdp{<}_{\text{*}}$ are \NP-hard.
\end{theorem}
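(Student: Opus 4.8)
The plan is to reduce from \textsc{3SAT} and to prove \NP-hardness of the two well-defined variants $\reachdp{>}_{\mathrm{wd}}$ and $\reachdp{<}_{\mathrm{wd}}$ directly, from which the graph-preserving variants follow by the reductions already in hand: Proposition~\ref{pro:e_reach_less_red_e_reach_less_gp} turns \NP-hardness of $\reachdp{>}_{\mathrm{wd}}$ into \NP-hardness of $\reachdp{>}_{\mathrm{gp}}$, and Proposition~\ref{pro:complexity:restrictrelations} then transfers it to $\reachdp{<}_{\mathrm{gp}}$. Since the stated reductions do not cover $\reachdp{<}_{\mathrm{wd}}$, I would obtain that case from the same construction with a different target, so that a single polynomial-time reduction settles all four problems.

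For the construction I would reuse, unchanged, the simple pMC $\pmc$ associated with a \textsc{3SAT} formula $\psi = c_1 \wedge \dots \wedge c_m$ in the proof of Proposition~\ref{pro:complexity:qualitativereach} (Figure~\ref{fig:aspects:chonevconstruction}), together with the fact recorded in Equation~\eqref{eq:gadget:twoendcomponents} that under every well-defined valuation the only bottom SCCs are $\target$ and $\{\bot\}$, whence $\pr{\instantiated{\pmc}{\val}}{}{\eventually\target} + \pr{\instantiated{\pmc}{\val}}{}{\eventually\{\bot\}} = 1$. The new ingredient is a \emph{quantitative} reading of Chonev's gadget. For a Boolean valuation whose induced assignment satisfies exactly $s$ clauses, the variable gadget never leaks, so $v_k$ is reached with probability one; each subsequent visit of $v_k$ then exits to $\target$ with probability $\nicefrac{1}{m+1}$ and returns to $v_k$ exactly for the $s$ satisfiable clauses (routing each to a true literal), which yields
\[
  \pr{\instantiated{\pmc}{\val}}{}{\eventually\target} = \frac{1}{m+1-s}.
\]
This equals $1$ precisely when the assignment satisfies all clauses ($s=m$) and is at most $\nicefrac{1}{2}$ as soon as $s \le m-1$.

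Hence, reading $\target$ as the target gives that $\psi$ is satisfiable iff some well-defined valuation yields $\pr{\instantiated{\pmc}{\val}}{}{\eventually\target} > \nicefrac{1}{2}$, while reading $\{\bot\}$ as the target (and using Equation~\eqref{eq:gadget:twoendcomponents}) gives that $\psi$ is satisfiable iff some well-defined valuation yields $\pr{\instantiated{\pmc}{\val}}{}{\eventually\{\bot\}} < \nicefrac{1}{2}$. The completeness direction of both equivalences is immediate from the Boolean computation above: a satisfying assignment gives reachability probability $1$ to $\target$, equivalently $0$ to $\{\bot\}$.

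The hard part is soundness: I must rule out that some \emph{non-Boolean} well-defined valuation pushes $\pr{\instantiated{\pmc}{\val}}{}{\eventually\target}$ above $\nicefrac{1}{2}$ for an unsatisfiable $\psi$. I would prove the sharper claim $\sup_{\val \in \wdval{\pmc}} \pr{\instantiated{\pmc}{\val}}{}{\eventually\target} = \frac{1}{m+1-s^{\ast}}$, where $s^{\ast}$ is the maximal number of simultaneously satisfiable clauses of $\psi$; as $s^{\ast} \le m-1$ for unsatisfiable $\psi$, this supremum is at most $\nicefrac{1}{2}$ and the strict comparison can never hold. To see this, first observe that the routing parameters $y_{j,\cdot}$ enter the reachability probability linearly, so an optimal valuation routes every clause deterministically to a single literal; the reachability probability then takes the form $\frac{\prod_t w_t}{m+1-\sum_j q_j}$ with per-variable survival factors $w_t = \tilde{x}_t^{2} + (1-\tilde{x}_t)^{2} \le 1$ and per-clause return probabilities $q_j \le 1$. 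The delicate step is to show that the quadratic dependence introduced by using each parameter $\tilde{x}_t$ twice cannot be exploited: any fractional $\tilde{x}_t$ only shrinks $\prod_t w_t$, and this loss is never compensated by $\sum_j q_j$, so the optimum is attained at a Boolean valuation and equals $\frac{1}{m+1-s^{\ast}}$. Establishing this boundary-optimality rigorously (e.g.\ by a per-coordinate argument that the value is maximized at $\tilde{x}_t \in \{0,1\}$) is where the real work lies; the transfer to the graph-preserving variants afterwards is routine given the cited propositions.
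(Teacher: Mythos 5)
Your overall architecture (reduce from 3SAT via the construction of Figure~\ref{fig:aspects:chonevconstruction}, prove hardness of the well-defined variants, then transfer to the graph-preserving ones via Propositions~\ref{pro:e_reach_less_red_e_reach_less_gp} and~\ref{pro:complexity:restrictrelations}) is exactly the paper's, and your Boolean-valuation computation $\pr{\instantiated{\pmc}{\val}}{}{\eventually\target} = \frac{1}{m+1-s}$ is correct. However, there is a genuine gap at precisely the point you flag as ``where the real work lies'': the soundness direction requires showing that no \emph{fractional} well-defined valuation of the pMC of an unsatisfiable formula pushes the probability above $\nicefrac{1}{2}$, and you do not supply that argument. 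Your proposed route --- establishing $\sup_{\val} \pr{\instantiated{\pmc}{\val}}{}{\eventually\target} = \frac{1}{m+1-s^{\ast}}$ by per-coordinate boundary optimality --- is substantially harder than you suggest: the return probabilities $q_j$ are not decoupled from the survival factors (a clause routed to a literal of variable $i$ returns with probability $\tilde{x}_i^{\pm}\cdot\prod_{t>i}w_t$, so each $q_j$ itself contains downstream $w_t$'s), and the objective in a single coordinate $\tilde{x}_t$ is a product of a convex numerator with the reciprocal of a concave denominator, which is a product of convex functions and hence not automatically convex; a first-order analysis shows Boolean points are local maxima, but global boundary-optimality needs a real proof that you have not given.

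The paper avoids this difficulty entirely. Its proof of Proposition~\ref{pro:e_reach_gr_np_hard} establishes only the much weaker bound that unsatisfiability of $\psi$ forces $\pr{\instantiated{\pmc}{\val}}{}{\eventually\target} \leq \nicefrac{2}{3}$ for \emph{every} well-defined $\val$: by rounding $\val(\tilde{x})$ at $\nicefrac{1}{2}$ one shows (the ``auxiliary claim'') that some clause $c_{i^{\ast}}$ must have $\pprob(l_{i^{\ast},j},\bot)[\val] \geq \nicefrac{1}{2}$ for all three of its literals, so that clause's return probability to $v_k$ is at most $\nicefrac{1}{2}$, which bounds the return probability of $v_k$ by $\frac{2m-1}{2(m+1)}$ and hence the reachability probability by $\nicefrac{2}{3}$. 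The gap between the achievable value $1$ (satisfiable) and the bound $\nicefrac{2}{3}$ (unsatisfiable) is then renormalised to the threshold $\nicefrac{1}{2}$ by the standard prepending gadget from page~\pageref{page:fixedthreshold}. If you adopt this coarser bound your proof goes through without any claim about where the optimum of the solution function is attained; otherwise you must actually prove the boundary-optimality statement, which your proposal currently only asserts.
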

Using Propositions~\ref{pro:e_reach_gp_red_e_reach}
and~\ref{pro:e_reach_less_red_e_reach_less_gp}, we may  restrict our attention
to well-defined parameter valuation sets.  Recall
Proposition~\ref{pro:complexity:qualitativereach}: Almost-sure reachability is
\NP-hard.  A more refined analysis of the 3SAT-reduction yields:
\begin{proposition} \label{pro:e_reach_gr_np_hard}
	$\reachdp{>}_\mathrm{wd}$ and $\reachdp{<}_\mathrm{wd}$ are \NP-hard.
\end{proposition}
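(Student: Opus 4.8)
The plan is to reduce from 3SAT and to reuse verbatim the pMC $\pmc$ built in the proof of Proposition~\ref{pro:complexity:qualitativereach} (Figure~\ref{fig:aspects:chonevconstruction}), but now comparing the reachability probability against the threshold $\nicefrac{1}{2}$ instead of $1$. Concretely, I would prove that a 3SAT instance $\psi$ with $m$ clauses and variables $\vars=\{x_1,\dots,x_k\}$ is satisfiable if and only if there is $\val\in\wdval{\pmc}$ with $\pr{\instantiated{\pmc}{\val}}{}{\eventually\target}>\nicefrac{1}{2}$; since this is the polynomial-time construction of Proposition~\ref{pro:complexity:qualitativereach}, \NP-hardness of $\reachdp{>}_\mathrm{wd}$ follows. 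The problem $\reachdp{<}_\mathrm{wd}$ is then obtained for free by declaring $\bot$ (instead of $T$) to be the single target: by the two-BSCC identity~\eqref{eq:gadget:twoendcomponents} we have $\pr{}{}{\eventually\{\bot\}}<\nicefrac{1}{2}$ iff $\pr{}{}{\eventually\target}>\nicefrac{1}{2}$.

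First I would write the reachability probability in closed form. Let $q_i\coloneqq\tilde{x}_i^2+(1-\tilde{x}_i)^2$ be the probability of crossing the $i$-th variable gadget without leaking to $\bot$, so that the first-passage probability from $v_0$ to $v_k$ is $\pi_0\coloneqq\prod_{i=1}^k q_i$; note $\target$ is reachable only via $v_k$. Writing $\rho_i$ for the probability of returning to $v_k$ from clause state $c_i$ before being absorbed in $\bot$, a one-step argument at $v_k$ gives the recurrence $R\coloneqq\pr[v_k]{\instantiated{\pmc}{\val}}{}{\eventually\target}=\tfrac{1}{m+1}+\tfrac{R}{m+1}\sum_{i=1}^m\rho_i$, hence
\[
  \pr{\instantiated{\pmc}{\val}}{}{\eventually\target}
   = \pi_0\cdot R = \frac{\pi_0}{\,m+1-\sum_{i=1}^m\rho_i\,}.
\]
For a satisfying Boolean assignment one chooses $\tilde{x}$ accordingly (so $\pi_0=1$) and routes every clause to a true witness literal (so $\rho_i=1$), yielding value $1>\nicefrac{1}{2}$, exactly as in the almost-sure argument.

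The crux, and the step I expect to be the main obstacle, is the converse over \emph{all} well-defined valuations, including fractional ones: if $\psi$ is unsatisfiable I must show $\pr{}{}{\eventually\target}\le\nicefrac{1}{2}$, equivalently $\sum_i\rho_i\le m-1$. I would argue this by a random-rounding/union-bound estimate. Each $\rho_i=\sum_j y_{i,j}\,g(l_{i,j})$ is a convex combination of the literal return probabilities $g(l_{i,j})$, so $\rho_i\le\max_j g(l_{i,j})$. The return probability of a literal over variable $x_p$ equals $\tilde{x}_p$ (for $x_p$) or $1-\tilde{x}_p$ (for $\overline{x_p}$), scaled by the factor $\pi_p\le 1$ of reaching $v_k$ from $v_p$; hence $g(l_{i,j})$ is at most the probability that $l_{i,j}$ is \true\ under the product distribution that sets each $x_p$ to \true\ with probability $\tilde{x}_p$. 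By the union bound, $\rho_i\le\max_j g(l_{i,j})\le\Pr[c_i\text{ satisfied}]$, so $\sum_i\rho_i\le\mathbb{E}[\#\text{satisfied clauses}]$. Unsatisfiability of $\psi$ means every assignment in the support satisfies at most $m-1$ clauses, so this expectation is at most $m-1$; combined with $\pi_0\le1$ this gives $\pr{}{}{\eventually\target}\le\nicefrac{1}{2}$.

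To conclude I would note that the instance is built in polynomial time, establishing \NP-hardness of both $\reachdp{>}_\mathrm{wd}$ and (via the $\bot$-target variant above) $\reachdp{<}_\mathrm{wd}$; the graph-preserving versions claimed in Theorem~\ref{thm:e_reach_gr_gp_np_hard} then follow by Propositions~\ref{pro:e_reach_less_red_e_reach_less_gp} and~\ref{pro:e_reach_gp_red_e_reach}.
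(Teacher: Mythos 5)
Your proposal is correct and uses the same reduction as the paper: you reuse the 3SAT pMC of Proposition~\ref{pro:complexity:qualitativereach} (Figure~\ref{fig:aspects:chonevconstruction}) verbatim, and you handle $\reachdp{<}_\mathrm{wd}$ via the $T$/$\bot$ duality of Equation~\eqref{eq:gadget:twoendcomponents}, exactly as the paper does. The one genuinely different ingredient is the soundness argument for unsatisfiable $\psi$. The paper proves an auxiliary claim by \emph{deterministic} rounding: if every clause had a witness literal with $\pprob(l_{i,j},\bot)[\val]<\nicefrac{1}{2}$, then rounding each $\tilde{x}$ at $\nicefrac{1}{2}$ would satisfy $\psi$; hence some single clause $c_{i^*}$ leaks at least half of its mass to $\bot$ from each of its literal states, which caps the return probability from $c_{i^*}$ at $\nicefrac{1}{2}$, yields the bound $\nicefrac{2}{3}$ on the reachability probability, and the threshold is then moved to $\nicefrac{1}{2}$ by the rescaling gadget of page~\pageref{page:fixedthreshold}. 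You instead bound \emph{all} clause return probabilities simultaneously by an averaging argument, $\sum_i \rho_i \le \mathbb{E}[\#\text{satisfied clauses}] \le m-1$ under the product distribution that sets $x_p$ true with probability $\tilde{x}_p$, which, plugged into the closed form $\pi_0/(m+1-\sum_i\rho_i)$, gives the threshold $\nicefrac{1}{2}$ directly and dispenses with the rescaling step. Both arguments are sound; yours is slightly tighter and more self-contained, while the paper's single-bad-clause formulation is the one it reuses when adapting the construction to the other variants. One cosmetic nit: the step $\max_j \Pr[l_{i,j}\text{ true}] \le \Pr[c_i\text{ satisfied}]$ is monotonicity of measures, not the union bound (which points the other way).
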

\begin{proof}[Proof sketch]
	Reconsider the construction in Figure~\ref{fig:aspects:chonevconstruction}. 
  We first show the following claim to simplify our proof afterwards:

  \item
  \paragraph{Auxiliary claim} If $\psi$ is unsatisfiable, then for all $\val \in
  \wdval{\pmc}$ there exists some clause $c_{i^*}$, such that
  $\pprob(l_{i^*,j}, \bot)[\val] \geq \frac{1}{2}$ for all $j \in\{1,2,3\}$,
  or more formally
  \begin{equation} \label{eq:auxclaim}
  \begin{aligned}
    & \psi \text{ is unsatisfiable}\\
    \implies &\forall\;\val \in \wdval{\pmc},\;\exists i^* \in \{1,\hdots,m\}, \forall j \in\{1,2,3\}:\; \pprob(l_{i^*,j}, \bot)[\val] \geq \frac{1}{2}.
  \end{aligned}
  \end{equation}

  \item
  \paragraph{Proof of the auxiliary claim} 
  Let $\psi$ be satisfiable and assume towards contradiction that for some $\val$ and for every clause $c_i$ there is a `witness' literal $l_{i,j}$ with $\pprob(l_{i,j}, \bot)[\val] < \frac{1}{2}$.
  Together with the definition of $\pprob$, we conclude either
  \begin{enumerate}
    \item $l_{i,j}$ is a variable $x$, and $\val(\tilde{x}) > \frac{1}{2}$ or
    \item $l_{i,j}$ is a negated variable $\overline{x}$, and $\val(\tilde{x}) < \frac{1}{2}$.
  \end{enumerate}

  We now construct a satisfying assignment for $\psi$:
  Consider an assignment $\val_\psi$ for $\psi$, with
  \[ \val_\psi(x) \coloneqq
  \begin{cases}
    \true, & \text{if } \val(\tilde{x}) > \frac{1}{2}, \\
    \false, & \text{if } \val(\tilde{x}) < \frac{1}{2}, \\
    \text{arbitrary}, & \text{if } \val(\tilde{x}) = \frac{1}{2}.
  \end{cases}\]
  In both case 1 and 2 above, $\val_\psi$ satisfies clause $c_i$. 
  Thus $\psi$ is satisfiable, contradiction.
  \medskip

  \newcommand{\prfromto}[2]{\pr[{#1}]{\dtmc}{}{\lozenge^{>0}\; #2}}
  \item
  \paragraph{Proof for correctness of reduction} 
  We only show:
  \begin{equation}
    \label{eq:chonevpart2}
    \psi \text{ is unsatisfiable} \iff \forall\;\val \in \wdval{\pmc}: \pr{\instantiated{\pmc}{\val}}{}{\eventually\target} \leq \frac{2}{3},
  \end{equation}
  which is equivalent to:
  \[
    \psi\text{ is satisfiable} \iff \exists\;\val \in \wdval{\pmc}: \pr{\instantiated{\pmc}{\val}}{}{\eventually\target} > \frac{2}{3}.
  \]
  Again let $\psi$ be unsatisfiable and fix a parameter valuation $\val$ and set $\dtmc \coloneqq \instantiated{\pmc}{\val}$. We show  $\pr{\dtmc}{}{\eventually\target} \leq \frac{2}{3}$.
  Let $i^*$ be like in the auxiliary claim \eqref{eq:auxclaim}. The idea here is that $c_{i^*}$ is the (potentially only) unsatisfied clause. 
  By construction of $\pmc$ and the auxiliary claim,
  \[
  \prfromto{l_{i^*,j}}{v_k} \leq 1 - \pprob(l_{i*,j}, \bot)[\val] \leq \frac{1}{2}
  \]
  for all $j \in \{1,2,3\}$. Hence
  \begin{align*}
    \prfromto{c_{i^{*}}}{v_k} = \sum_{j=1}^3 \val(y_{i^*, j})\cdot \prfromto{l_{i^*,j}}{v_k} \leq \frac{1}{2}\sum_{j=1}^3 \val(y_{i^*, j}) = \frac{1}{2}.
  \end{align*}
  Consequently, for $\prfromto{v_k}{v_k}$ it holds that
  \begin{align*}
    \prfromto{v_k}{v_k} &= \pprob(v_k, c_{i^{*}}) \cdot \prfromto{c_{i^{*}}}{v_k}\\
    & \quad \quad \quad \quad \quad \quad \quad \quad + \sum_{i \neq i^*}\pprob(v_k, c_i)\cdot \prfromto{c_i}{v_k} \\
    & \leq \frac{1}{m+1}\cdot\frac{1}{2} + \frac{m-1}{m+1} = \frac{2m-1}{2(m+1)}.
  \end{align*}
  Plugging this into the equation
  \begin{align*}
    \prfromto{v_k}{T} = \frac{1}{m+1} + \prfromto{v_k}{v_k}\cdot \prfromto{v_k}{\target}
  \end{align*}
  yields 
  $\prfromto{v_k}{T} \leq \frac{2}{3}$.
  All paths from $v_0$ to $T$ go through $v_k$, thus: 
  \begin{align*}
  \pr{\dtmc}{}{\eventually \target} = \prfromto{v_0}{v_k} \cdot \prfromto{v_k}{\target} \leq \prfromto{v_k}{\target} \leq \frac{2}{3}.
  \end{align*}

  The remainder of the proof is analogous to the proof of Proposition~\ref{pro:complexity:qualitativereach}.
  The proof for threshold $\nicefrac{1}{2}$ follows by applying the argument
  sketched on page \pageref{page:fixedthreshold} 	
\end{proof}

\subsubsection{pMDPs with arbitrarily many parameters}
We now move to the lower-right corner of Table~\ref{tab:complexity}, and
consider pMDPs without bound on the number of parameters.  For the results, we
distinguish whether the quantifier over the strategies is existential
or universal.

\paragraph{Existential nondeterminism} We remove the nondeterminism by
reducing to pMCs with additional variables for the nondeterminism. (Recall
such local resolution of the nondeterminism is valid because of
Proposition~\ref{pro:detmemless-suff}.)  This
reduction however requires an arbitrary range for the parameters.  More
formally, we obtain:
\begin{proposition} \label{pro:existwdgeeqexistwdge}
  There are polynomial-time many-one reductions among 
  $\reachdp{\bowtie}_\mathrm{wd}$ and $\exists\reachdp{\bowtie}_\mathrm{wd}$.
\end{proposition}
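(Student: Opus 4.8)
The plan is to prove the two many-one reductions separately, exploiting that a pMC is literally a pMDP in which every state has exactly one enabled action. The reduction from $\reachdp{\bowtie}_\mathrm{wd}$ to $\exists\reachdp{\bowtie}_\mathrm{wd}$ is immediate: given a pMC $\pmc$, view it as a pMDP $\pmdp$ with $|\EnAct(s)| = 1$ for all $s$. Then $\DMSched{\pmdp}$ is a singleton, the existential quantifier over strategies is vacuous, and the identity map is the desired polynomial-time reduction. All the work lies in the converse direction, i.e.\ in removing the nondeterminism.

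For the reduction from $\exists\reachdp{\bowtie}_\mathrm{wd}$ to $\reachdp{\bowtie}_\mathrm{wd}$ I would, given a simple pMDP $\pmdp$, construct a simple pMC $\pmc'$ by replacing each state $s$ with enabled actions $\act_1,\dots,\act_\ell$ by a small \emph{action-selection gadget} built from $\ell-1$ fresh parameters $z_{s,1},\dots,z_{s,\ell-1}$. The gadget is a chain of transient states that commits to action $\act_i$ with probability $z_{s,i}$ and otherwise (with probability $1-z_{s,i}$) falls through to the next gadget state, committing to $\act_\ell$ deterministically at the end of the chain; once an action $\act$ is committed, the gadget reproduces the distribution $\pprob(s,\act,\cdot)$. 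Since the gadget only uses probabilities of the form $z,1-z$ together with the original simple-pMDP probabilities, $\pmc'$ is again simple, has size linear in that of $\pmdp$, and its well-defined valuations factor as $\wdval{\pmc'} = \wdval{\pmdp}\times[0,1]^{Z}$, where $Z$ denotes the set of all fresh parameters.

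Correctness would then follow from the correspondence between the fresh parameters and \emph{randomized memoryless} strategies. For fixed $\val\in\wdval{\pmdp}$, the chain at $s$ induces the action distribution $z_{s,i}\prod_{j<i}(1-z_{s,j})$ for $\act_i$ (and $\prod_{j<\ell}(1-z_{s,j})$ for $\act_\ell$), and as $\zeta$ ranges over $[0,1]^Z$ this map is onto the entire closed probability simplex at each state; hence it ranges exactly over all randomized memoryless strategies $\mu_\zeta$. As the gadget states are transient and non-target, contracting them shows that $\instantiated{\pmc'}{\val,\zeta}$ is probabilistically bisimilar to the chain induced by $\instantiated{\pmdp}{\val}$ under $\mu_\zeta$, so reachability probabilities agree:
\[
  \pr{\instantiated{\pmc'}{\val,\zeta}}{}{\eventually\target}
  \;=\;
  \pr{\instantiated{\pmdp}{\val}}{\mu_\zeta}{\eventually\target}.
\]
I would then invoke Proposition~\ref{pro:detmemless-suff}: for reachability the optimum over all strategies is attained on $\DMSched{\pmdp}$, and since the deterministic memoryless strategies are precisely the vertices of the simplex of randomized memoryless ones, randomization cannot beat this extremal value. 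Thus optimizing over randomized memoryless strategies gives the same maximum for $\bowtie\in\{\geq,>\}$ and the same minimum for $\bowtie\in\{\leq,<\}$. Consequently, the existence of $\sched\in\DMSched{\pmdp}$ with $\pr{\instantiated{\pmdp}{\val}}{\sched}{\eventually\target}\bowtie\nicefrac{1}{2}$ is equivalent to the existence of some $\zeta\in[0,1]^Z$ with $\pr{\instantiated{\pmc'}{\val,\zeta}}{}{\eventually\target}\bowtie\nicefrac{1}{2}$; quantifying $\val$ existentially on both sides yields that $(\pmdp,\target)$ is a yes-instance of $\exists\reachdp{\bowtie}_\mathrm{wd}$ iff $(\pmc',\target)$ is a yes-instance of $\reachdp{\bowtie}_\mathrm{wd}$.

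The main obstacle, and the reason the statement concerns well-defined rather than graph-preserving valuations, is the closedness at the boundary: realizing a deterministic memoryless (vertex) strategy requires the fresh parameters to attain the values $0$ and $1$, so the reduction genuinely needs the closed range $[0,1]^Z$ and would break under the open range $(0,1)^Z$ imposed by graph preservation. I would therefore state the surjection onto the \emph{closed} simplex explicitly and argue carefully that the extremal strategy witnessing a yes-instance is captured by boundary values of $\zeta$.
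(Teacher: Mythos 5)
Your proof is correct and takes essentially the same route as the paper, which likewise removes the nondeterminism by adding fresh parameters encoding a (randomized) memoryless action selection, observes that this genuinely requires the closed range $[0,1]$ for the new parameters (hence well-defined rather than graph-preserving valuations), and appeals to Proposition~\ref{pro:detmemless-suff} to conclude that the optimum is attained at deterministic memoryless strategies, i.e.\ at boundary instantiations of the gadget. The only cosmetic remark is that ``vertices of the simplex'' is not by itself the reason randomization cannot help; the clean argument is the sandwich $\DMSched{\pmdp}\subseteq{}$randomized memoryless${}\subseteq{}$all strategies combined with Proposition~\ref{pro:detmemless-suff}, which you do invoke.
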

\noindent Minor adaptions to Proposition~\ref{pro:detmemless-suff} and Proposition~\ref{pro:e_reach_gp_red_e_reach}
yield:
\begin{proposition}
	\label{pro:eeggpeqerggwd}
There are polynomial-time many-one reductions among the problems
  $\exists\reachdp{>}_\mathrm{gp}$ and $\reachdp{>}_\mathrm{wd}$.
\end{proposition}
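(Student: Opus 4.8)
The plan is to establish the two many-one reductions separately, reusing tools already proved. Throughout, recall that a pMC is literally a pMDP in which every state has a single enabled action; hence for pMCs the quantifier $\exists\,\sched$ is vacuous, and on such inputs $\exists\reachdp{>}_\mathrm{gp}$ coincides with $\reachdp{>}_\mathrm{gp}$. This observation handles one direction almost for free.

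\emph{Direction} $\reachdp{>}_\mathrm{wd} \leq_p \exists\reachdp{>}_\mathrm{gp}$. Given a simple pMC as an instance of $\reachdp{>}_\mathrm{wd}$, I would first apply Proposition~\ref{pro:e_reach_less_red_e_reach_less_gp}, which reduces $\reachdp{>}_\mathrm{wd}$ to $\reachdp{>}_\mathrm{gp}$ on an at most linearly larger pMC. Since that pMC is a pMDP whose unique strategy makes the quantifier $\exists\,\sched$ trivial, the very same instance is an instance of $\exists\reachdp{>}_\mathrm{gp}$ with the identical answer. Composing these two polynomial-time maps gives the claim.

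\emph{Direction} $\exists\reachdp{>}_\mathrm{gp} \leq_p \reachdp{>}_\mathrm{wd}$. Here I would combine two ingredients. First, I lift the gadget of Figure~\ref{fig:e_reach_gp_red_e_reach} (Proposition~\ref{pro:e_reach_gp_red_e_reach}) to pMDPs: prepending this choice-free chain for the original parameters $X$ of $\pmdp$ yields a pMDP in which, for every valuation, the initial state of $\pmdp$ is reached with probability $1$ if the $X$-valuation is graph-preserving and with probability $0$ otherwise. Because the prefix contains no actions, a strategy of the combined model is determined by its behaviour on $\pmdp$, so the maximal (over strategies) reachability probability factorises as the gadget factor times $\max_{\sched}\pr{\instantiated{\pmdp}{\val}}{\sched}{\eventually\target}$; this is the only adaptation needed. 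Consequently $\exists\reachdp{>}_\mathrm{gp}$ for $\pmdp$ becomes $\exists\reachdp{>}_\mathrm{wd}$ for the gadgeted pMDP. Second, I remove the nondeterminism exactly as in the construction underlying Proposition~\ref{pro:existwdgeeqexistwdge}: at each state $s$ with enabled actions $\act\in\EnAct(s)$ I introduce fresh selection parameters and local states routing the mass of $s$ into per-action copies, obtaining a pMC over $X$ together with the selection parameters. These selection parameters encode a \emph{randomized} memoryless strategy; by Proposition~\ref{pro:detmemless-suff} the supremum of the reachability probability over all strategies equals the maximum over $\DMSched{\pmdp}$, and since the randomized-memoryless optimum is sandwiched between these equal quantities, a value strictly above $\nicefrac{1}{2}$ is achievable by some deterministic memoryless strategy iff it is achievable by some well-defined assignment of the selection parameters. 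Over all well-defined valuations of the final pMC, the reachability probability thus exceeds $\nicefrac{1}{2}$ iff the $X$-parameters are graph-preserving (forced by the gadget) and some strategy of $\pmdp$ achieves more than $\nicefrac{1}{2}$, which is precisely $\exists\reachdp{>}_\mathrm{gp}$. All steps are plainly polynomial time.

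The main obstacle, and the point requiring care, is the interplay between graph-preservation and the strategy encoding: the gadget must be attached to the original parameters $X$ only and never to the selection parameters, since forcing the selection parameters into the open interval would eliminate exactly the deterministic action choices (selection probabilities $0$ or $1$) that must remain available. Accordingly, the soundness argument rests on two facts to be verified explicitly—that the gadget leaves reachability for graph-preserving instantiations unchanged while annihilating it for non-graph-preserving ones (as in Proposition~\ref{pro:e_reach_gp_red_e_reach}), and that randomizing the action choice cannot push the optimal reachability probability beyond the deterministic memoryless optimum (Proposition~\ref{pro:detmemless-suff})—so that the strict threshold $\nicefrac{1}{2}$ is preserved in both directions.
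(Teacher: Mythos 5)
Your proof is correct and matches the paper's intended argument: the paper justifies this proposition in a single line as following from ``minor adaptions'' of Proposition~\ref{pro:detmemless-suff} and Proposition~\ref{pro:e_reach_gp_red_e_reach}, and your two reductions --- the gadget of Figure~\ref{fig:e_reach_gp_red_e_reach} lifted to pMDPs composed with the de-nondeterminisation of Proposition~\ref{pro:existwdgeeqexistwdge} for one direction, and Proposition~\ref{pro:e_reach_less_red_e_reach_less_gp} composed with the trivial pMC-to-pMDP embedding for the other --- are exactly those adaptions spelled out. Your remark that the graph-preservation gadget must be attached only to the original parameters $X$ and not to the fresh action-selection parameters correctly identifies the one point of care in the composition.
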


\paragraph{Universal nondeterminism}
We now consider universal nondeterminism. Contrary to pMCs, we obtain \ETR-completeness for pMDPs
and any comparison relation:
\begin{theorem} \label{thm:etr:mdpsarehard}
	$\forall\reachdp{\bowtie}_{*}$ are all \ETR-complete 
  even for acyclic pMDPs.
\end{theorem}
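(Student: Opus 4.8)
The plan is to prove the two directions separately: the \ETR{} upper bound is immediate from the encoding developed above, while the matching lower bound is obtained by a reduction that exploits universal nondeterminism to realise a \emph{conjunction} of polynomial constraints. For membership, I would instantiate the universal-nondeterminism encoding (the theorem built on $\etrd{\pmdp}$) with $\region$ equal to $\wdval{\pmdp}$ and to $\gpval{\pmdp}$, respectively. Both sets are semialgebraic --- $\wdval{\pmdp}$ is described by Constraints~\ref{constraints:aspects:encoding:polynomialwelldefined} and $\gpval{\pmdp}$ by its strict-inequality strengthening --- so in each case $\psi = \etrd{\pmdp} \land p_\init \bowtie \nicefrac{1}{2} \land \psi_R$ is a polynomial-size formula in the existential theory of the reals whose satisfiability is equivalent to $\forall\reachdp{\bowtie}_{*}$. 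Hence all four problems, for both valuation classes, lie in \ETR.

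For the non-strict relations $\leq$ and $\geq$ hardness is inherited for free: a pMC is a pMDP whose strategy set $\DMSched{\pmdp}$ is a singleton, so the universal quantifier is vacuous and $\forall\reachdp{\leq}_{*}$ (resp.\ $\forall\reachdp{\geq}_{*}$) restricted to pMC inputs coincides with $\reachdp{\leq}_{*}$ (resp.\ $\reachdp{\geq}_{*}$). \ETR-hardness, together with acyclicity of the witnesses, then follows directly from Theorem~\ref{thm:etr:pmcsarehard}.

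The substantial case is the strict relations, where the pMC variant is known only to be \NP-hard. Here I would reduce from the feasibility of a conjunction $\bigwedge_{i=1}^{n} g_i > 0$ of polynomial inequalities over a bounded box, which is \ETR-complete~\cite{DBLP:journals/mst/SchaeferS17} (in both its open- and closed-domain variants, paralleling mb4FEAS-o/mb4FEAS-c). For each $g_i$ I would use Proposition~\ref{pro:aspects:chonevrescaling} to build an acyclic simple pMC $\pmc_i$ over the shared parameter set $\pars$ with solution function $\tilde{g}_i = (g_i + A_i)/B_i$; choosing $A_i \in \QQpos$ large and $B_i = 2A_i$ makes $\tilde{g}_i[\val] > \nicefrac{1}{2} \iff g_i[\val] > 0$. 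I then glue these gadgets under a fresh initial state $s_0$ equipped with actions $\act_1,\dots,\act_n$, where $\act_i$ moves to the initial state of $\pmc_i$ with probability $1$. Since $s_0$ is the only nondeterministic state, $\DMSched{\pmdp}$ is in bijection with $\{1,\dots,n\}$, and the strategy selecting $\act_i$ induces reachability probability $\tilde{g}_i[\val]$, whence
\[
  \big(\forall\,\sched \in \DMSched{\pmdp}:\; \pr{\instantiated{\pmdp}{\val}}{\sched}{\eventually\target} > \tfrac{1}{2}\big)
  \iff
  \bigwedge_{i=1}^{n} g_i[\val] > 0 .
\]
The assembled pMDP is acyclic and simple with $\wdval{\pmdp} = [0,1]^{\pars}$ and $\gpval{\pmdp} = (0,1)^{\pars}$; matching the closed box with the $\mathrm{wd}$ variant and the open box with the $\mathrm{gp}$ variant gives \ETR-hardness of $\forall\reachdp{>}_{*}$. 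The case $<$ follows by running the same construction on $-g_i$ (or, for the gp-variant, via Proposition~\ref{pro:complexity:restrictrelations}).

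I expect the crux to be the strict case, specifically two points: verifying that the bounded conjunction-of-strict-inequalities problem remains \ETR-hard over exactly the domains $[0,1]^{\pars}$ and $(0,1)^{\pars}$, and the scaling bookkeeping that makes the fixed threshold $\nicefrac{1}{2}$ encode $g_i > 0$ simultaneously for all $i$. The conceptual reason strict comparisons become hard for pMDPs (whereas for pMCs they are merely \NP-hard) is precisely that universal nondeterminism aggregates the per-strategy constraints into a conjunction, and strict feasibility of conjunctions is \ETR-complete.
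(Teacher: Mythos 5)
Your proposal is correct and follows essentially the same route as the paper: membership via the $\etrd{\pmdp}$ encoding, non-strict hardness inherited from the pMC case (Theorem~\ref{thm:etr:pmcsarehard}), and strict hardness by reducing a bounded conjunction of strict polynomial inequalities (the paper's bcon4INEQ-c/o, Lemma~\ref{lem:etr:bcon4INEQhard}) to a pMDP obtained by attaching the Proposition~\ref{pro:aspects:chonevrescaling} gadgets for each polynomial to a single fresh nondeterministic initial state, so that the universal strategy quantifier realises the conjunction. The only cosmetic difference is that you phrase the source problem as $\bigwedge_i g_i > 0$ where the paper uses $\bigwedge_i f_i < 0$, and you make the threshold bookkeeping explicit via $B_i = 2A_i$ where the paper hides it in the $\lambda=\nicefrac{1}{2}$, $\mu=0$ parameters of the gadget construction.
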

Non-strict relations are already trivially \ETR-hard via Theorem~\ref{thm:etr:pmcsarehard}.
For the strict relations, we reduce from the following problem.
\begin{definition}
\label{def:etr:bcon4ineq}
  The \emph{bounded-conjunction-of-inequalities} (bcon4INEQ-c) problem asks:
  Given a family of polynomials $f_1,\hdots, f_m$ of degree 4, does there
  exist some
  \( \val: \pars \to  [0,1] \text{ such that }\bigwedge_{i=1}^m f_i[\val] < 0
  \)?
  The open variant
  (bcon4INEQ-o) may be defined analogously.
\end{definition}

By a straightforward reduction from mb4FEAS (adapted from~\cite[Thm
4.1]{DBLP:journals/mst/SchaeferS17}) we obtain that:
\begin{lemma} \label{lem:etr:bcon4INEQhard}
	The bcon4INEQ-o/c problems are \ETR-hard.
\end{lemma}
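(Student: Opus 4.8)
The plan is to give a polynomial-time many-one reduction from \(\mathrm{mb4FEAS}\text{-c}\), which is \ETR-hard by Lemma~\ref{lem:etr:mb4feas}, to bcon4INEQ-c (and, by the same construction run over the open box, \(\mathrm{mb4FEAS}\text{-o}\) to bcon4INEQ-o). Given an \(\mathrm{mb4FEAS}\text{-c}\) instance, i.e.\ a non-negative degree-\(4\) polynomial \(f\), note first that non-negativity turns the constraint \(f[\val]\le 0\) into the \emph{equality} \(f[\val]=0\): the instance is positive iff \(\exists\,\val\colon\pars\to[0,1]\) with \(f[\val]=0\). The entire task is therefore to re-encode this single polynomial equality as a conjunction \(\bigwedge_{i=1}^m f_i[\val]<0\) of \emph{strict} degree-\(4\) inequalities whose feasibility over the box coincides with the existence of a root of \(f\). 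This is precisely the transformation performed in \cite[Thm.~4.1]{DBLP:journals/mst/SchaeferS17}, and I would adapt it to the bounded, degree-\(4\) format required by Definition~\ref{def:etr:bcon4ineq}.

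Concretely, I would introduce the auxiliary variable(s) of that construction and replace the equality by the prescribed strict constraints, then check three routine points. \emph{Size and degree:} since \(f\) has degree \(4\) and enters the encoding essentially linearly, while the auxiliary terms are of low degree, the resulting \(f_i\) stay within total degree \(4\), and the whole system is computed in polynomial time from the binary-encoded coefficients of \(f\). \emph{Domain:} all variables, including the auxiliary ones, are kept in \([0,1]\) (resp.\ \((0,1)\) for the open variant), as required by Definition~\ref{def:etr:bcon4ineq}. \emph{Correctness:} a root \(\val\) of \(f\) in the box must extend, by a choice of auxiliary values, to a solution of the strict system, and conversely every solution of the strict system must project onto a valuation witnessing \(f[\val]=0\); the open variant is handled identically, yielding \(\mathrm{mb4FEAS}\text{-o}\le_p\) bcon4INEQ-o.

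The main obstacle is exactly this last equivalence, because the two sides live in opposite topological regimes: a finite conjunction of strict inequalities carves out an \emph{open}, robust feasible region, whereas ``\(f\) has a root in the box'' is a \emph{closed}, fragile condition that may be witnessed only by isolated and possibly irrational points (as the discussion after Theorem~\ref{thm:etr:pmcsarehard} already stresses). A naive threshold encoding such as \(f<\varepsilon\) with a fixed rational \(\varepsilon>0\) does \emph{not} work: over a compact box it cannot separate the case \(\inf f=0\) \emph{attained} from the case where \(\inf f\) is positive but tiny, which is consistent with the paper's own remark that \ETR-hardness for a \emph{single} strict bound is open. The point of the auxiliary-variable encoding of \cite[Thm.~4.1]{DBLP:journals/mst/SchaeferS17} is that it is an \emph{algebraic}, not a metric/continuity-based, device: it preserves feasibility (emptiness versus non-emptiness) rather than the geometry of the solution set, which is what makes the use of a genuine conjunction—rather than one strict inequality—essential. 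Verifying that this encoding is feasibility-preserving and that it respects the degree-\(4\) and \([0,1]\)/\((0,1)\) restrictions is where the bulk of the work lies; the remaining bookkeeping is polynomial and follows directly from the semantics of the constructed system.
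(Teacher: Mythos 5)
Your proposal follows essentially the same route as the paper, which itself only states that the lemma follows ``by a straightforward reduction from mb4FEAS (adapted from~\cite[Thm 4.1]{DBLP:journals/mst/SchaeferS17})''; you reduce from the same source problem via the same auxiliary-variable device, and you correctly identify the open-versus-closed feasibility issue as the reason a naive $\varepsilon$-threshold fails and the Schaefer--\v{S}tefankovi\v{c} encoding is needed. The paper offers no more detail than you do, so there is nothing to compare beyond noting that your remaining obligations (degree-$4$ and $[0,1]$/$(0,1)$ bookkeeping) are exactly the ones the paper also leaves to the cited construction.
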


\begin{figure}
\centering
\begin{tikzpicture}
	\node[circle, draw, initial, initial text=] (sinit) {};
	\node[below=1.3cm of sinit] (dots) {$\dots$};
	\node[rectangle, left=0.7cm of dots, draw, inner sep = 6pt] (f1) {\footnotesize$\pmc(f_1)_{\lambda=\nicefrac{1}{2},\mu=0}$};
	\node[rectangle, right=0.7cm of dots, draw, inner sep = 6pt] (fm) {\footnotesize$\pmc(f_m)_{\lambda=\nicefrac{1}{2},\mu=0}$};
	\node[circle, fill, inner sep = 2pt, above=0.5cm of f1] (a1) {};
	\node[circle, fill, inner sep = 2pt, above=0.5cm of fm] (am) {};
	
	\draw[->] (a1) edge  node[right] {\scriptsize$1$} (f1);
	\draw[->] (am) edge  node[right] {\scriptsize$1$} (fm);
	\draw[->] (sinit) edge  node[right] {} (a1);
	\draw[->] (sinit) edge  node[right] {} (am);

\end{tikzpicture}
\caption{Construction for the proof of Theorem~\ref{thm:etr:mdpsarehard}}
\label{fig:mdpsareetrhard}
\end{figure}
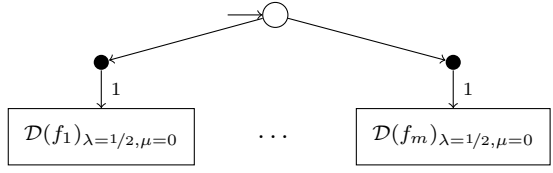

\begin{proof}[Proof of Theorem~\ref{thm:etr:mdpsarehard}]
We show the reduction from the bcon4INEQ problems to $\forall\reachdp{>}_\mathrm{wd}$.
For given $f_1,\hdots,f_m$, we construct pMCs \[
\pmc(f_1)_{\lambda=\nicefrac{1}{2},\mu=0}, \hdots,
\pmc(f_m)_{\lambda=\nicefrac{1}{2},\mu=0} \text{ with target states $T_i$ } \]
by applying Proposition~\ref{pro:aspects:chonevrescaling} to $f_i$ (with $\lambda=\frac{1}{2}$
and $\mu = 0$).
Then, we construct a pMDP as outlined in Figure~\ref{fig:mdpsareetrhard}. We take the disjoint union of the pMCs and adding a fresh initial state, with nondeterministic actions into each pMC.
Formally, let ${\pmc(f_i)_{\lambda=\nicefrac{1}{2},\mu=0} = (S_i,\init^i,\pars, \pprob_i)}$.
We construct a pMDP $\pmdp \coloneqq (S, \init, \Act, \pars, \pprob)$ with
\[ S \coloneqq \bigcup S_i \cup \{ s_0 \}, \init \coloneqq s_0, \Act \coloneqq \{ \act_i \mid 1 \leq i \leq m \}, \] 
and $\pprob$ given by:
\begin{align*}
\pprob(s,\act,s') \coloneqq \begin{cases}
 \pprob_i(s,s') &\text{if } s,s' \in S_i, \act = \act_i\text{ for some }i,\\
 1 & \text{if }s=s_0, s'=\init^i, \act=\act_i\text{ for some }i,\\
 0 & \text{otherwise.} 	
 \end{cases}
\end{align*}
We consider target states $T \coloneqq \bigcup T_i$.
The construction is in polynomial time. 
The pMDP $\pmdp$ has $m$ strategies $\sched_1,\hdots,\sched_m$ with $\sched_i \coloneqq \{ s_0 \mapsto \act_i \}$ (all other states have trivial nondeterminism).

By construction, there exists $\val \in \wdval{\pmdp}$ such that:
 \[
\pr{\mdp[\val]}{\sched_i}{\eventually\target} < \frac{1}{2}
\quad\text{iff}\quad
f_i[\val] < 0.
\]
Then, \[ \exists\;\val \in \wdval{\pmdp}:\; \bigwedge_i \pr{\instantiated{\pmdp}{\val}}{\sched_i}{\eventually\target} < \frac{1}{2}\quad\text{iff}\quad \exists\; \val \in [0,1]^\pars \; \bigwedge_i f_i[\val] < 0,  \]
or equivalently, 
\[ \exists\; \val \in \wdval{\pmdp},\; \forall \sched \in \DMSched{\pmdp}:\; \pr{\instantiated{\pmdp}{\val}}{\sched_i}{\eventually\target}  < \frac{1}{2}\quad\text{iff}\quad \exists\; \val \in [0,1]^\pars \; \bigwedge_i f_i[\val] < 0.  \]
\end{proof}

\subsection{Upper bounds with a fixed number of parameters}
While the \ETR-completeness may be considered bad news, as it renders the problem intractable in general,
there is also good news. In particular, for any fixed number of parameters, the (parametric) complexity is lower.

In our considerations, we focus
on graph-preserving instantiations, as the analysis of pMDP $\pmdp$ and
$\wdval{\pmdp}$ corresponds to analysing constantly many pMDPs on
$\gpval{\pmdp'}$ --- see Lemma~\ref{lem:decomposingwdtogp}.

\begin{theorem}[From~\cite{baiercomplexity,unpublished:infocomp}]
 \label{thm:aspects:ptimepmcs}
 For any fixed number $K$, given a pMC $\pmc$ with at most $K$ parameters,
 determining whether there is a $\dtmc \in \generator{\pmc}$ such that 
 $\pr{\dtmc}{}{\eventually\target} \bowtie \lambda$ is in \Ptime.
\end{theorem}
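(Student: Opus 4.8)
The plan is to turn the question into a single sentence in the existential theory of the reals whose number of variables is \emph{fixed} (independent of $|S|$), and then invoke the fact recalled in Section~\ref{sec:prelims} that deciding membership in the existential theory of the reals is in \Ptime\ once the number of variables is bounded a priori~\cite{DBLP:journals/jsc/Renegar92}. The generic pMC encodings of Theorem~\ref{thm:encoding:pmcwd} are unsuitable here: they introduce a variable $p_s$ per state, so the variable count is $\Theta(|S|)$ rather than $K$. Instead I would use the solution-function encoding of Theorem~\ref{thm:encoding:solfuncencoding}, whose formula mentions only the $K$ parameters together with two fixed polynomials $f,g \in \QQ[\pars]$ satisfying $\prsol{\target}{\pmc} = \nicefrac{f}{g}$. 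Two issues must be resolved: this encoding presupposes graph-preserving valuations with known zero-states, and the polynomials $f$ and $g$ must be produced in polynomial time.

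First I would deal with the graph structure using Lemma~\ref{lem:decomposingwdtogp}: $\wdval{\pmc}$ decomposes into at most $3^{K}$ maximal graph-consistent sets $R_1,\dots,R_N$, and since $K$ is fixed this is a \emph{constant} number of pieces, enumerable by trying the dummy assignments $0,\nicefrac{1}{2},1$ for every parameter. Each $R_i$ is cut out by a linear (hence semialgebraic) formula $\psi_{R_i}$ over the $K$ parameters. On a single $R_i$ the topology is invariant, so by Proposition~\ref{pro:encoding:qualitativegraphproperty} the sets $\pzerostates[\target]{R_i}$ and the target states are fixed and computable in polynomial time by the usual graph algorithms on any instantiation. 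The original instance is positive iff some $R_i$ is, so it suffices to solve the problem on each of the constantly many pieces and take the disjunction.

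The main work, and the main obstacle, is the polynomial-time computation of $f$ and $g$ on a fixed piece $R_i$. After removing target and zero-states, the reachability values solve a linear system $A\vec{p}=\vec{b}$ over the field $\QQ(\pars)$ whose entries are transition probabilities; for a \emph{simple} pMC these have degree at most one. By Cramer's rule the initial entry equals $\nicefrac{f}{g}$ with $g=\det A$ and $f$ a cofactor determinant, so both have total degree at most $|S|$. The crucial observation is that a polynomial of degree at most $|S|$ in a \emph{fixed} number $K$ of variables has at most $\binom{|S|+K}{K}=\mathcal{O}(|S|^{K})$ monomials, which is polynomial in the input size. Hence $f$ and $g$ admit an explicit representation that is computable in polynomial time, e.g.\ by evaluating the numeric determinants at $\mathcal{O}(|S|^{K})$ sample points and recovering the coefficients by multivariate interpolation, with a Hadamard-type bound keeping coefficient bit-sizes polynomial. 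I would deliberately avoid naive symbolic Gaussian elimination, which can blow up intermediate degrees; the interpolation route (or a fraction-free scheme) sidesteps this.

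Finally, for each piece $R_i$ I assemble the formula $\psi \coloneqq \big((g \cgt 0 \land f \bowtie \nicefrac{1}{2}\cdot g)\lor(g<0\land \nicefrac{1}{2}\cdot g\bowtie f)\big)\land\psi_{R_i}$ of Theorem~\ref{thm:encoding:solfuncencoding} (using the given threshold $\lambda$, which the reduction on page~\pageref{page:fixedthreshold} normalises to $\nicefrac{1}{2}$). This formula has exactly $K$ variables and polynomial size, so by~\cite{DBLP:journals/jsc/Renegar92} its satisfiability is decidable in \Ptime; taking the disjunction over the constantly many pieces yields the claimed polynomial-time procedure. The essential difficulty throughout is the bounded-variable symbolic determinant computation of the third step; everything else is bookkeeping made possible by the fixed parameter count.
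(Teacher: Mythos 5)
Your proposal is correct and follows essentially the same route as the cited source~\cite{baiercomplexity} (the paper itself only imports Theorem~\ref{thm:aspects:ptimepmcs} without reproving it): split $\wdval{\pmc}$ into constantly many graph-consistent pieces via Lemma~\ref{lem:decomposingwdtogp}, compute the solution function $\nicefrac{f}{g}$ in polynomial time by exploiting that a degree-$|S|$ polynomial in a fixed number $K$ of variables has only $\mathcal{O}(|S|^{K})$ monomials, and decide the resulting $K$-variable existential sentence with Renegar's fixed-dimension algorithm. The only cosmetic difference is your evaluation--interpolation computation of the determinants where the original uses Bareiss-style fraction-free Gaussian elimination; both keep degrees and coefficient bit-sizes polynomially bounded for the same reason.
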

\noindent That is, in the fixed parameter case, $\reachdp{\bowtie}_{*}$ is in \Ptime.
\paragraph{pMDPs}
With the positive result for pMCs in place, we turn our attention to pMDPs. 
However, we can no longer simply eliminate all state variables in the ETR encoding.
Observe that a reduction from pMDPs with existential nondeterminism to pMCs does not work: it requires the introduction of additional parameters (depending on the number of states).
Indeed, the precise complexity for the problem remains open.
Below, we establish \NP-membership for all variants.

For
pMDPs with existential nondeterminism, \NP-membership is straightforward.
\begin{proposition}
In the fixed parameter case, $\exists\reachdp{\bowtie}_{*}$ is in \NP.
\label{pro:fp_ee_reach_in_np}
\end{proposition}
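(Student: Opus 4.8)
The plan is to exploit that, under existential nondeterminism, both the parameter valuation and the strategy are existentially quantified, so the two choices may be resolved one after the other. By Proposition~\ref{pro:detmemless-suff} it suffices to range over $\DMSched{\pmdp}$, and a deterministic memoryless strategy is just a map assigning one enabled action to each state; hence it is an object of size $\mathcal{O}(|S|\log|\Act|)$. The first step of the \NP-procedure is therefore to \emph{guess} such a strategy $\sched \in \DMSched{\pmdp}$ as the certificate.

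Once $\sched$ is fixed, the pMDP collapses to the induced pMC $\induced{\pmdp}{\sched}$ (in its finite memoryless form), which is computable in polynomial time and, crucially, is defined over the \emph{same} parameter set $\pars$. Thus if $\pmdp$ has at most $K$ parameters then so does $\induced{\pmdp}{\sched}$, and the fixed-parameter bound is inherited. The verification step is then to decide whether some valuation of the required type witnesses the threshold comparison, i.e.\ whether there is a $\dtmc \in \generator{\induced{\pmdp}{\sched}}$ (respectively a graph-preserving instantiation) with $\pr{\dtmc}{}{\eventually\target} \bowtie \nicefrac{1}{2}$. Since the number of parameters is fixed, this is exactly the pMC problem $\reachdp{\bowtie}_{*}$, which Theorem~\ref{thm:aspects:ptimepmcs} places in \Ptime. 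The procedure accepts iff this polynomial-time check succeeds, so overall it is a nondeterministic guess followed by a deterministic polynomial-time verification, giving membership in \NP.

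For correctness I would observe that instantiation commutes with fixing a deterministic memoryless strategy: for every $\val \in \wdval{\pmdp}$ we have $\induced{(\instantiated{\pmdp}{\val})}{\sched} = \instantiated{(\induced{\pmdp}{\sched})}{\val}$, and consequently $\pr{\instantiated{\pmdp}{\val}}{\sched}{\eventually\target} = \pr{\instantiated{(\induced{\pmdp}{\sched})}{\val}}{}{\eventually\target}$. Hence a witnessing pair $(\val,\sched)$ for $\exists\reachdp{\bowtie}_{*}$ corresponds precisely to the guessed strategy $\sched$ together with a valuation accepted by the polynomial-time pMC routine, and conversely any accepting computation yields such a pair. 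The same argument applies verbatim in both the well-defined and the graph-preserving variant, since Theorem~\ref{thm:aspects:ptimepmcs} covers $\reachdp{\bowtie}_{*}$ for both.

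The only genuinely delicate points are that restricting to $\DMSched{\pmdp}$ discards no witness and that the induced pMC retains the fixed parameter bound; the former is Proposition~\ref{pro:detmemless-suff} and the latter is immediate from the definition of $\induced{\pmdp}{\sched}$. Everything else is bookkeeping, which is why existential nondeterminism is the easy case and the real work is deferred to the universal case treated in Theorem~\ref{thm:fp_ea_reach_in_np}.
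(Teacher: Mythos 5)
Your proof is correct and follows exactly the paper's argument: guess a deterministic memoryless strategy (polynomially many bits), build the induced pMC, which inherits the fixed parameter bound, and verify it in polynomial time via Theorem~\ref{thm:aspects:ptimepmcs}. The additional details you supply (the commutation of instantiation with fixing a strategy, and the appeal to Proposition~\ref{pro:detmemless-suff}) are implicit in the paper's terser version but do not change the approach.
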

\begin{proof}
Guess a memoryless strategy. The strategy can be stored using polynomially
many bits\footnote{contrary to guessing parameter values, as they are real
numbers.}.  Construct the induced pMC, and verify it in \Ptime.
\end{proof}

For pMDPs with universal nondeterminism, \NP-membership is more involved. 
\begin{theorem}
In the fixed parameter case, $\forall\reachdp{\bowtie}_{*}$ is in \NP.
\label{thm:fp_ea_reach_in_np}
\end{theorem}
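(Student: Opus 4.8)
The plan is to reduce to the graph-preserving case, then guess a single worst-case deterministic memoryless strategy and certify its optimality by a constraint system over only the $K$ parameter variables, which can then be decided in polynomial time. First I would invoke Lemma~\ref{lem:decomposingwdtogp}: for a fixed number $K$ of parameters, $\wdval{\pmdp}$ splits into at most $3^K$ (a constant) maximal graph-consistent sets, each reducible to a graph-preserving instance of a related pMDP. Hence it suffices to decide $\forall\reachdp{\bowtie}_{\mathrm{gp}}$, and the overall procedure runs the guess-and-check below on each of the constantly many pieces, accepting if any accepts. Throughout, write $\opt=\min$ when $\bowtie\,\in\{\geq,>\}$ and $\opt=\max$ when $\bowtie\,\in\{\leq,<\}$, so that the problem asks whether there is $\val\in\gpval{\pmdp}$ with $\opt_{\sched\in\DMSched{\pmdp}}\pr{\instantiated{\pmdp}{\val}}{\sched}{\eventually\target}\bowtie\tfrac12$.

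The \NP{} witness is a deterministic memoryless strategy $\sched^\ast\in\DMSched{\pmdp}$, stored in polynomially many bits. Correctness of guessing it rests on Proposition~\ref{pro:detmemless-suff}: for the sought valuation $\val$ the optimum is attained by some $\sched^\ast\in\DMSched{\pmdp}$, so guessing it loses nothing. Given $\sched^\ast$, I form the induced pMC $\induced{\pmdp}{\sched^\ast}$; since the topology is fixed on $\gpval{\pmdp}$, its zero- and one-states are computable in polynomial time from the graph alone. I then compute the solution functions $p^\ast_s = f_s/g$ of this pMC symbolically. This is exactly the ingredient behind Theorem~\ref{thm:aspects:ptimepmcs}: the values solve a linear system $A(\val)\vec p=\vec b$ with entries polynomial in $\val$, so by Cramer's rule each $p^\ast_s$ is a ratio of determinants sharing the common denominator $g=\det A$; for fixed $K$ the polynomials $f_s,g$ have degree at most $|S|$ but only $O(|S|^K)$ monomials, so they, and hence the induced pMC's value functions, are computable in polynomial time.

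The verification step assembles a single existentially quantified formula $\Phi$ over the $K$ parameter variables $\val$ that is true iff $\sched^\ast$ is worst-case optimal at $\val$ and its value meets the threshold. Concretely $\Phi$ conjoins (i) the graph-preserving region constraints $\val\in\gpval{\pmdp}$; (ii) Bellman optimality of $\sched^\ast$, i.e.\ for every state $s$ and every $\act\in\EnAct(s)$ the inequality $p^\ast_s \preceq \sum_{s'}\pprob(s,\act,s')\,p^\ast_{s'}$, with $\preceq$ being $\leq$ for $\opt=\min$ and $\geq$ for $\opt=\max$; and (iii) the threshold $p^\ast_\init\bowtie\tfrac12$. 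Each such (in)equality is cleared of the denominator $g$ exactly as in Theorem~\ref{thm:encoding:solfuncencoding}, via a disjunction on the sign of $g$, keeping $\Phi$ a quantifier-free predicate in the constantly many variables $\val$. Its satisfiability is then decidable in polynomial time by the polynomial-time decidability of the existential theory of the reals with an a-priori fixed number of variables.

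Correctness follows because, for a graph-preserving $\val$ admitting a Bellman-optimal $\sched^\ast$ as encoded in $\Phi$, we have $\opt_\sched\pr{\instantiated{\pmdp}{\val}}{\sched}{\eventually\target}=p^\ast_\init\bowtie\tfrac12$, which is equivalent to $\forall\sched:\pr{\instantiated{\pmdp}{\val}}{\sched}{\eventually\target}\bowtie\tfrac12$; conversely any witness valuation supplies such a strategy via Proposition~\ref{pro:detmemless-suff}. The main obstacle lies entirely in the verification rather than in the guess: one must stay within the fixed-dimensional regime while eliminating the $O(|S|)$ state variables $p_s$. This is precisely why I guess $\sched^\ast$ first, turning the Bellman \emph{inequations} of $\etrd{\pmdp}$ into a solvable linear \emph{system} whose solution functions can be Cramer-eliminated, instead of projecting the $p_s$ out of the universal encoding directly, where Fourier--Motzkin elimination would blow up exponentially. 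Establishing the polynomial bound on the bit-size and monomial count of $f_s,g$ for fixed $K$, reusing the symbolic linear-algebra machinery behind Theorem~\ref{thm:aspects:ptimepmcs}, is the technical crux.
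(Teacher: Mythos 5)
Your proposal is correct and follows essentially the same route as the paper: reduce to constantly many graph-consistent pieces via Lemma~\ref{lem:decomposingwdtogp}, guess a somewhere-optimal deterministic memoryless strategy, compute the induced pMC's solution functions symbolically, and certify optimality plus the threshold via Bellman-style (in)equalities over only the $K$ parameter variables (the paper's Constraints~\ref{const:aspects:somewhereminimalratfunc}), decided in polynomial time by fixed-dimension ETR. The only differences are presentational: you clear denominators by a sign-case disjunction and spell out the Cramer's-rule size bounds, whereas the paper multiplies through by the product of denominators and delegates the size argument to the machinery behind Theorem~\ref{thm:aspects:ptimepmcs}.
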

%The following lemma is not strictly necessary, but hints at the solution.
%\begin{lemma}
%	In the fixed parameter case, checking whether a given strategy is somewhere optimal on $\wdval{\pmdp}$ is in \Ptime.
%\end{lemma}
%\begin{proof}
%The ETR formula in Constraints~\ref{const:aspects:somewhereminimalratfunc} only has $|\pars|$ many variables (and is polynomially larger than the pMDP).	
%\end{proof}
The essential trick for \NP-membership  for universal nondeterminism is
guessing an optimal strategy and verifying the induced pMC together with
checking that the strategy is indeed optimal. For the verification step, we
will make use of the following ETR encoding based on the Bellman optimality
equations for minimising strategies in parameter-free MDPs. For conciseness,
we give it here only for graph-preserving valuations.
\begin{constraints} \label{const:aspects:somewhereminimalratfunc}
Let $\pmdp$ be a pMDP and consider a set of valuations $\region \subseteq \gpval{\pmdp}$. 
Let $\sched \in \DMSched{\pmdp}$ and let 
$\nicefrac{h_s}{g_s} \coloneqq \prsol[s]{\target}{\induced{\pmdp}{\sched}}$ for any $s \in S$.
The constraints over variables for all parameters are:
\[
	h_s[\val] \cdot \prod_{s'' \neq s}g_{s''}[\val]\; \leq \;\sum_{s' \in S}\pprob(s, \act, s') \cdot h_{s'}[\val] \cdot \prod_{s'' \neq s'}g_{s''}[\val]
\]
for all $s \in S, \act \in \Act$.
\end{constraints}

\begin{proof}[Proof of Theorem~\ref{thm:fp_ea_reach_in_np}]
	We only give the proof for the $\geq$-relation, the other cases are analogous. Observe that
	\begin{align*}
		& \exists\; \val \in \wdval{\pmdp}, \forall \sched \in \DMSched{\pmdp}: \pr{\instantiated{\pmdp}{\val}}{\sched}{\eventually \target} \geq \nicefrac{1}{2} \\
		\iff & \exists\; \val \in \wdval{\pmdp}: \min_{\sched \in \DMSched{\pmdp}} \pr{\instantiated{\pmdp}{\val}}{\sched}{\eventually \target} \geq \nicefrac{1}{2},
	\end{align*}
  which means that it is sufficient and necessary for the answer to the
  problem to be positive that there be a \emph{somewhere optimal
  strategy}
  which, for
  the valuation for which it is minimal, induces a
  reachability probability of at least $\nicefrac{1}{2}$.  Hence, we may guess a
  somewhere minimal strategy and check its minimality using
  Constraints~\ref{const:aspects:somewhereminimalratfunc} with a conjunction
  that the initial state satisfies the threshold\footnote{Technically, one has to find the zero states and
  	make them sinks. Recall that zero states can be computed using graph-based
  	algorithms for pMDPs and MDPs
  	alike~\cite{BK08}.}. This conjunction only has
  parameters $\pars$, and can thus be checked in \Ptime.
\end{proof}

\section{Conclusions}
We have given a concise overview of known and new results regarding the
complexity of parameter synthesis.  In particular, the new results clarify
that the general case of parameter synthesis is \ETR-complete, as, e.g., asking
whether  (Boolean combination of) polynomials have a common root.  These
results motivate the usage of SMT solvers for ETR to practically solve
parameter synthesis problems. In practice, however, such approaches still lack behind
abstraction-refinement based approaches.
 
Some complexity bounds provided in this paper are not tight.  The most
interesting problem seems to be a lower bound for parameter synthesis in
pMDPs with a single parameter and quantitative reachability.  Another
question is the precise complexity class of parameter synthesis in pMCs with
arbitrarily many parameters and strict bounds on the reachability
probability. 
 
Finally, there seems to be a large zoo of practically relevant subclasses of
pMDP synthesis problems whose complexity may still be explored.

\section*{Acknowledgements}
We thank Els Hoekstra for useful feedback on a previous version of this
article.

\bibliography{literature}

% Don't forget to remove toc for final version
%\clearpage
%\tableofcontents

\end{document}